\documentclass[letterpaper,11pt]{article}
\usepackage{hyperref}

\usepackage{mathtools,amsfonts,amsmath,amssymb,latexsym,amsthm,hyperref,cleveref, url}
\usepackage{enumitem}
\usepackage{makecell}
\usepackage{pdflscape}

\newtheorem{ptheorem}{Theorem}[section]
\newtheorem{plemma}[ptheorem]{Lemma}
\newtheorem{pclaim}[ptheorem]{Claim}

\AddToHook{env/plemma/begin}{\crefalias{ptheorem}{plemma}}
\AddToHook{env/pclaim/begin}{\crefalias{ptheorem}{pclaim}}
\AddToHook{env/pcorollary/begin}{\crefalias{ptheorem}{pcorollary}}

\crefname{ptheorem}{Theorem}{Theorems}
\crefname{plemma}{Lemma}{Lemmas}
\crefname{pclaim}{Claim}{Claims}
\crefname{section}{Section}{Sections}
\crefname{table}{Table}{Tables}
\crefname{pcorollary}{Corollary}{Corollaries}

\usepackage{fullpage}

\DeclareMathOperator{\polylog}{polylog}

\usepackage{algcompatible}
\usepackage{algorithm}
\usepackage{algpseudocode}
\usepackage{tabto}

\usepackage{tikz} 
\usetikzlibrary{automata,positioning}

\usepackage{graphicx}

\newcommand\adhconn{\textsf{adhconn}}
\newcommand\bag{\textsf{bag}}
\newcommand\bgraph{\textsf{bgraph}}
\newcommand\pset{\textsf{pset}}
\newcommand\lca{\textsf{lca}}
\newcommand\dir{\textsf{dir}}
\newcommand\torso{\textsf{torso}}
\newcommand\profile{\textsf{profile}}
\newcommand\adh{\textsf{adh}}
\newcommand\mrg{\textsf{mrg}}
\newcommand\comp{\textsf{comp}}
\newcommand\cone{\textsf{cone}}
\newcommand\parent{\textsf{parent}}
\newcommand\children{\textsf{children}}
\newcommand\col{\text{col}}
\newcommand\invcol{\text{invcol}}
\newcommand\sfP{\mathsf{P}}
\newcommand\sfS{\mathsf{S}}
\newcommand\sfU{\mathsf{U}}
\newcommand\sfQ{\mathsf{Q}}
\newcommand\backedge{\textsf{backedge}}
\newcommand\backvertex{\textsf{backvertex}}
\newcommand\parr{\textsf{par}}
\newcommand\depth{\textsf{depth}}

\newcommand\dfs{\textsf{dfs}}

\def\final{0}  % set this to 1 to get a comment-free version

\ifnum\final=0  %namely if we allow comments in the output
\newcommand{\todo}[1]{{\color{red}[{\small TODO: \bf #1}]\marginpar{\color{red}*}}}
\newcommand{\thatchaphol}[1]{{\color{purple}[{\small Thatchaphol: \bf #1}]\marginpar{\color{purple}*}}}
\newcommand{\benyu}[1]{{\color{blue}[{\small Benyu: \bf #1}]\marginpar{\color{blue}*}}}
\else % in this case [final=1] we don't want any comments to show
\newcommand{\thatchaphol}[1]{}
\newcommand{\benyu}[1]{}
\newcommand{\todo}[1]{}
\fi

\global\long\def\poly{\mathrm{poly}}

\title{Connectivity Oracle Under Vertex Failures \\by Shortcutting Unbreakable Decomposition}
\author{
Xizhe Li\thanks{University of Michigan, \texttt{xizheli@umich.edu}.} \and 
Yaowei Long\thanks{University of Michigan, \texttt{yaoweil@umich.edu}.} \and 
David Pidugu\thanks{University of Michigan, \texttt{davidbp@umich.edu}.} \and
Thatchaphol Saranurak\thanks{University of Michigan, \texttt{thsa@umich.edu}. Supported by NSF Grant CCF-2238138 and a Sloan Fellowship.} \and
Benyu Wang\thanks{University of Michigan, \texttt{benyuw@umich.edu}.}
}
\date{}

\begin{document}
\maketitle

\begin{abstract}
We give an improved \emph{connectivity oracle under vertex failures}. After a set of $k$ vertices fails, our oracle performs an $O(k^{6})$-time update \emph{independent of the graph size $n$}, and then answers pairwise connectivity queries in optimal $O(k)$ time. For constant $k$, it uses near-linear space and can be built in near-linear preprocessing time.

In contrast, all prior oracles with $n$-independent update time~\cite{pilipczuk2021algorithms, van2019sensitive} either require $\Omega(n^{2})$ space or incur $2^{2^{O(k)}}$ update and query time. Moreover, their preprocessing time is polynomially large in $n$, far from near-linear.

Our oracle builds on the unbreakable decomposition framework of~\cite{pilipczuk2021algorithms}, but introduces three new ingredients: (i) \emph{shortcutting} over the tree decomposition to reduce space from quadratic to near-linear, (ii) \emph{bootstrapping} that leverages $n$-dependent oracles internally to obtain near-linear preprocessing, and (iii) a new \emph{patch set} mechanism that yields conditionally optimal $O(k)$ query time.
\end{abstract}

\pagenumbering{gobble}
\clearpage
\tableofcontents
\clearpage

\pagenumbering{arabic}

\newpage

\section{Introduction}\label{sec:intro}
We study \emph{connectivity oracles under vertex failures}. Given an undirected graph $G$, the goal is to preprocess $G$ so that after a \emph{batch} of at most $k$ vertices fails, we can quickly answer connectivity queries in the remaining graph $G\setminus S$. This is a fundamental graph data structure in the \emph{dynamic subgraph model}~\cite{frigioni1997dynamically}, and it is also useful as a subroutine for other basic connectivity tasks, such as vertex-cut oracles~\cite{jiang2026new,DBLP:conf/icalp/Kosinas25} and listing shredders~\cite{hua2024finding}.

A particularly natural regime---both in applications and as an algorithmic benchmark---is when the number of failures $k$ is a small constant in practice while the underlying graph is massive. In this setting, even a $\polylog(n)$ overhead per batch can dominate; this motivates designing oracles whose \emph{update} and \emph{query} costs depend only on $k$ (i.e., are independent of $n$).

\paragraph{Problem.}
The oracle supports three phases:
\begin{itemize}
\item \textbf{Preprocess:} Given $G=(V,E)$ with $|V|=n$, $|E|=m$, and a parameter $k$, build a data structure.
The \emph{preprocessing time} is the time to build the structure; the \emph{space} is its memory usage.
\item \textbf{Update:} Given a failure set $S\subseteq V$ with $|S|\le k$, update the data structure in \emph{update time}.
\item \textbf{Query:} Given $u,v\in V$, decide whether $u$ and $v$ are connected in $G\setminus S$ in \emph{query time}.
\end{itemize}
Typically, one update (for a batch $S$) is followed by many connectivity queries under the same failures.

\paragraph{Well-understood \emph{modulo $n^{o(1)}$ factors}.}
There is a long line of work optimizing the tradeoffs among space, preprocessing time, update time, and query time.
SPQR trees~\cite{hopcroft1973dividing,gutwenger2000linear} yield an optimal oracle that handles $k\le 2$ failures using $O(n)$ space, $O(m)$ preprocessing, and $O(1)$ update and query time. Very recently, Kosinas~\cite{DBLP:conf/icalp/Kosinas25} extended this optimal oracle to $k=3$.

For general $k$, the first structures are due to Duan and Pettie~\cite{duan2010connectivity,DuanP20}.
More recently, Long and Saranurak~\cite{long2024} gave a deterministic oracle whose parameters are simultaneously optimal up to $n^{o(1)}$ factors under fine-grained complexity hypotheses:
$O(\min\{m,nk\}\log^{*}n)$ space,
$m^{1+o(1)} + \tilde{O}(mk)$ preprocessing time\footnote{Throughout the paper, we use $\tilde{O}(\cdot)$ to hide polylog factors.},
$O(k^{2} n^{o(1)})$ update time,
and $O(k)$ query time.
Long and Wang~\cite{long2024better} later improved the $n^{o(1)}$ factor in the update time to polylogarithmic.
Although this landscape is essentially settled when mild $n$-dependence is allowed, these results share a common drawback: their update time still contains an $n$-dependent term (subpolynomial or polylogarithmic).

\paragraph{The frontier: eliminating $n$-dependence.}
Existing results essentially fall into two camps.
On the one hand, we have near-linear-space oracles with excellent $k$-dependence, but their update time still depends on $n$.
On the other hand, only two known approaches achieve update and query times depending solely on $k$:
\begin{itemize}
\item The randomized oracle of van den Brand and Saranurak~\cite{van2019sensitive} uses algebraic techniques to obtain $O(k^\omega)$ update time and $O(k^2)$ query time (where $\omega$ is the matrix multiplication exponent), but requires $O(n^2)$ space and $O(n^\omega)$ preprocessing time.
\item The deterministic approach of Pilipczuk et al.~\cite{pilipczuk2021algorithms} is based on unbreakable decomposition.
It achieves $n$-independent update time, but their oracle with linear space comes with doubly exponential dependence on $k$ (and very large preprocessing). They also obtained an alternative oracle with  $\text{poly}(k)$ update and query time, but it is strictly worse than the oracle by \cite{van2019sensitive}  except being deterministic.
\end{itemize}
Thus, prior $n$-independent oracles either use $\Omega(n^2)$ space or incur doubly exponential  dependency on $k$. Moreover, the preprocessing time is far from linear.
See \Cref{tbl:1}.

\paragraph{Our results.}
We resolve both drawbacks: we design a deterministic vertex-failure connectivity oracle with \emph{near-linear space} in $n$ (with only a $\poly(k)$ factor) and \emph{purely $\poly(k)$} update and query time, independent of $n$. Moreover, our preprocessing time is near-linear when $k=O(1)$.

\begin{ptheorem}[Main theorem: near-linear space, $n$-independent update/query]\label{thm:linear}
There exists a deterministic vertex-failure connectivity oracle that uses $O(k^2 n\alpha_c(n))$ space,
$k^{O(k^2)}n + O(m + k^3 n \log^2 n + k^6 n \log n)$ preprocessing time,
$O(k^4)$ update time, and $O(k^3)$ query time.
\end{ptheorem}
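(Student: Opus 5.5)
The plan follows the three ingredients named in the abstract, built on the unbreakable decomposition of \cite{pilipczuk2021algorithms}. First, compute a tree decomposition $(T,\bag)$ of $G$ in which every adhesion has size $k^{O(1)}$ and every bag is $(k^{O(1)},k)$-unbreakable in $G$. For $|S|\le k$, unbreakability means $G\setminus S$ restricted to any bag has one giant component plus $k^{O(1)}$ small leftover vertices. Every query is then routed through the path in $T$ between bags containing $u$ and $v$.

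\textbf{Step 1: preprocessing.} Computing this decomposition directly is too slow, so we bootstrap. The construction only needs connectivity queries under $O(k)$ vertex failures, and these are answered by the $n$-dependent oracle of \cite{long2024,long2024better}. This gives the $O(k^3n\log^2 n+k^6n\log n)$ terms. The $k^{O(k^2)}n$ term pays for the per-bag local computations, whose cost depends exponentially on the unbreakability parameters.

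\textbf{Step 2: shortcutting.} The oracle of \cite{pilipczuk2021algorithms} stores, for each node and each ancestor, how the adhesion vertices connect through the intervening subtree, which costs quadratic space. Instead, we build a shortcut structure on $T$ in the style of Alon--Schieber / Chazelle tree products. It stores $O(\alpha_c(n))$ shortcut edges per node, so that any ancestor path decomposes into $O(c)$ shortcut pieces. Each piece stores a ``profile'': a connectivity summary between its endpoint adhesions, one for every failure pattern restricted to the small part. Storing profiles only relative to the giant components, of size $O(k^2)$ per shortcut, gives the $O(k^2n\alpha_c(n))$ space bound.

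\textbf{Step 3: update.} Given $S$, mark the $O(k)$ bags containing failed vertices together with their $O(k)$ lowest common ancestors. This produces a compressed tree with $O(k)$ nodes whose edges are failure-free paths. Each such path is summarized by combining $O(1)$ shortcut profiles. At the marked bags we compute the local components of the unbreakable bag. We then union everything into a connectivity structure over $O(k)\cdot k^{O(1)}$ adhesion vertices. Bounding this work by $O(k^4)$ requires careful counting, using adhesion size $O(k)$ and $O(k^2)$ components per path.

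\textbf{Step 4: query.} Locate the bag of $u$ and its nearest marked ancestor using precomputed level-ancestor structures. Then decide whether $u$ reaches that ancestor's giant component, or lies in a small component. We precompute, for each vertex, the $O(k)$-size local neighbourhood witness needed for this test, which gives $O(k^3)$ per query.

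I expect Step 2 to be the main obstacle. The composition of profiles along a shortcut must stay correct when failures lie off the path but inside subtrees hanging from it. It must also be done in $n$-independent time while the space stays near-linear. My first approach is to define profiles relative to the giant components, so that composition along a path reduces to a union over the boundary. Failures in hanging subtrees would be handled by the marked compressed tree, since any such subtree is split off at a marked node.
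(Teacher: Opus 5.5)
Your architecture matches the paper's proof (given as \cref{thm:config2}): an unbreakable decomposition, the LCA closure of the marked nodes so that compressed-tree edges span failure-free regions, $O(1)$ shortcut summaries per region, and bootstrapping from an $n$-dependent oracle. What is missing is the step that makes update and query $n$-independent.

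\textbf{The local mechanism.} A marked bag can have $\Theta(n)$ vertices, so ``compute the local components of the unbreakable bag'' needs a concrete mechanism. The paper's mechanism is the bag graph: $G[\bag(x)]$ plus a clique on $\adh(z)$ for every child $z$. Given the profiles of the at most $k$ affected children, one deletes the $O(k^3)$ $z$-supported edges absent from $\profile(z)$. The result is $\bgraph_S(x)$, whose connectivity on $\bag(x)$ equals that of $G[\cone(x)]\setminus S$. The decomposition is \emph{strongly} $(k,k)$-unbreakable, i.e.\ relative to $G[\cone(x)]$, which is what cone-local profiles need, and has adhesion exactly $k$. So \cref{lemma:bgraph-structure} gives at most one component of $\bgraph_S(x)$ with more than $k$ vertices, and all other components together have at most $k$ vertices. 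Hence a BFS truncated after $k+1$ vertices, started from each adhesion vertex, yields $\profile(x)$ in $O(k^3)$ time. Over $O(k)$ marked nodes, plus $O(k)$ torso compositions at $O(k^3)$ each, this gives the $O(k^4)$ update. With your $k^{O(1)}$ adhesion and unbreakability parameters, the stated $O(k^4)$ and $O(k^3)$ bounds would not follow in any case. Also, you must mark only the node whose margin contains each failed vertex: a failed vertex can lie in the adhesions of unboundedly many children.

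\textbf{Preprocessing.} The $k^{O(k^2)}n$ term is the cost of the decomposition algorithm itself (\cref{thm:unbreakable-decomp}), not of bootstrapped connectivity queries. What is bootstrapped is the computation of the shortcut summaries (torsos), which your plan never explains.
For each parent--child pair $(x,z)$, Kosinas' oracle (\cref{thm:edge-failure-connectivity-oracle}) is built on $\bgraph(x)$ with adhesion edges subdivided. One then fails $(\adh(x)\cup\adh(z))\setminus\{u,v\}$ together with the subdivision vertex of the $z$-supported edge $uv$; the oracle's $O(k^4\log n)$ update time gives the $O(k^6 n\log n)$ term.
Every ancestor torso is then obtained by BFS over the union of parent--child torsos along a path of length $O(\log n)$ (the height bound), which gives $O(k^3 n\log^2 n)$.
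Each shortcut edge stores a single failure-free torso of size $O(k^2)$, not one per failure pattern, which would break your space bound. Torsos are composed by BFS in $O(k^3)$ time, which avoids the $2^{O(k^2)}$ lookup tables.

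\textbf{Query.} Your fix for failures hanging off a path works for compressed-tree edges, but not for the query. The path from $x_u$ to its nearest marked ancestor can pass an unmarked node $x'$ whose other child subtree contains failures; $x'\notin Y$ because all marked nodes below it lie in that one subtree.
\begin{itemize}
\item When $x_u$ is unaffected, \cref{lemma:query-case4} replaces $u$ with any vertex of $\adh(z'')$, where $z''$ is the child of $x'$ toward $x_u$. This is valid because $G[\cone(z'')]$ is connected and failure-free by regularity.
\item When $x_u$ is affected, the paper instead computes $\profile(x_u)$ from the torso to its topmost marked descendant.
\end{itemize}
The $O(k^3)$ query bound then comes from one torso composition plus $O(k)$ lifts through the compressed tree at $O(k^2)$ each, via truncated BFS in $\bgraph_S$. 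It does not come from a precomputed per-vertex witness, whose interaction with $S$ you never specify.
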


Here $\alpha_c(n)$ denotes the extremely slowly growing inverse Ackermann function for any fixed constant $c$.
For constant $k$, the space and preprocessing are near-linear, while update and query are constant-time.

The ideal dependence on $k$ would be $O(k^2)$ update time and $O(k)$ query time: this would be both $n$-independent and conditionally optimal under fine-grained conjectures~\cite{HKNS15,long2024}.
We take a step toward this frontier by showing that optimal $O(k)$ query time \emph{is} achievable with purely $k$-dependent update time, via a tradeoff that allows an additional $2^{O(k^2)}n$ term in space (while keeping preprocessing near-linear for constant $k$).

\begin{ptheorem}[Optimal query time via extra space]\label{thm:main}
There exists a deterministic vertex-failure connectivity oracle that uses $2^{O(k^2)}n + O(k^2 n\alpha_c(n))$ space,
$k^{O(k^2)}n + O(m + k^3 n \log^2 n + k^6 n \log n)$ preprocessing time,
$O(k^6)$ update time, and $O(k)$ query time.
\end{ptheorem}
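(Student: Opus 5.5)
We build the oracle of \Cref{thm:main} on top of the one from \Cref{thm:linear}. The preprocessing, the $O(k^2 n\alpha_c(n))$ shortcut structure over the unbreakable tree decomposition, and the bootstrapped construction are reused unchanged. What changes is how the per-query work is spread between update and query. In \Cref{thm:linear}, a query $(u,v)$ walks from the bags containing $u$ and $v$ toward the failed region and combines $O(k)$ pieces of information per step, which gives $O(k^3)$. The goal here is to move almost all of that work into the update phase, so that a query only does $O(k)$ lookups.

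The first step is a \emph{patch set} mechanism for the unbreakable decomposition. Take each bag $t$ and each admissible failure pattern restricted to its adhesion, meaning a subset $A$ of the adhesion of size at most $k$ together with a partition describing which adhesion vertices are still connected below $t$. For each such pair we precompute a compact certificate: which components of the subtree (equivalently, which "patches" attached to the adhesion) survive and how they glue together. Adhesions have size $O(k)$, so the number of subsets and partitions per adhesion is $2^{O(k^2)}$. Storing these certificates through the shortcut pointers rather than at every node keeps the total at $2^{O(k^2)}n$, which is the extra space term. Preprocessing remains $k^{O(k^2)}n$, because each certificate is computed in time bounded by a function of $k$ using the unbreakability of the bags. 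This matches the existing additive term.

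The second step is the update. Given $S$, we locate the $O(k)$ bags touched by $S$ and take their lowest-common-ancestor closure, which is an $O(k)$-size subtree skeleton. On each skeleton bag we run the $O(k^4)$ routine from \Cref{thm:linear}. This produces, for every skeleton edge and every vertex of the relevant adhesion, a component label of $G\setminus S$. With $O(k)$ skeleton nodes and $O(k)$ adhesion vertices each, there are $O(k^2)$ adhesion–partition combinations to resolve. Resolving each one, by looking up its patch certificate and merging labels via union-find on $O(k^2)$ elements, costs $O(k^4)$. That gives $O(k^6)$ in total.

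The third step is the query. For a vertex $u$, we follow its shortcut pointer to the nearest skeleton bag, or to the boundary of the skeleton, in $O(1)$ time. The patch certificate stored there lets us pick the adhesion vertex, one of $O(k)$ candidates, that $u$ reaches while avoiding $S$. Checking those candidates takes $O(k)$ time. We do the same for $v$ and then compare the component labels computed during the update.

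The main obstacle is proving that $u$'s certificate can be read in $O(k)$ time without knowing $S$ in advance. The certificate has to be indexed by the failure pattern restricted to one adhesion, and we need that pattern to be determined only by $S$ intersected with the skeleton. I would first try to handle this by showing unbreakability forces at most one "large" side at each bag. The sets $S\cap\text{bag}$ then only ever split off small pieces of size at most some $q(k)$. Those small pieces are exactly what the $2^{O(k^2)}$ patch sets enumerate, so the pattern at each bag is fixed by the skeleton data.
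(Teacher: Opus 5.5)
There is a genuine gap, and it is the step you flag yourself as the main obstacle. Your proposal has no mechanism that decides, in $O(k)$ time, whether a query vertex $u\in\mrg(x_u)$ lies in the unique large component $C$ of $\bgraph_S(x_u)$, or, if not, which adhesion vertex it reaches. Bags of the unbreakable decomposition have size unbounded in $k$, so this cannot be read off a table indexed by patterns on an adhesion. When $x_u$ is important, $S\cap\bag(x_u)$ is an arbitrary set of at most $k$ vertices of a large bag. The small pieces it splits off therefore range over $|\bag(x_u)|^{O(k)}$ possibilities, and your claim that they are ``exactly what the $2^{O(k^2)}$ patch sets enumerate'' is false. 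When $x_u$ is unimportant with a single affected child $z$, the relevant pattern is $\profile(z)$, and there are only $2^{O(k^2)}$ of those. But storing a per-vertex answer for every (child, pattern) pair costs $2^{O(k^2)}\sum_x|\children(x)|\cdot|\bag(x)|$, which can be quadratic in $n$. With only $\poly(k)$ information per entry, you still need a per-vertex resolution procedure. The ``at most one large side'' observation alone only justifies a $(k+1)$-vertex BFS from $u$. That costs $O(k^2)$ and gives the configuration of \Cref{thm:config1}, not $O(k)$ query time.

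The missing idea is the paper's patch partition of each bag graph. Greedily carve $\bgraph_\emptyset(x)$ by BFS into connected big patches of exactly $k+1$ vertices; what remains are small patches (components of size at most $k$), each adjacent only to big patches. For each small patch, store up to $k+1$ normal neighbors and up to $k^3+1$ adhesion neighbors with counters. This gives the $O(k)$ resolution:
\begin{itemize}
\item An untouched big patch stays connected with $k+1$ vertices, so it lies in $C$ by unbreakability.
\item An untouched small patch is resolved by scanning $O(k)$ stored neighbors for a surviving one, which sits in a big patch.
\item Only the $O(k^2)$ touched patches at each of the $O(k)$ relevant nodes need explicit labels. These are computed during the update by $(k+1)$-step BFS from their $O(k^3)$ vertices, which is where $O(k^5)$ per node and $O(k^6)$ in total come from.
\end{itemize}
The $2^{O(k^2)}n$ space term is then one table $F_z$ per non-root node $z$, indexed by $\profile(z)$, holding these touched-patch labels and updated adhesion-neighbor lists for the single-child-failure case.

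By contrast, several parts of your accounting do not correspond to a concrete computation:
\begin{itemize}
\item Storing per-node data on shortcut edges would add an $\alpha_c(n)$ factor rather than save space. The paper uses shortcuts only for torsos, composed through a lookup table indexed by pairs of torsos.
\item The update bound of ``$O(k^2)$ combinations times $O(k^4)$ union-find'' is not derived from anything.
\item There is no $S$-independent ``shortcut pointer to the nearest skeleton bag.'' The paper instead finds the nearest important ancestor in $O(k)$ time by scanning $Y$. It then lifts $u$ through the $O(k)$ important ancestors at $O(1)$ cost each, using the colorings of the adhesion connectivity graphs and the torso/profile lookup table.
\end{itemize}
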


The $O(k)$ query time in Theorem~\ref{thm:main} is conditionally optimal~\cite{HKNS15}.
Table~\ref{tbl:1} summarizes known vertex-failure connectivity oracles whose \emph{update times are independent of $n$} and compares them with our results.

\renewcommand{\arraystretch}{1.5}
\begin{table}[t]
\centering
\hspace*{-0.02\textwidth}
\small
\setlength{\tabcolsep}{4pt}
\begin{tabular}{|l|c|c|c|c|c|}
\hline
 & Det./Rand. & Space & Preprocessing & Update & Query \\
\hline
\hline

\makecell[l]{Kosinas \\($k \le 3$ only)\\ \cite{DBLP:conf/icalp/Kosinas25}}
& Det.
& $O(n)$
& $O(m)$
& $O(1)$
& $O(1)$ \\
\hline

\makecell[l]{Brand \& Saranurak\\ \cite{van2019sensitive}}
& Rand.
& $O(n^2)$
& $O(n^\omega)$
& $O(k^\omega)$
& $O(k^2)$ \\
\hline

\makecell[l]{Pilipczuk et al.\\ \cite{pilipczuk2021algorithms}}
& Det.
& $n \cdot 2^{2^{O(k)}}$
& $mn^2 \cdot 2^{2^{O(k)}}$
& $2^{2^{O(k)}}$
& $2^{2^{O(k)}}$ \\
\cline{2-6}

& Det.
& $n^2 \mathrm{poly}(k)$
& $\mathrm{poly}(n)\,2^{O(k \log k)}$
& $\mathrm{poly}(k)$
& $\mathrm{poly}(k)$ \\
\hline

\makecell[l]{\textbf{This Paper}}
& Det.
& $O(k^2 n\alpha_c(n))$
& $k^{O(k^2)}n + \widetilde{O}(km + k^6 n)$
& $O(k^4)$
& $O(k^3)$ \\
\cline{2-6}

& Det.
& $2^{O(k^2)}n + O(k^2 n\alpha_c(n))$
& $k^{O(k^2)}n + \widetilde{O}(km + k^6 n)$
& $O(k^6)$
& $O(k)$ \\
\hline
\end{tabular}
\caption{Complexities of known vertex-failure connectivity oracles whose \emph{update times are independent of $n$}. We use $\widetilde{O}(\cdot)$ to hide $\polylog(n)$ factors. Here $2 \le \omega < 3$ is the matrix multiplication exponent, and $\alpha_c(n)$ is the inverse Ackermann function (for any constant $c$).}
\label{tbl:1}
\end{table}
\renewcommand{\arraystretch}{1}

\paragraph{Other consequences: vertex-cut oracles.}
Beyond connectivity queries, our oracle improves other basic connectivity tasks.
In \Cref{sec:cutoracle}, we discuss a simple reduction (refining the DFS-based framework of~\cite{DBLP:conf/icalp/Kosinas25}) that converts any vertex-failure connectivity oracle into a vertex-cut oracle\footnote{A vertex-cut oracle supports vertex-cut query: will deleting a given set of vertices disconnect the graph? See \Cref{sec:cutoracle} for a formal problem definition.} (Theorem~\ref{thm:VertexCutOracle}). More interestingly, this reduction is nearly lossless, and thus it immediately opens up new trade-offs with $n$-independent query time for vertex-cut oracles (the current vertex-cut oracles \cite{jiang2026new,DBLP:conf/icalp/Kosinas25} all have their query times depending on $n$). For example, by plugging in our new $n$-independent connectivity oracle, it gives the following new vertex-cut oracle.

\begin{ptheorem}\label{thm:vertex-cut-oracle:intro}
There exists a deterministic vertex-cut oracle that uses $O(k^{2}n\alpha_{c}(n))$ space,
$k^{O(k^{2})}n+O(m+k^{3}n\log^{2}n+k^{6}n\log n)$ preprocessing time,
and $O(k^{4}+2^{k}k^{2})$ query time.
\end{ptheorem}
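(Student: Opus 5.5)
This follows by composing \Cref{thm:linear} with the reduction from connectivity oracles to vertex-cut oracles (\Cref{thm:VertexCutOracle}, stated in \Cref{sec:cutoracle}). That reduction refines the DFS-based framework of~\cite{DBLP:conf/icalp/Kosinas25}. I expect it to have the following shape. Preprocessing builds a DFS tree $T$ of $G$ together with $O(n)$-size auxiliary data: depths, low-points, and the preorder and subtree intervals needed to find the lowest back-edge endpoint over a range. It also builds any vertex-failure connectivity oracle. A cut query on a set $S$ with $|S|\le k$ then does three things. First, it issues one update of the connectivity oracle with failure set $S$. Second, it decomposes $T\setminus S$ into the components of $T$ obtained by deleting $S$. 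There are at most $O(k)$ of them, because each deleted vertex splits off its child subtrees. Only child subtrees of $S$-vertices that are not themselves cut off matter, and these are handled collectively by the low-point/range data. Third, it decides whether all these tree pieces are connected in $G\setminus S$, using $O(\poly(k))$ connectivity queries between representative vertices plus the back-edge data.

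The $2^{k}k^{2}$ term in the target bound suggests the precise form of the reduction. The tree pieces are organized by the $O(k)$ failed vertices along root paths. For some collection of at most $2^{k}$ subsets or configurations of $S$, the reduction issues $O(k)$ connectivity queries, or $O(k^{2})$ constant-time range checks. So I would state the reduction abstractly as: given an oracle with space $s$, preprocessing $p$, update $u$ and query $q$, we get a vertex-cut oracle with space $s+O(n)$, preprocessing $p+O(m)$ (or $O(m\log n)$), and query time $O(u + (k+2^{k}) \cdot q\cdot \poly(k))$. More precisely, the bound should be $O(u + 2^{k}k^{2})$ once the individual query cost is taken into account.

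Substituting \Cref{thm:linear} gives space $O(k^2n\alpha_c(n))+O(n)=O(k^2n\alpha_c(n))$. Preprocessing becomes $k^{O(k^2)}n+O(m+k^3n\log^2 n+k^6n\log n)$ plus the linear-time DFS preprocessing, which is absorbed. The query time is the update cost $O(k^4)$ plus the pairwise checks. The main step to verify is that the query-side accounting of the reduction really costs $O(2^{k}k^{2})$ beyond the update, and not $2^{k}$ times the $O(k^3)$ oracle query time. Concretely, the exponential enumeration must use only the $O(1)$-time DFS range primitives, or connectivity checks that after the update can be answered by comparing component labels in $O(1)$. The $O(k^3)$-time oracle queries should be needed only polynomially many times, $O(k)$ of them, for a total of $O(k^4)$. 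If that holds, the sum is $O(k^4+2^{k}k^{2})$, as claimed.
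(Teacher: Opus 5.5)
Your proposal takes the same route as the paper: the theorem is just \Cref{thm:linear} (with $\sfU = O(k^4)$, $\sfQ = O(k^3)$) plugged into \Cref{thm:VertexCutOracle}, whose query bound $\sfU + |F|\cdot\sfQ + O(2^{|F|}k^2)$ is exactly the accounting in your last paragraph, giving $O(k^4 + k\cdot k^3 + 2^k k^2)$; the $2^{|F|}$ subset enumeration only does $O(k)$-time dictionary lookups at each of the $|F|$ failed vertices and never calls the oracle, so your intermediate guess with a $2^k\cdot q$ term does not arise. The reduction's actual overheads are $O(kn)$ space and $O(m + kn\log n)$ preprocessing rather than $O(n)$ and $O(m)$, but both are absorbed into the bounds of \Cref{thm:linear}, so your conclusion stands.
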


Furthermore, we show a similar reduction (\Cref{thm:SteinerVertexCutOracle}) for the more generalized \textit{Steiner vertex-cut oracles} problem (see \Cref{sec:cutoracle} for precise problem definition). Previously, there were results \cite{bhanja2025near} for the related problem of $k$-vertex-fault-tolerant Steiner connectivity labeling schemes, but there are still no explicit results\footnote{\cite{bhanja2025near} states that \cite{jiang2026new}'s vertex-cut oracle ``seems plausible to
extend also to the Steiner variant'', but there is no explicit proof, and the reduction following \cite{DBLP:conf/icalp/Kosinas25}'s DFS-based framework will likely be simpler.} for Steiner vertex-cut oracles. The new reduction immediately yields many new trade-offs for Steiner vertex-cut oracles, leveraging the rich body of existing results for vertex-failure connectivity oracles. However, compared with the first reduction, it incurs larger losses (one more $O(\log n)$ overhead on the query time), and therefore cannot yield reasonable trade-offs with $n$-independent query time.

\paragraph{Techniques.}
We improve upon the oracle of \cite{pilipczuk2021algorithms} with $\Omega(n^2)$ space using three key ideas. 
First, to improve the space to near-linear, we introduce the usage of \emph{shortcut} on top of the tree that represents the unbreakable decomposition of the graph.
Second, to improve the preprocessing time, we observe that we can \emph{bootstrap} the construction of our $n$-independent-update oracle using $n$-dependent-update oracles (by e.g. \cite{kosinas2023,long2024better}). 
Third, to obtain the optimal query time, we introduce the \emph{patch sets}. 
However, applying these three ideas at the same time is non-trivial.

\paragraph{Organization.}
\Cref{sec:overview} gives a high-level explanation of our results.
Section~\ref{sec:prelims} introduces preliminaries and tools.
Section~\ref{sec:prep} describes the preprocessing and data structure.
Section~\ref{sec:update-query} gives the update and query algorithms and proves Theorem~\ref{thm:main}.
Section~\ref{sec:tradeoffs} presents additional configurations and proves Theorem~\ref{thm:linear}.
Finally, Section~\ref{sec:discussions} discusses open problems and future improvements.

% \input{intro_2}
% \section{Introduction}\label{sec:intro}

% \input{intro}

\section{Overview}\label{sec:overview}
\paragraph{The unbreakable decomposition framework.}

We start by introducing the framework inspired by \cite{pilipczuk2021algorithms} using unbreakable decomposition \cite{korhonen2024}. In a graph $G$, we say a set $B \subseteq V$ is $(q,k)$-unbreakable if for any subset $S \subseteq V(G)$ of at most $k$ vertices, there can be at most one connected component $C$ in $G \setminus S$ such that $|C \cap B| > q$ while $|B \setminus C| \leq q$. A $(q, k)$-\textit{unbreakable decomposition} of a graph $G$ is a tree decomposition $\mathcal{T} = (T, \bag)$ where the bag associated with each tree node $x \in V(T)$ is $(q, k)$-unbreakable in $G$. Additionally, we say that $\mathcal{T}$ is \textit{strongly} $(q, k)$-unbreakable if every $\bag(x)$ is $(q, k)$-unbreakable in the subgraph of $G$ induced by the union of all bags in the subtree of $T$ rooted at $x$. Throughout the paper, we will use a strongly $(k, k)$-unbreakable decomposition of adhesion $k$ given by \cite{korhonen2024}, thus in this paper we always assume $q=k$. 

Given the set of at most $k$ failed vertices $S$ and for each $u \in S$, we first identify the topmost node $x \in T$ where its bag contains $u$ and call the node $x$ \emph{important}. Let $X$ be the set of all important nodes and let $Y$ be the least common ancestor closure of $X$ in $T$, adding the root if it is not already included. Note that $|X| \leq k$ and $|Y| \leq O(k)$. For each $x \in X$, we aim to compute the \emph{profile} of $x$, which is a graph on the adhesion of $x$ where two vertices are adjacent if and only if they are connected in the subgraph of $G \setminus S$ induced by the union of all bags in the subtree rooted at $x$. We build these profiles bottom-up, starting from the lowest nodes in $Y$ and working all the way to the root. We use additional precomputed graphs called \emph{torsos} to quickly jump between important nodes and skipping sections of the graph where no vertex failures occur. For any $x, y \in V(T)$ where $x$ is a strict ancestor of $y$, the graph $\torso(x, y)$ is on the adhesions of $x$ and $y$. Two vertices in $\torso(x, y)$ are adjacent if and only if there exists a path between them in $G$ whose interior lies entirely within the the induced subgraph on the union of all bags in the subtree rooted at $x$ and stopping at $y$.

Computing torsos is a main bottleneck in the preprocessing time from \cite{pilipczuk2021algorithms}. To speed up this, we observe that torso queries can be computed via standard connectivity queries with vertex failures. Therefore, we build another oracle using \cite{kosinas2023} to boost the preprocessing time.

To evaluate the query $u, v$, we first consider $u$ and repeat the same process for $v$. We first \emph{augment} $u$ to the computed profile of its closest ancestor in $Y$ while maintaining its connectivity invariants. Then we identify all important nodes along the path from $u$ to the root, and working upward by augmenting $u$ to each of their profiles until the root. After this is repeated for $v$, we can simply check the profile of the root to determine if $u$ and $v$ are connected in $G \setminus S$.

\paragraph{Shortcuts.}

To avoid quadratic space used in storing all torsos \cite{pilipczuk2021algorithms}, we develop an approach using \emph{shortcuts}. We observe that, suppose we have $x,y,z \in V(T)$ where $x$ is an ancestor of $y$ and $y$ is an ancestor of $z$ in $T$, then $\torso(x,z)$ can be computed through $\torso(x,y)$ and $\torso(y,z)$ efficiently.

With this observation, we only need to store a subset of torsos in our oracle. We model this by a set of \emph{shortcut edges} in $T$ that when $x$ is an ancestor of $z$, $x$ can reach $z$ through constant shortcut edges. Therefore, when $\torso(x,z)$ is needed, we just use the stored torsos on those shortcut edges to compute $\torso(x,z)$. From \cite{bodlaender1994,le2025}, we only need $n\alpha_c(n)$ shortcut edges, and thus we only need to store $n\alpha_c(n)$ (instead of $n^2$) torsos.

\paragraph{Patch sets.}

Next, we introduce how our oracle can achieve $O(k)$ optimal query time instead of $\text{poly}(k)$ as in \cite{pilipczuk2021algorithms} with storing some additional information about \emph{patch sets}. As standard local search in \cite{pilipczuk2021algorithms} still has $O(k^2)$ query time even in $(q,k)$-unbreakable graphs, it would be good to assume $G$ is $(q,k)$-unbreakable below while still capturing the hardness and our ideas.

We define a collection $\mathcal{P}$ of \emph{patch sets} as follows. Start from an arbitrary vertex $v \in V$, we perform BFS in $V$ from $v$ until a set $P \subseteq V$ with $(k+1)$ vertices are visited. We add $P$ to $\mathcal{P}$, delete $P$ from $V$, and then repeat the procedure until there's no connected component in $V$ with at least $(k+1)$ vertices. Finally, for each remaining component $P$, we add $P$ into $\mathcal{P}$. We say $P \in \mathcal{P}$ is \emph{large} if $|P|=k+1$ and \emph{small} otherwise. A small patch set can only link to large patch sets.

We know that, for any failed vertex set $S$ with size at most $k$, there exists at most one large component $C$ in $G \setminus S$ with $|C|>k$. We say a patch set $P \in \mathcal{P}$ is \emph{touched} if $P \cap S \neq \emptyset$, and \emph{untouched} otherwise. For each vertex $v$ that's in an touched patch $P$, we do BFS from $v$ in $G \setminus S$. By visiting at most $k+1$ vertices in $O(k^2)$ time, we either find all vertices connected with $v$ in $G \setminus S$, or certify $v \in C$. Since there are at most $k$ affected components, the total update time will be $O(k \cdot k \cdot k^2)=O(k^4)$.

Then consider any vertex $u\in V$. If $u$ is in an affected patch set $P$, we already know the component of $u$. If $u$ is in an unaffected large patch $P$, we also know $u \in C$. Otherwise, $u$ is in an unaffected small patch $P$. By enumerating at most $(k+1)$ neighbors of $P$, if $P$ is not a component itself, we will find some $u'$ that's in a large patch set and connected with $P$. We can then compute the component of $u$ by $u'$. Therefore, given a query $u,v$, we can check if they are in the same connected component using $O(k)$ time.

%\benyu{Please feel free to edit this section if I'm not accurate.}

\section{Preliminaries}\label{sec:prelims}
In this section we introduce and define most of the concepts and notations used in later sections. Many of the definitions and notations used by this paper will follow the convention set by \cite{pilipczuk2021algorithms}.

\paragraph{Graphs.} We will use the standard notion of graphs. Throughout this paper, we will only consider undirected graphs. For any graph $G$, we use $V(G)$ to denote its vertex set and $E(G)$ to denote its edge set. Additionally, we use $n = |V(G)|$ and $m = |E(G)|$ to represent the size of the graph $G$.  WLOG we can assume $m=O(nk)$ by a sparsification in \cite{nagamochi1992computing} with $O(m)$ time. For any subset of vertices $S \subseteq V(G)$, we use $G[S]$ to denote the subgraph of $G$ induced by $S$, and we use $G \setminus S$ to denote the graph $G$ after removing the set of vertices $S$. We will often use the term \textit{BFS in $G$} as a short hand as \textit{breadth-first search} in a graph $G$.

\paragraph{Trees.} A \textit{rooted tree} $T$ is a connected acyclic graph with a distinguished root $r$. For any node $x \in V(T)$, we use $\parent(x)$ to represent the parent of $x$ and $\children(x)$ to represent the set of children of $x$. Let $x, y \in V(T)$ be two nodes, we say that $x$ is an ancestor of $y$ and $y$ is a descendant of $x$ if $x$ is on the unique path from $y$ to $r$. Additionally, we say that $x$ is a strict ancestor of $y$ if $x$ is an ancestor of $y$ and $x \neq y$. For a node $x \in V(T)$, we let $T(x)$ represent the subtree of $T$ rooted at $x$ that contains all descendants of $x$. Also, for two nodes $x, y \in V(T)$, we denote the least common ancestor of $x$ and $y$ by $\lca(x, y)$, and note that $x$ is an ancestor of $y$ if and only if $\lca(x, y) = x$. Suppose $x$ is a strict ancestor of $y$, we define the $y$-directed child of $x$ as the child $z$ of $x$ such that $z$ is on the $y$-$x$ path. We denote such $z$ as $\dir(x, y)$.

\paragraph{Tree decompositions.}  A \textit{tree decomposition} of a graph $G$ is a pair $\mathcal{T} = (T, \bag)$ where $T$ is a rooted tree with root $r \in V(T)$ and $\bag: V(T) \rightarrow 2^{V(G)}$ is a mapping that assigns every tree node $x \in V(T)$ a subset $\bag(x) \subseteq V(G)$ called a bag. Furthermore, the following two conditions must be satisfied.

\begin{enumerate}
    \item For every vertex $v \in G$, the set of nodes $x \in T$ where $v \in \bag(x)$ induces a subtree of $T$.
    \item For every edge $(u, v) \in E(G)$, there exists a node $x \in T$ such that $u, v \in \bag(x)$.
\end{enumerate}

For a node $x \in T$, we additionally define the following notions.

\begin{itemize}
    \item The \textit{adhesion} of $x$ is $\adh(x) = \bag(\parent(x)) \cap \bag(x)$.
    \item The \textit{margin} of $x$ is $\mrg(x) = \bag(x) \setminus \adh(x)$.
    \item The \textit{cone} at $x$ is the union of all $\bag(y)$ where $y$ is a descendant of $x$, denoted by $\cone(x)$.
    \item The \textit{component} at $x$ is $\comp(x) = \cone(x) \setminus \adh(x)$.
\end{itemize}

The \textit{adhesion} of a tree decomposition $\mathcal{T}$ is defined as the size of the largest adhesion, which is $\max_{x \in V(T)} |\adh(x)|$.

Without loss of generality, we will assume that any tree decomposition we work with is \textit{regular}, which satisfies the following conditions for every non-root node $x \in V(T)$.

\begin{enumerate}
    \item The margin of $x$ is non-empty.
    \item The induced graph $G[\comp(x)]$ is connected.
    \item Every vertex $u \in \adh(x)$ has a neighbor $v \in \comp(x)$.
\end{enumerate}

The last two conditions would imply that for every pair of vertices $u, v \in \adh(x)$ and for any node $x$, there must exist a $u$-$v$ path such that every vertex along this path except $u, v$ are contained in $\comp(x)$.

\paragraph{Unbreakable decompositions.} A \textit{separation} in a graph $G$ is two subsets of vertices $L, R \subseteq V(G)$ where $L \cup R = V(G)$ such that after removing its \textit{separator} $L \cap R$ from $G$, there will be no edge between the vertices in $L \setminus R$ and $R \setminus L$ in $G \setminus(L \cap R)$. The \textit{order} of a separation is the size of its separator. For a subset $X \subseteq V(G)$, we say that $X$ is $(q, k)$-\textit{unbreakable} in $G$ if for every separation $(L, R)$ of order at most $k$, we either have $|L \cap X| \leq q$ or $|R \cap X| \leq q$.

A $(q, k)$-\textit{unbreakable decomposition} of a graph $G$ is a tree decomposition $\mathcal{T} = (T, \bag)$ where for every node $x \in V(T)$, $\bag(x)$ is $(q, k)$-unbreakable in $G$. Additionally, we say that $\mathcal{T}$ is \textit{strongly} $(q, k)$-unbreakable if every $\bag(x)$ is $(q, k)$-unbreakable in $G[\cone(x)]$. We will use the following theorem to compute an unbreakable decomposition in this paper.

\begin{ptheorem}[\cite{korhonen2024}]\label{thm:unbreakable-decomp}
    Given a graph $G$, there is a deterministic algorithm that computes a strongly $(k, k)$-unbreakable decomposition with adhesion $k$ and height $O(\log n)$ in $k^{O(k^2)} n + O(m)$ time.
\end{ptheorem}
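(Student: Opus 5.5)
This statement is imported from \cite{korhonen2024}, so the plan is not to re-derive the decomposition from scratch. Instead, I would obtain it by combining the known results and checking that the parameters match. The natural route has two stages.

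First, compute a tree decomposition whose bags are $(k,k)$-unbreakable with adhesion at most $k$. For this I would use Korhonen's deterministic near-linear-time algorithm for unbreakable decompositions; it runs in $k^{O(k^2)}n + O(m)$ time, after sparsifying so that $m = O(nk)$. Second, turn this decomposition into one that satisfies the \emph{strong} condition and has height $O(\log n)$, without losing unbreakability or increasing the adhesion.

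The key steps, in order, are as follows.

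\begin{enumerate}
\item Run the recursive scheme. Repeatedly find a small balanced separator (of order at most $k$) around the current adhesion set, or certify that the current piece is unbreakable. This certification uses the lean-witness/important-separator style subroutines, whose running time is $k^{O(k^2)}$ per vertex charged.
\item Enforce regularity. Split nodes whose components are disconnected, and drop adhesion vertices without neighbours in the component. This preserves the adhesion bound.
\item Prove the strong property. Each bag $\bag(x)$ is certified unbreakable inside $G[\cone(x)]$, which is exactly the graph the recursion sees at $x$, so strong unbreakability comes for free from the construction.
\item Bound the height. Use the balanced-separator choice so that $|\comp(\cdot)|$ shrinks by a constant factor every $O(1)$ levels, which gives height $O(\log n)$.
\end{enumerate}

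The main obstacle is achieving height $O(\log n)$ and near-linear total time together. The naive recursion can be unbalanced, and recomputation on $\cone(x)$ may cost quadratic time.

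To handle the time bound, I would follow Korhonen's approach of dynamic refinement. This maintains the decomposition under local changes, so that the total work amortizes to $k^{O(k^2)}$ per vertex.

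To handle the height bound, I would first build a shallow tree decomposition of width $O(k)$ via balanced separators, and then refine each of its bags into unbreakable pieces. Refining only within a bag keeps the depth within $O(\log n)$ up to constant factors. In the write-up itself, I would simply cite \cite{korhonen2024} for these guarantees.
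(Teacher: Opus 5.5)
The paper gives no argument for this theorem. It imports it verbatim from \cite{korhonen2024}, and strong unbreakability, adhesion $k$, height $O(\log n)$ and the running time are all part of the cited guarantee. So your closing move, citing the source, is the whole proof and coincides with the paper. What does not hold up is the reconstruction you put in front of it. You describe the cited algorithm as producing only ordinary $(k,k)$-unbreakable bags, and then propose a second stage to add the strong property and logarithmic height. That second stage would fail as written, so drop it rather than offer it as justification.

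\textbf{(i) No width-$O(k)$ decomposition.} A tree decomposition of width $O(k)$ does not exist for general graphs. For example, $K_n$ is itself $(k,k)$-unbreakable and has treewidth $n-1$. So ``first build a shallow width-$O(k)$ decomposition, then refine its bags'' cannot even start. Even granting such a tree, nothing in your sketch bounds the depth of the per-bag refinements, or keeps the adhesion at exactly $k$ after combining them.

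\textbf{(ii) The dichotomy in step~1 fails.} Suppose ``balanced'' refers to the whole piece. Take a vertex expander on $N$ vertices and attach a path on $k+1$ vertices at one vertex. The attachment vertex gives a separation of order $1$ with more than $k$ vertices on each side, so the graph is not $(k,k)$-unbreakable. Yet every separator balanced with respect to all vertices has size $\Omega(N)$. Suppose instead ``balanced'' refers only to the current adhesion set. Then it keeps adhesions small but says nothing about $|\comp(\cdot)|$. Either way, the constant-factor shrinkage in step~4 does not follow, and neither does the height bound. The paper genuinely relies on that bound, e.g.\ $\ell\le O(\log n)$ in \cref{claim:computing-all-torsos}.

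Keep the citation, stating all four guarantees as coming from the source, and omit the sketch.
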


\paragraph{Bag graphs.} Let $\mathcal{T} = (T, \bag)$ be a regular tree decomposition of a graph $G$. For any node $x \in V(T)$, we define its \textit{bag graph} in the following way, denoted by $\bgraph(x)$. We start by initializing $\bgraph(x)$ to be the subgraph $G[\bag(x)]$, and we call the edges in this graph \textit{normal}. Next, for each $z \in \children(x)$, we add a copy of the edge $(u, v)$ for every pair of vertices $u, v \in \adh(z)$ in $\bgraph(x)$. We call these edges \textit{adhesion edges supported by $z$}, and they would be labeled as such. Note that we allow multiple edges between some pair of $u, v \in \bag(x)$ supported by different children of $x$, which means that $\bgraph(x)$ is in fact a multigraph.

We claim that the total size of all bag graphs is linear to the total size of $G$.

\begin{plemma}\label{lemma:total-bgraph-size}
    Let $\mathcal{T} = (T, \bag)$ be a regular tree decomposition of $G$ of adhesion $q$. Then we have the following.
    \begin{enumerate}
        \item $\sum_{x \in V(T)} |V(\bgraph(x))| \leq O(qn)$
        \item $\sum_{x \in V(T)} |E(\bgraph(x))| \leq O(m + q^2n)$
    \end{enumerate}
\end{plemma}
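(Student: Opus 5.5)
The plan is a charging argument that uses regularity. The key observation is that the margins $\mrg(x)$ over all nodes $x\in V(T)$ partition $V(G)$. Here $\mrg(r)=\bag(r)$ for the root, with the convention $\adh(r)=\emptyset$. To see this, take any vertex $v$. The nodes whose bags contain $v$ form a subtree, and $v\in\mrg(x)$ holds exactly when $x$ is the topmost node of that subtree. Since every non-root margin is nonempty by regularity, we get $|V(T)|\le n$.

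For the first item, write $|\bag(x)|=|\mrg(x)|+|\adh(x)|\le|\mrg(x)|+q$. Summing over $x$ gives
\[\sum_x|V(\bgraph(x))| \le n+q|V(T)| \le (q+1)n = O(qn).\]

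For the second item, split $E(\bgraph(x))$ into normal edges and adhesion edges.

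\textbf{Adhesion edges.} Each non-root node $z$ contributes at most $\binom{|\adh(z)|}{2}\le q^2$ adhesion edges, all placed in $\bgraph(\parent(z))$. Summing over $z$ gives at most $q^2|V(T)|\le q^2n$.

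\textbf{Normal edges.} These are edges of $G[\bag(x)]$, and the difficulty is that an edge $uv$ can lie in many bags. I would charge each such copy in $\bag(x)$ according to the endpoints' positions.

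If $u\in\mrg(x)$ or $v\in\mrg(x)$, charge the copy to the pair $(uv,\text{endpoint})$. Each endpoint lies in exactly one margin, so each edge receives at most $2$ such charges, for $O(m)$ in total.

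Otherwise $u,v\in\adh(x)$. The number of such pairs at $x$ is at most $\binom{q}{2}$, so summing over $x$ gives at most $q^2|V(T)|\le q^2 n$.

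Together these give $O(m+q^2n)$. I expect no serious obstacle. The only care needed is in the normal-edge double counting, namely that multiple copies of an edge across bags are accounted for. The split into "some endpoint in the margin" versus "both endpoints in the adhesion" handles this, using the bound $|V(T)|\le n$ from regularity condition 1.
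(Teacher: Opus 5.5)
Your proof is correct and follows the paper's argument: margins are disjoint and nonempty, so $|V(T)|\le n$; each child contributes at most $q^2$ adhesion edges; and normal edges are split according to whether their endpoints lie in $\mrg(x)$ or $\adh(x)$. The only cosmetic difference is where the mixed edges go, those with one endpoint in $\mrg(x)$ and the other in $\adh(x)$. The paper puts them in the ``touches $\adh(x)$'' bucket, bounded by $q\cdot|\bag(x)|$ per node and $q(q+1)n$ overall, whereas you charge them to the unique margin of the margin endpoint, for $2m$ in total; both give $O(m+q^2n)$.
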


\proof For the first bound, note that the vertex set of each $\bgraph(x)$ is $\bag(x)$, and by definition we have $|\bag(x)| = |\mrg(x)| + |\adh(x)|$ for each bag. The margins of $\mathcal{T}$ are all disjoint, so the total size of all margins would be exactly $n$. Note that since $\mathcal{T}$ is regular, each margin is non-empty. Since each vertex is in the margin of exactly one node, we have that $|T| \leq n$ as otherwise, there would exist at least one node $x$ such that $\bag(x) = \adh(x)$. We also have $|\adh(x)| \leq q$ for all nodes $x$, then

\[\sum_{x \in V(T)} |V(\bgraph(x))| = \sum_{x \in V(T)} |V(\mrg(x))| + \sum_{x \in V(T)} |V(\adh(x))| \leq n + n \cdot q = (q + 1)n = O(qn)\]

For the second bound, in each $\bgraph(x)$, first consider all edges whose two endpoints are both in $\mrg(x)$, we call these edges \textit{vital}. Note that each edge in $G$ can be vital in at most one bag graph, so the total number of vital edges in all bag graphs is at most $m$. Each vertex in $\bgraph(x)$ has at most $q$ edges incident to $\adh(x)$ as $|\adh(x)| \leq q$. For each child $z$ of $x$, the number of adhesion edges supported by $z$ in $\bgraph(x)$ is at most $q^2$ as $|\adh(z)| \leq q$. Note that each node has at most one parent, thus

\[\sum_{x \in V(T)} |E(\bgraph(x))| = m + \sum_{x \in V(T)} (q \cdot |V(\bgraph(x))| + q^2 \cdot |\children(x)|)\]
\[= m + q \cdot \sum_{x \in V(T)}|V(\bgraph(x))| + q^2 \cdot \sum_{x \in V(T)} |\children(x)|\]
\[\leq m + q(q + 1)n + q^2 |T| \leq m + q(q + 1)n + q^2 n = O(m + q^2 n)\]
\\
This concludes the proof.

\paragraph{Restricted bag graphs.} For a regular tree decomposition $\mathcal{T} = (T, \bag)$ of $G$, a tree node $x$, and a subset $S \subseteq V(G)$, we define the \textit{$S$-restricted bag graph} of $x$, which is denoted by $\bgraph_S(x)$ and obtained in the following way. We start with $\bgraph(x)$, and for each child $z$ of $x$, we consider all adhesion edges supported by $z$. We remove a $z$-supported adhesion edge $(u, v)$ from $\bgraph(x)$ if and only if there does not exist a $u$-$v$ path in $G[\cone(z)] \setminus S$. We obtain $\bgraph_S(x)$ after considering all adhesion edges. Note that $\bgraph_{\emptyset}(x) = \bgraph(x)$ as $\mathcal{T}$ is regular.

We now consider some vital properties of restricted bag graphs in unbreakable decompositions, captured in the following lemma.

\begin{plemma}\label{lemma:bgraph-structure}
    Let $\mathcal{T} = (T, \bag)$ be a strongly $(q, k)$-unbreakable decomposition of $G$ that is regular. Let $x \in V(T)$ and $S \subseteq V(G)$ where $|S| \leq k$, then the following properties must hold.

    \begin{enumerate}
        \item For any $u, v \in \bag(x)$, they are connected in $G[\cone(x)] \setminus S$ if and only if they are connected in $\bgraph_S(x)$.
        \item Suppose $C \subseteq \bag(x)$ is a connected component in $\bgraph_S(x)$ where $|C| > q$. Let $C_1,...,C_{\ell}$ be the connected components in $\bgraph_S(x) \setminus C$, then we have $\sum_{i = 1}^{\ell} |C_i| \leq q$.
    \end{enumerate}
\end{plemma}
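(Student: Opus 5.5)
The plan is to prove the two items separately. Throughout, I read $\bgraph_S(x)$ as having vertex set $\bag(x)\setminus S$: the failed vertices are deleted together with their incident normal and adhesion edges, and only the pruning of adhesion edges described in the definition is applied on top of that. The first item is a path-decomposition argument in the tree decomposition. The second item follows from the first together with the strong unbreakability of $\bag(x)$ in $G[\cone(x)]$.

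\textbf{Item 1.} For the ``if'' direction, take a $u$-$v$ path in $\bgraph_S(x)$ and replace its edges one at a time.
\begin{itemize}
\item A normal edge is an edge of $G[\bag(x)]$ with both endpoints outside $S$, so it is already an edge of $G[\cone(x)]\setminus S$.
\item A surviving adhesion edge $(a,b)$ supported by a child $z$ is witnessed by an $a$-$b$ path in $G[\cone(z)]\setminus S$, and $\cone(z)\subseteq\cone(x)$.
\end{itemize}
Concatenating these pieces gives a $u$-$v$ walk in $G[\cone(x)]\setminus S$.

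For the ``only if'' direction, I first record two standard tree-decomposition facts.
\begin{itemize}
\item $\cone(x)\setminus\bag(x)$ is the disjoint union of the sets $\comp(z)\setminus\bag(x)=\comp(z)$ over $z\in\children(x)$. This is because a vertex of $\comp(z)$ lying in $\bag(x)$ would, by subtree connectivity, lie in $\adh(z)$.
\item Every edge of $G$ leaving $\comp(z)$ inside $\cone(x)$ ends in $\adh(z)$. This holds because such an edge must be covered by some bag; that bag lies in $T(z)$ if it contains a vertex of $\comp(z)$, and the other endpoint then lies in $\cone(z)\setminus\comp(z)=\adh(z)$.
\end{itemize}
Now take a $u$-$v$ path $P$ in $G[\cone(x)]\setminus S$ and split it at its vertices in $\bag(x)$.
\begin{itemize}
\item Each edge of $P$ between two consecutive vertices of $\bag(x)$ is a normal edge.
\item Each maximal subpath whose interior lies outside $\bag(x)$ has its interior inside a single $\comp(z)$, by the first fact. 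By the second fact, its endpoints $a,b$ lie in $\adh(z)$. This subpath is then an $a$-$b$ path in $G[\cone(z)]\setminus S$, so the $z$-supported adhesion edge $(a,b)$ survives in $\bgraph_S(x)$.
\end{itemize}
This yields a $u$-$v$ path in $\bgraph_S(x)$. I expect stating the two decomposition facts cleanly to be the main technical point; the rest is bookkeeping.

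\textbf{Item 2.} By item 1, $C$ is contained in a single connected component $D$ of $G[\cone(x)]\setminus S$. Also by item 1, $D\cap\bag(x)=C$ exactly, since any vertex of $\bag(x)\setminus S$ in $D$ would be connected to $C$ in $\bgraph_S(x)$. Define
\[
L = D\cup(S\cap\cone(x)), \qquad R = \cone(x)\setminus D .
\]
\begin{itemize}
\item $L\cup R=\cone(x)$, and $L\cap R=S\cap\cone(x)$, which has size at most $k$.
\item There is no edge between $L\setminus R=D$ and $R\setminus L$, since $D$ is a component of $G[\cone(x)]\setminus S$.
\end{itemize}
So $(L,R)$ is a separation of $G[\cone(x)]$ of order at most $k$. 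The decomposition is strongly $(q,k)$-unbreakable, so $\bag(x)$ is $(q,k)$-unbreakable in $G[\cone(x)]$. Since $|L\cap\bag(x)|\ge|C|>q$, we get $|R\cap\bag(x)|\le q$. Finally, $\bigcup_i C_i=\bag(x)\setminus(S\cup C)\subseteq R\cap\bag(x)$, which gives $\sum_{i=1}^{\ell}|C_i|\le q$.
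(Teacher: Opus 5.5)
Your proof is correct and follows the same route as the paper. Item 1 comes from the construction of $\bgraph_S(x)$, where you spell out the path-splitting argument over the children's components that the paper dismisses as ``directly follows from the construction.'' Item 2 is the same separation argument against the strong $(q,k)$-unbreakability of $\bag(x)$ in $G[\cone(x)]$, and here your version is slightly cleaner than the paper's: you argue directly rather than by contradiction, and you intersect $S$ with $\cone(x)$ so that $(L,R)$ is genuinely a separation of $G[\cone(x)]$. Your explicit convention that the failed vertices are deleted from $\bgraph_S(x)$ is the reading the paper implicitly relies on, and item 1 would be false under the literal definition.
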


\proof Note that the property 1 directly follows from the construction of $\bgraph_S(x)$. For property 2, let $C' = \bigcup_{i = 1}^{\ell} C_i$, and suppose that $|C'| > q$. By property 1, $C$ and $C'$ must be disconnected in $G[\cone(x)] \setminus S$. Let $L$ be the connected component in $G[\cone(x)] \setminus S$ such that $C \subseteq L$, and let $R = (\cone(x) \setminus S) \setminus L$. Note that $C' \subseteq R$. Then, $(L \cup S, R \cup S)$ is a separation in $G[\cone(x)]$ of order at most $k$ with separator $S$, where $|(L \cup S) \cap \bag(x)| = |C| > q$ and $|(R \cup S) \cap \bag(x)| = |C'| > q$, contradicting the unbreakability condition. Thus, we must have $|C'| \leq q$. This finishes the proof.

\paragraph{Vertex-failure connectivity oracle.} We introduce an existing vertex-failure connectivity oracle presented in \cite{kosinas2023} that we will utilize in the preprocessing phase, summarized in the following theorem.

\begin{ptheorem}[\cite{kosinas2023}]\label{thm:edge-failure-connectivity-oracle}
    There is a deterministic vertex-failure connectivity oracle that uses $O(k m \log n)$ space, $O(k m \log n)$ preprocessing time, $O(k^4 \log n)$ update time, and $O(k)$ query time.
\end{ptheorem}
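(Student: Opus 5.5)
Since this is a result quoted from \cite{kosinas2023}, the natural plan is to reconstruct its DFS-based argument. I would not route it through the unbreakable decomposition machinery of this paper.

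\textbf{Setup.} Fix a DFS tree $T$ of $G$ (after sparsification $m=O(nk)$). The key structural fact is that every non-tree edge joins an ancestor–descendant pair.

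\textbf{Structure after failures.} When a set $S$ with $|S|\le k$ fails, $T\setminus S$ falls apart into two kinds of pieces.
\begin{itemize}
\item \emph{Internal pieces.} These are the $O(k)$ pieces that either contain the root or are rooted at a child of a failed vertex whose subtree contains another failed vertex. They are determined by the ancestor structure of $S$ in $T$.
\item \emph{Hanging subtrees.} These are rooted at children $c$ of failed vertices $f$ with $T(c)\cap S=\emptyset$. There may be arbitrarily many of them, so they cannot be enumerated.
\end{itemize}
The oracle has to (i) decide which internal pieces are connected to each other, and (ii) for a hanging subtree $T(c)$, decide which internal piece it attaches to, if any.

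For (ii), any back edge leaving $T(c)$ lands on a proper ancestor of $c$. Such an edge is useful exactly when its upper endpoint is not in $S$, and the piece it lands in is determined by where that endpoint lies relative to the failed ancestors of $c$. I would therefore store, for each vertex $c$, the $k+1$ distinct lowest ancestor endpoints of back edges from $T(c)$, in the spirit of $k$-lowpoints. Among these stored endpoints, some non-failed one attaches $T(c)$ to the correct piece, by a pigeonhole argument over at most $k$ failures. This storage costs $O(kn)$.

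For (i), between two internal pieces the connecting edges either go through hanging subtrees or are direct back edges. For both, I would preprocess DFS-order range structures: for an interval of DFS positions, find back edges whose upper endpoint lies in a given ancestor path segment. These are 2D orthogonal range-emptiness or range-reporting queries, and they are where the $O(km\log n)$ space and preprocessing come from. I would build them as segment trees augmented with the $k$-lowest lists.

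\textbf{Update.} Sort $S$ by DFS order and build the $O(k)$ internal pieces via LCA queries. For each pair of ancestor-related pieces, and for each gap in DFS order created by $S$, issue $O(k^2)$ range queries, each taking $O(\log n)$ time and returning up to $k$ witnesses. This gives $O(k^4\log n)$ time overall. Then run union–find on the $O(k)$ pieces.

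\textbf{Query.} For a vertex $u$, locate the nearest failed ancestor in $O(k)$ time by scanning $S$ with DFS intervals. If $u$ lies in an internal piece, its component is known. Otherwise $u$ lies in a hanging subtree $T(c)$, where $c$ is found by a level-ancestor query; scan the $k+1$ stored endpoints of $c$ to find a non-failed one and locate its piece. This costs $O(k)$ in total.

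\textbf{Main obstacle.} The hard step is (i): showing that hanging subtrees that connect two internal pieces can be detected with only $\mathrm{poly}(k)$ range queries rather than one query per hanging child. I would try to group the children of each failed vertex into a DFS-contiguous range and argue that a query "max/min upper endpoint over back edges from this DFS range with lower endpoint outside the failed subtrees" suffices. Handling the interaction among several failed ancestors needs the $k$-fold lowpoint lists to avoid masking by failed endpoints.
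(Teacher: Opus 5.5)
The paper gives no proof of this statement. It is imported from \cite{kosinas2023} and used only as a black box in \cref{claim:computing-all-torsos}, so your sketch has to stand on its own, and as written it does not. Your treatment of hanging subtrees in (ii) and the $O(k)$ query are fine; the $(k+1)$ shallowest back-edge endpoints are exactly the lists ${\cal L}(v)$ the paper uses in \Cref{sec:appendix1}.

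The genuine gap is step (i). You flag it yourself and leave it unresolved, and the idea you float for it is unsound. A query returning the min/max upper endpoint over back edges leaving the DFS range of \emph{all} hanging children of a failed vertex $f$ aggregates over different hanging subtrees. It therefore cannot certify that a \emph{single} subtree reaches two pieces. Concretely, let a failed $g$ above $f$ split the root path into a segment $A$ above $g$ and a segment $B$ between $g$ and $f$. Let hanging children $c_1,c_2$ of $f$ have back edges only into $A$ and only into $B$, respectively. The aggregate sees endpoints in both segments and would merge two pieces that are not connected. Per-child queries avoid this, but they cost one query per hanging child, which is unbounded in $k$. Your accounting is also not derived: $O(k^2)$ pairs times $O(k)$ gaps times $O(k^2)$ queries is already $O(k^5)$ queries before counting witnesses, and the $O(km\log n)$ space is asserted rather than explained.

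What is missing is a way to make hanging children that share a common \emph{anchor} segment DFS-contiguous, so that one range query per (anchor, target) pair is sound. The following construction gives exactly the stated bounds.

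Permuting children preserves the DFS property. So for each $i\in\{1,\dots,k+1\}$, take the preorder $\pi_i$ in which the children of every vertex are sorted by the depth of their $i$-th list entry $\mathit{low}_i$. Build a 2D range-emptiness structure over the back edges, with coordinates ($\pi_i$ of the descendant endpoint, depth of the ancestor endpoint). This costs $O(m\log n)$ space and preprocessing with $O(\log n)$ query time per $i$, hence $O(km\log n)$ overall.

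At update time, fix a failed $f$, an index $i$, and a failure-free segment $J$ of the root path above $f$. The children $c$ of $f$ with $\mathit{low}_i(c)\in J$ are consecutive in $\pi_i$. Cutting out the at most $k$ internal children leaves a few ranges of hanging subtrees, each of which touches $J$. So a non-empty query on (range, target segment $I$) soundly merges $I$ with $J$. Completeness holds because every hanging child lies in the range indexed by its first non-failed entry $\mathit{low}_p(c)$. Hence every segment it touches gets merged with the segment containing $\mathit{low}_p(c)$.

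Summing over $f$, $i$, $J$, $I$ (including the extra ranges created by internal children) gives $O(k^4)$ queries, i.e.\ $O(k^4\log n)$ update time. Direct back edges between internal pieces need only $O(k^2)$ further queries in one of these structures. Without this, or an equivalent argument, your proposal does not establish the update bound or the space bound.
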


\paragraph{Shortcutting of trees.} We now consider adding shortcuts to a tree. Formally, given a tree $T$, we want to add a small number of additional edges to $T$ such that for any nodes $x, y \in V(T)$ where $x$ is an ancestor of $y$, one can reach $x$ from $y$ via $h$ non-reversing edges where $h$ is a constant. The non-reversing edges mean that every edge taken from $y$ to reach $x$ travels towards the root. This is called a \textit{shortcutting} of the tree $T$ with \textit{hop-diameter} $h$. From \cite{bodlaender1994,le2025} we know that a good shortcutting can be computed efficiently on paths and trees. Explicitly, we can utilize the following theorem.

\begin{ptheorem}[\cite{le2025, bodlaender1994}]\label{thm:tree-shortcutting}
    Every tree of $n$ nodes admits a shortcutting with hop-diameter $h$ that uses $O(n  \alpha_{h/2 + 1} (n))$ edges, which can be constructed in time $O(n  \alpha_{h/2 + 1} (n))$.
\end{ptheorem}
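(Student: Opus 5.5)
This statement is imported from \cite{le2025, bodlaender1994}, so the plan is to recall the standard recursive construction and check that it gives the claimed bounds. I would first handle \emph{paths}, then lift to trees via a balanced separator decomposition, and finally handle the inverse-Ackermann bookkeeping.

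\emph{Paths.} For a path (a rooted tree in which every ancestor query lies on one root-ward chain), I would use the Alon--Schieber / Chazelle recursion. Partition the path of length $n$ into blocks of size $s=f(n)$, where $f$ is the function whose iterate defines the next level of the Ackermann hierarchy. Designate the block boundaries as hubs. Inside each block, connect every node to its block's two boundary hubs, which costs $O(n)$ edges. Then recursively shortcut the path of $n/s$ hubs with a smaller hop budget, and recursively shortcut each block with the same budget.

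A query from $y$ up to $x$ in different blocks then goes $y\to$ its hub $\to$ (recursive hops among hubs) $\to$ the hub of $x$'s block $\to x$. Removing two hops per level gives the recurrence $E_h(n)=O(n)+\frac{n}{s}E_{h-2}(n/s)\cdot(\ldots)+\frac{n}{s}E_h(s)$. Choosing $s$ appropriately solves this to $E_h(n)=O(n\alpha_{h/2+1}(n))$; the index $h/2$ reflects that each recursion level consumes two hops. All edges point toward the root, so the resulting paths are non-reversing.

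\emph{Trees.} I would generalize by choosing a set $H$ of at most $n/s$ hub nodes, closed under LCA (doubling the count only), such that removing $H$ leaves components of size at most $s$, each adjacent to at most two hubs: its topmost exit hub and at most one below. Inside each component, connect every node to its nearest hub ancestor. Then recurse on the \emph{compressed hub tree}, where each hub's parent is its nearest hub ancestor, and recurse on each component.

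An ancestor pair $(x,y)$ in different components routes through $y$'s nearest hub ancestor, then hub-tree shortcuts, then a hub that is an ancestor of $x$. The last step is the delicate one. The final hop must reach $x$ from a hub \emph{below} $x$, so I would also add edges from each component's bottom hub to all nodes of that component on the path upward, which still costs $O(s)$ per component. Edge counts satisfy the same recurrence as for paths.

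\emph{Main obstacle and construction time.} The main obstacle is getting exactly $\alpha_{h/2+1}$ rather than a worse index while keeping hops non-reversing in the tree case. For this I would follow \cite{le2025}'s accounting of two hops per level. The construction time equals the edge count, since hubs are found by a linear-time bottom-up greedy cut (cut whenever the accumulated subtree size exceeds $s$) and the recursion writes each edge once.
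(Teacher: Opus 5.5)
The paper gives no proof of this statement; it imports it from \cite{le2025, bodlaender1994}. Your sketch reconstructs the standard hub recursion behind those works, and its structural part is sound. Once $H$ is LCA-closed, every component of $T\setminus H$ has at most one hub above it and at most one below it: two hubs below it would have their LCA inside the component. So the route $y\to h_y$, then at most $h-2$ hops in the compressed hub tree up to the unique bottom hub $b$ of $x$'s component, then $b\to x$, is non-reversing and uses at most $h$ hops. One level also costs only $O(n)$ edges, since a component's bottom-hub edges are bounded by its size. The non-reversing issue you list as the main obstacle is therefore already resolved by your bottom-hub edges.

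\textbf{Gap: the counting.} The counting is the entire quantitative content of the theorem, and you defer it to \cite{le2025} after writing a recurrence that would fail.
\begin{enumerate}
\item \emph{The recurrence is misstated.} The hub path (or hub tree) is a \emph{single} instance on $O(n/s)$ nodes. The recurrence must read $E_h(n)\le O(n)+E_{h-2}(O(n/s))+\sum_C E_h(|C|)$, with $|C|\le s$ and $\sum_C|C|\le n$. With your factor $\frac{n}{s}$ in front of $E_{h-2}(n/s)$, that term alone is of order $n^2/s^2$, which is superlinear for the tiny $s$ you need.
\item \emph{The base case is missing.} Take $E_2(n)=O(n\log n)$: in the current piece pick a centroid $c$, add an edge from each descendant of $c$ to $c$ and from $c$ to each ancestor of $c$, and recurse on the components of the piece minus $c$.
\item \emph{The choice of $s$ is not made.} Assume $E_{h-2}(m)=O(m\,\alpha_{h/2}(m))$ and set $s=\alpha_{h/2}(n)$. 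Then $E_{h-2}(O(n/s))=O(n)$, hence $E_h(n)/n\le O(1)+\max_{m\le s}E_h(m)/m$. Unrolling gives $O(n\,\alpha_{h/2}^{*}(n))$, which is $O(n\,\alpha_{h/2+1}(n))$ when each index of $\alpha$ adds one iteration $f\mapsto f^{*}$.
\end{enumerate}
If $\alpha$ is instead indexed so that $\alpha_{k+2}=\alpha_k^{*}$, the same recursion with hypothesis $E_{h-2}(m)=O(m\,\alpha_{h-2}(m))$ gives $O(n\,\alpha_h(n))$, which implies the stated bound.

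The running time satisfies the same recurrence, because hub selection, LCA-closure, nearest-hub pointers and the compressed tree each take linear time per level. With the recurrence corrected and these steps written out, your argument proves the statement.
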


Although implicit in \cite{le2025,bodlaender1994}, given any $x,y$, one can compute an $h$-hop path with shortcut edges in $O(h)$ constant time. Note that here $\alpha_c(n)$ means inverse Ackermann function as in \cite{le2025}. For example, $\alpha_{2} (n) = O(\log n)$ and $\alpha_4(n) = O(\log^*n)$, and so on.

%\benyu{Discuss construction time, and how to find explicit hop-$c$ paths via RMQ if time permits.}

\section{Preparing the connectivity oracle}\label{sec:prep}
In this section we prove the space and preprocessing time of \cref{thm:main}. We start by constructing a strongly $(k, k)$-unbreakable decomposition $\mathcal{T} = (T, \bag)$ of $G$ with adhesion at most $k$ and is regular using \cref{thm:unbreakable-decomp}. The construction takes time $k^{O(k^2)} n + O(m)$.

Next, we need a structure that, given any two tree nodes $x, y \in V(T)$, answers $\lca(x, y)$ and $\dir(x, y)$ queries in constant time. We use the following lemma given in \cite{pilipczuk2021algorithms}, which relies on the data structure for $\lca$ queries presented in \cite{HarelT84}.

\begin{plemma}\label{lemma:lca-dir}
    Given a tree $T$, one can compute in time $O(|V(T)|)$ a data structure that may answer the $\lca(x, y)$ and $\dir(x, y)$ queries in $O(1)$ time.
\end{plemma}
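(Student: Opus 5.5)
The plan has three parts: a DFS labeling, the Harel--Tarjan LCA structure for $\lca$ queries, and a level-ancestor structure for $\dir$ queries. First, root $T$ at $r$ and run one DFS in $O(|V(T)|)$ time. For every node $x$ this records its preorder index $\mathrm{pre}(x)$, the last preorder index $\mathrm{post}(x)$ used inside $T(x)$, and its depth $\depth(x)$. With these labels, the test ``$x$ is an ancestor of $y$'' becomes the $O(1)$ check $\mathrm{pre}(x)\le \mathrm{pre}(y)\le \mathrm{post}(x)$.

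For $\lca(x,y)$ I would invoke the data structure of \cite{HarelT84} as a black box. It is built in linear time and answers lowest common ancestor queries in $O(1)$ time. A simpler self-contained alternative is the Euler tour reduction to range-minimum over depths, using the standard $\pm 1$ RMQ with linear preprocessing.

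For $\dir(x,y)$, where $x$ is a strict ancestor of $y$, note that $\dir(x,y)$ is exactly the ancestor of $y$ at depth $\depth(x)+1$. The query therefore reduces to a level-ancestor query $\mathrm{LA}(y,\depth(x)+1)$. I would use a linear-preprocessing, constant-query level-ancestor structure, such as the ladder decomposition combined with jump pointers and micro-trees (Bender--Farach-Colton), or Dietz's approach.

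If I want to avoid a separate level-ancestor structure, a direct alternative works. Store the children of each node sorted by preorder index. The child of $x$ whose interval $[\mathrm{pre}(z),\mathrm{post}(z)]$ contains $\mathrm{pre}(y)$ is $\dir(x,y)$. Finding it by predecessor search among the children costs more than $O(1)$ in general, so this variant only gives logarithmic query time.

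The main obstacle is achieving strictly $O(1)$ query time for $\dir$ with only $O(|V(T)|)$ preprocessing. The LCA part is standard, and this is the one point where a nontrivial structure is needed. My first attempt would be the ladder-plus-jump-pointer method. Its jump-pointer table costs $O(n\log n)$, which exceeds the linear budget, so I would handle this with macro/micro tree splitting: jump pointers are stored only at the $O(n/\log n)$ macro leaves, and the small micro trees are answered by table lookup. That gives linear preprocessing overall. Since the height of $T$ is $O(\log n)$ here by \cref{thm:unbreakable-decomp}, I could also simply store for each node its full ancestor array. That costs $O(n\log n)$ time, which still fits the paper's overall preprocessing budget if strict linearity fails.
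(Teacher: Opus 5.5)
Your proposal is correct and is essentially the paper's route: the paper simply imports this lemma from \cite{pilipczuk2021algorithms}, which is built on the Harel--Tarjan LCA structure \cite{HarelT84}, and your reduction of $\dir(x,y)$ to the level-ancestor query at depth $\depth(x)+1$, answered by a linear-preprocessing, constant-query structure as in \cite{bender2004level}, fills in the part the paper leaves to the citation. You should drop the ancestor-array fallback, since it is unnecessary and its $O(n\log n)$ cost would not meet the stated $O(|V(T)|)$ bound.
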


\subsection{Torsos and profiles}

We now design data structures that allow quick evaluations of the connectivity between vertices in the adhesions of tree nodes that speed up update and query times.

First, we introduce the notion of torsos. For any $x, y \in V(T)$ where $x$ is a strict ancestor of $y$, we define the \textit{torso graph between $x$ and $y$}, denoted by $\torso(x, y)$, in the following way. The vertex set of $\torso(x, y)$ is $\adh(x) \cup \adh(y)$. Any two vertices $u, v \in V(\torso(x, y))$ are adjacent if and only if there exists a $u$-$v$ path in $G[\cone(x) \setminus \comp(y)]$ whose interior is disjoint from $V(\torso(x, y))$. Note that any torso graph consists of at most $2k$ vertices.

We often need torsos in the update and query phases. To do this quickly, we need a data structure that answers torso queries in constant time. We first show that we can compute all torsos quickly.

\begin{pclaim}\label{claim:computing-all-torsos}
    One can compute in $O(k^3 n \log^2 n + k^6 n \log n)$ time $\torso(x, y)$ for all $x, y \in V(T)$ where $x$ is a strict ancestor of $y$.
\end{pclaim}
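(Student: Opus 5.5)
The plan is to exploit two facts. First, the decomposition from \cref{thm:unbreakable-decomp} has height $O(\log n)$, and $|T|\le n$ by the proof of \cref{lemma:total-bgraph-size}. So there are only $O(n\log n)$ ancestor--descendant pairs $(x,y)$, and each $\torso(x,y)$ is a graph on at most $2k$ vertices. Second, a torso over a long ancestor path can be assembled from torsos over single tree edges. I would first compute every \emph{one-step} torso $\torso(\parent(y),y)$ using vertex-failure connectivity queries. I would then obtain $\torso(x,y)$ for every strict ancestor $x$ of $y$ by composition along the path, going from $y$ upward. Each composition step is a small graph computation whose cost depends only on $k$.

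\textbf{One-step torsos.} Fix a node $p$ and a child $y$. The region $\cone(p)\setminus\comp(y)$ consists of $\bag(p)$ together with $\comp(z)$ for every other child $z$ of $p$. By the regularity argument of \cref{lemma:bgraph-structure}, connectivity through $\comp(z)$ between vertices of $\adh(z)$ is exactly captured by the adhesion clique supported by $z$.

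I would replace each adhesion clique in $\bgraph(p)$ by a single gadget vertex $g_z$ adjacent to all of $\adh(z)$. This keeps the total size within the bounds of \cref{lemma:total-bgraph-size}, namely $O(m+k^2n)$ edges in total, hence $O(k^2 n)$ using $m=O(nk)$. On each modified bag graph I build the oracle of \cref{thm:edge-failure-connectivity-oracle}. Summed over all $p$, this costs $O(k\cdot k^2n\log n)$ space and time, which is at most $O(k^3 n\log^2 n)$.

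For each child $y$ of $p$, I perform one update with failure set $\adh(p)\cup\adh(y)\cup\{g_y\}$. This set has size at most $2k+1$, so the update costs $O(k^4\log n)$, and over all $y$ the total is $O(k^4 n\log n)$, within the claimed $O(k^6 n\log n)$ term. Two torso vertices $u,v$ are then adjacent in $\torso(p,y)$ in either of two cases:
\begin{itemize}
\item $u,v$ are adjacent directly, by a normal edge, or via a common gadget $g_z$ with $z\neq y$;
\item some neighbor of $u$ and some neighbor of $v$ are connected in the failed graph.
\end{itemize}

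\textbf{Composition.} For $x$ a strict ancestor of $p=\parent(y)$, I form the union of $\torso(x,p)$ and $\torso(p,y)$ on the vertex set $\adh(x)\cup\adh(p)\cup\adh(y)$, which has at most $3k$ vertices. I then contract away the vertices of $\adh(p)\setminus(\adh(x)\cup\adh(y))$. Concretely, two surviving vertices are adjacent iff they are joined by a path whose interior lies entirely in the eliminated set.

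Correctness needs one check: a path in $\cone(x)\setminus\comp(y)$ whose interior avoids $\adh(x)\cup\adh(y)$ splits at its visits to $\adh(p)$ into segments, and each segment lies either inside $\comp(p)\setminus\comp(y)$ or outside $\comp(p)$. This holds because $\adh(p)$ separates $\comp(p)$ from the rest of the graph. Computing the contracted adjacency by a BFS on a $3k$-vertex graph from each of $O(k)$ sources costs $O(k^3)$ per pair, or $O(k^6)$ with a naive closure. Over the $O(n\log n)$ pairs this fits into $O(k^6 n\log n)$.

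\textbf{Main obstacle.} The hard step is the ``interior avoids the torso vertices'' test in the one-step case. The endpoints $u,v$ themselves lie in the failure set, and enumerating all neighbors of $u$ in $\bgraph(p)$ separately for every child $y$ would cost too much when $p$ has high degree. The first thing I would try is to precompute, for each vertex $u\in\bag(p)$, a component label of each of its neighbors under the failure set $\adh(p)$ alone. I would then observe that failing the additional $\le k+1$ vertices $\adh(y)\cup\{g_y\}$ can split only the components that touch those vertices. Consequently, for each $y$ only $O(k)$ representative neighbors per torso vertex need to be re-queried, giving $O(k^2)$ queries of cost $O(k)$ each per child, within budget. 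If that fails, the fallback is to give each torso vertex a private pendant copy in the oracle's graph, so that $u$--$v$ adjacency becomes a single connectivity query between the copies.
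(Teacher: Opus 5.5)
Your composition step is sound and is essentially the paper's argument: the paper runs a BFS on the union of all one-step torsos along the $y$--$x$ path, and you eliminate $\adh(p)$ one level at a time. Your gadget $g_z$ is also a clean way to model $\comp(z)$.

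\textbf{The gap is in the one-step torsos.} You do a single update per child $y$, which fails every torso vertex, including the endpoints $u,v$ whose interior-avoiding adjacency you want. You must then decide whether the failed-graph components hit by $N(u)$ and by $N(v)$ intersect, and your resolution does not establish this. Deleting the extra vertices $\adh(y)\cup\{g_y\}$ can shatter a component of $\bgraph(p)\setminus\adh(p)$ into unboundedly many pieces. The neighbors of $u$ and of $v$ may be spread over all of them. Nothing in your argument guarantees that $O(k)$ representatives per torso vertex witness a common piece when one exists.

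Unbreakability does not help, because your failure set has up to $2k+1>k$ vertices. For example, suppose $\bag(p)$ induces $K_{k+1,N}$, which is $(k,k)$-unbreakable, and its small side lies in $\adh(p)\cup\adh(y)$. Then what remains can be $N$ singleton pieces, each adjacent to every small-side torso vertex. There is also a counting problem: the oracle answers only pairwise queries, so grouping $O(k^2)$ representatives into components costs $O(k^4)$ queries, not $O(k^2)$.

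The fallback fails too. A pendant copy of $u$ is isolated once $u$ fails. A twin copy adjacent to $N(u)$ lets paths pass through twins of \emph{other} torso vertices, which violates the interior condition. Avoiding that requires failing those twins as well, so the failure set depends on the pair $(u,v)$.

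\textbf{The missing idea is simply not to fail the endpoints.} For each pair $u,v\in\adh(p)\cup\adh(y)$, update with $(\adh(p)\cup\adh(y)\cup\{g_y\})\setminus\{u,v\}$ and query $u$ versus $v$.
\begin{itemize}
\item Any $u$--$v$ path in that graph can be taken simple. Its interior then avoids $u,v$, and since it uses no failed vertex, it also avoids all other torso vertices and $\comp(y)$.
\item Conversely, every interior-avoiding $u$--$v$ path in $G[\cone(p)\setminus\comp(y)]$ survives the update.
\end{itemize}
This costs $O(k^2)$ updates of $O(k^4\log n)$ each per tree edge, i.e.\ $O(k^6 n\log n)$ overall. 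That is exactly the second term of the claimed bound, and it is what the paper does, with each adhesion edge subdivided instead of your per-child gadget. You already had this slack, since your single-update scheme used only $O(k^4 n\log n)$.
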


\proof We first build $\torso(x, z)$ for every pair $x, z \in V(T)$ where $x$ is the parent of $z$. Note that $V(\torso(x, z)) = \adh(x) \cup \adh(z) \subseteq \bag(x)$. Consider the $\comp(z)$-restricted bag graph of $x$, recall from \cref{lemma:bgraph-structure} that two vertices in $\bag(x)$ are connected in $G[\cone(x) \setminus \comp(z)]$ if and only if they are connected in $\bgraph_{\comp(z)}(x)$. Any pair of vertices in $\torso(x, z)$ are adjacent if and only if there is a path in $\cone(x) \setminus \comp(z)$ that is internally vertex disjoint from $\adh(x) \cup \adh(z)$. Thus to determine the adjacency of vertices in $V(\torso(x, z))$ in $G[\cone(x) \setminus \comp(z)]$, we only need to determine if such path exists in $\bgraph_{\comp(z)}(x)$ for each pair of vertices. We can obtain $\bgraph_{\comp(z)}(x)$ explicitly by removing all $z$-supported adhesion edges in $\bgraph_{\emptyset}(x)$ since two vertices $u, v \in \adh(z)$ are connected in $G[\cone(z) \setminus \comp(z)]$ if and only if they are adjacent in $G[\adh(z)]$, and these adjacencies are represented by the normal edges in $\bgraph_{\comp(z)}(x)$.

These torsos are computed in the following way. For each $x \in V(T)$, consider a variant of $\bgraph_{\emptyset}(x)$ where for every adhesion edge $(u, v)$, we delete it and replace it with the edges $(u, w), (w, v)$ where $w$ is a new vertex. There are at most $k^2 \cdot |\children(x)|$ adhesion edges in $\bgraph_{\emptyset}(x)$ since each child $z$ of $x$ has adhesion of size at most $k$ thus supports at most $k^2$ adhesion edges. Thus the number of vertices in this variant is at most $|V(\bgraph_{\emptyset}(x))| + k^2 \cdot |\children(x)|$ and it has at most $|E(\bgraph_{\emptyset}(x))| + k^2 \cdot |\children(x)|$ edges. We then use \cref{thm:edge-failure-connectivity-oracle} to construct an vertex-failure connectivity oracle on this variant of $\bgraph_{\emptyset}(x)$ for each $x \in V(T)$ and set $d = 2k= O(k)$.

To build $\torso(x, z)$ we first consider each pair of vertices $u, v \in \adh(z)$. Suppose $w$ is the added vertex in the variant where $(u, w), (w, v)$ replaced the $z$-supported adhesion edge $(u, v)$. To decide if any such pair $u, v$ are adjacent in $\torso(x, z)$, we update the oracle of $x$ by giving the set $D = (\adh(x) \cup \adh(z) \cup \{w\}) \setminus \{u, v\}$ of failed vertices, and query if $u, v$ are still connected after the update. We add the edge $(u, v)$ to $\torso(x, z)$ if they are still connected. Note that this is correct because the query outputs connected if and only if there is a $u$-$v$ path in $\bgraph_{\comp(z)}(x)$ that is internally disjoint from $\adh(x) \cup \adh(z)$ as we deleted all vertices in $\adh(x) \cup \adh(z)$ and all $z$-supported adhesion edges by also deleting $w$. We do a similar process for the cases of $u \in \adh(z), v \in \adh(x)$ and $u, v \in \adh(x)$ with the only difference being setting $D = (\adh(x) \cup \adh(z)) \setminus \{u, v\}$ instead as there is no such $w$ in these cases. Then we have correctly computed $\torso(x, z)$. Additionally, the updates are valid since $D \leq 2k$ for all updates.

We now analyze the runtime of this procedure. Initializing the oracle for every variant of $\bgraph_{\emptyset}(x)$ by \cref{thm:edge-failure-connectivity-oracle} in total takes time:
\[\sum_{x \in V(T)} 2k \cdot (|E(\bgraph_{\emptyset}(x))| + k^2 \cdot |\children(x)|) \cdot \log (|V(\bgraph_{\emptyset}(x))| + k^2 \cdot |\children(x)|)\]
\[\leq 2k \cdot \sum_{x \in V(T)} (|E(\bgraph_{\emptyset}(x))| + k^2 \cdot |\children(x)|) \cdot O(\log n)\]
\[= O(k \log n) \cdot (\sum_{x \in V(T)} |E(\bgraph_{\emptyset}(x))| + k^2 \cdot \sum_{x \in V(T)} |\children(x)|)\]
\[\leq O(k \log n) \cdot ((m + k^2 n) + k^2 n) = O(k^3 n \log n)\]
\\
Each update step takes at most $O(k^4 \log n)$ time and each query step takes $O(k)$ time. There are $n$ such $\torso(x, z)$, and to build each we perform an update followed by a query for $k^2$ times, so in total these steps take time $O(k^6 n \log n)$. Thus we can construct all such $\torso(x, z)$ in $O(k^6 n \log n)$ time.

Now we build all $\torso(x_{\ell}, x_1)$ where $x_{\ell}$ is a strict ancestor of $x_1$ but $x_{\ell} \neq \parent(x_1)$. Let $x_1,x_2,..., x_{\ell}$ be the $x_1$-$x_{\ell}$ path in $T$. We construct the graph $H$ on the vertex set $\bigcup_{i = 1}^{\ell - 1} V(\torso(x_{i + 1}, x_i))$ and the edge set $\bigcup_{i = 1}^{\ell - 1} E(\torso(x_{i + 1}, x_i))$. Note that any $u, v \in V(\torso(x_{\ell}, x_1))$ have a path in $G[\cone(x_{\ell}) \setminus \comp(x_1)]$ that is interior disjoint from $\adh(x_1) \cup \adh(x_{\ell})$ if and only if such a path exists in $H$ since each $\torso(x_{i + 1}, x_i)$ maps the connectivity within $G[\cone(x_{i + 1}) \setminus \comp(x_i)]$ and any path that goes through $\torso(x_i, x_{i - 1})$ and $\torso(x_{i + 1}, x_i)$ must pass some vertex in $\adh(x_i)$. This way we can construct $\torso(x_{\ell}, x_1)$ by performing BFS in $H$ starting from each $v \in V(\torso(x_{\ell}, x_1))$, avoiding other vertices in $\adh(x_{\ell}) \cup \adh(x_1)$, to determine which other vertices $u \in V(\torso(x_{\ell}, x_1))$ are reachable via a path that is internally disjoint from $\adh(x_{\ell}) \cup \adh(x_1)$, and add the edge $(u, v)$ for all such reachable $u$.

Since each torso has at most $k^2$ edges and $\ell \leq O(\log n)$ by the depth of $T$, each BFS step takes $O(|E(H)|) \leq O(k^2 \log n)$ time. There are at most $n \log n$ pairs of $x_{\ell}, x_1$ and to construct each $\torso(x_{\ell}, x_1)$ we need to perform $O(k)$ BFS steps, so we can construct all these torsos in time $n \log n \cdot O(k) \cdot O(k^2 \log n) = O(k^3 n \log^2n)$. Combining with the previous procedure, we can build all torso graphs in $O(k^3 n \log^2 n + k^6 n \log n)$ time, completing the proof.\\

Now we design the following structure that outputs torsos in constant time when being queried.

\begin{plemma}\label{lemma:torso-query}
    There is a data structure that uses $2^{O(k^2)} + O(k^2n \alpha_c(n))$ space, initializes in $O(k^3 n \log^2 n + k^6 n \log n)$ time, and answers the query in $O(1)$ time: given $x, y \in V(T)$ where $x$ is a strict ancestor of $y$, outputs a pointer to the graph $\torso(x, y)$.
\end{plemma}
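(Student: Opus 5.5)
We plan to combine \cref{claim:computing-all-torsos} with the shortcutting of \cref{thm:tree-shortcutting}, plus a lookup table indexed by ``torso types'' that explains the additive $2^{O(k^2)}$ term. First, fix a constant hop-diameter $h$ (so $c = h/2+1$) and compute a shortcutting of $T$ with $O(n\alpha_c(n))$ shortcut edges in $O(n\alpha_c(n))$ time. Every shortcut edge $(x,y)$, with $x$ a strict ancestor of $y$, gets a stored copy of $\torso(x,y)$, taken from the output of \cref{claim:computing-all-torsos}; tree edges are included as shortcut edges. Each torso has at most $2k$ vertices and $O(k^2)$ edges, so this uses $O(k^2 n\alpha_c(n))$ space, and all torsos are available within the stated initialization time. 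To stay within space during initialization, we will run the procedure of \cref{claim:computing-all-torsos} while discarding torsos not on shortcut edges. Its path-BFS step computes each $\torso(x_\ell,x_1)$ independently from parent--child torsos, so only the ones we keep need to be retained.

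Second, we use the composition observation. For $x$ an ancestor of $y$, an ancestor of $z$, the graph $\torso(x,z)$ is determined by $\torso(x,y)$ and $\torso(y,z)$. Take the union graph on $\adh(x)\cup\adh(y)\cup\adh(z)$. Then $u,v\in\adh(x)\cup\adh(z)$ are adjacent in $\torso(x,z)$ iff they are joined by a path in the union whose interior avoids $\adh(x)\cup\adh(z)$. This follows by the same argument as in the proof of \cref{claim:computing-all-torsos}: any path crossing between $G[\cone(x)\setminus\comp(y)]$ and $G[\cone(y)\setminus\comp(z)]$ passes through $\adh(y)$. Given a query $(x,y)$, we therefore find the $h$-hop path $x=w_0,w_1,\dots,w_{h'}=y$ of shortcut edges in $O(h)$ time and compose the $h'\le h$ stored torsos. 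Done naively, this costs $\poly(k)$ time rather than $O(1)$, and the result would have to be materialized, whereas the lemma asks for a pointer.

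The main obstacle is achieving $O(1)$ query time with a pointer output, and the $2^{O(k^2)}$ term suggests using canonical labelled graphs. Each vertex of $\adh(w)$ is identified by its rank in a fixed ordering of $\adh(w)$, and each torso is encoded as a graph on at most $2k$ labelled slots. For each consecutive pair on the path, we also record the interface: which slots of $\adh(w_{i})$, as seen in the two neighbouring torsos, coincide, and which vertices of $\adh(x)$ and $\adh(y)$ coincide. This is a constant-size index storable per shortcut edge. There are only $2^{O(k^2)}$ possible labelled torsos on $2k$ slots. We precompute a global table of all of them, each stored once, and a composition table mapping (type of $\torso(x,w)$, type of $\torso(w,y)$, interface) to the type of the result, which is also $2^{O(k^2)}$ entries. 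Each stored torso is then replaced by its type index.

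A query then performs $h=O(1)$ table lookups along the shortcut path and returns a pointer to the canonical graph in the global table. The caller translates slots back to actual vertices via the orderings of $\adh(x)$ and $\adh(y)$, which can be stored per node in $O(kn)$ space. The remaining detail to check is that the interface between $\adh(x)$, $\adh(y)$, and intermediate adhesions can be encoded in $O(1)$ words per query. I would try to derive it from slot overlaps stored on each shortcut edge, using the fact that adhesions along an ancestor path intersect in a way determined by the endpoints.
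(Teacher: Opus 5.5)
Your proposal is essentially the paper's proof: you keep (a reference into a global $2^{O(k^2)}$-size universe of canonical graphs for) the torso of each of the $O(n\alpha_c(n))$ shortcut edges, precompute a composition table indexed by pairs of torso types, and answer a query with $h=O(1)$ lookups along the shortcut path, returning a pointer into that universe. The interface detail you leave open is one the paper passes over silently (its table $U$ is indexed by the two graphs alone), and it closes as you suspect: with a fixed ordering of every $\adh(w)$, the $\adh(w)$-slots of $\torso(x,w)$ and $\torso(w,y)$ agree automatically, and if each type also records the partial matching given by $\adh(x)\cap\adh(w)$ between its two slot blocks, then $\adh(x)\cap\adh(y)\subseteq\adh(w)$ (the bags containing a fixed vertex form a subtree) makes the composite's matching the composition of the two matchings, so the composite type depends only on the two input types and no per-query interface data is needed.
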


%\benyu{I still believe the total space here is $O(nk\alpha_{c/2+1}(n))$. Does each pointer here require $\Omega(k)$ space?}

\proof Each $\torso(x, y)$ has at most $2k = O(k)$ vertices so storing it takes $O(k^2)$ space. Let $S_t$ be the set of all possible graphs on $2k$ vertices, then $|S_t| = 2^{O(k^2)}$, and we can store $S_t$ using $O(k^2) \cdot 2^{O(k^2)}$ space. We use \cref{claim:computing-all-torsos} to compute all possible torsos. For each computed $\torso(x, y)$ we keep a pointer to $G_t \in S_t$ where $G_t = \torso(x, y)$ for now. 

Consider constructing a shortcutting of $T$ using $O(n \alpha_c(n))$ edges with hop-diameter $2(c + 1)$ by calling \cref{thm:tree-shortcutting}. We aim to only store a pointer to $\torso(x, y)$ for any $x, y \in V(T)$ where the edge $(x, y)$ is in the shortcutting. This implies that we only need to store $O(n \alpha_{c}(n))$ pointers of size $O(k^2)$. To achieve constant time query, we build and store a two dimensional look-up table $U$ indexed by pairs of torsos. For each computed $\torso(x, y) = G_{t, 1}$ and $\torso(y, z) = G_{t, 2}$, we store the pointer to $G_t$ in the entry $U(G_{t, 1}, G_{t, 2})$ where $G_t = \torso(x, z)$. Then this table has at most $(2^{O(k^2)})^2 = 2^{O(k^2)}$ entries. Given any torso query $x, y \in V(T)$ where $x$ is a strict ancestor of $y$, suppose $y, x_1, x_2,...,x_h, x$ is the $y$-$x$ path in the shortcutting of $T$, so $h \leq 2(c + 1)$. We can obtain $\torso(x_2, y)$ by looking up $U(\torso(x_2, x_1), \torso(x_1, y))$, and then we can obtain $\torso(x_3, y)$ by looking up $U(\torso(x_3, x_2), \torso(x_2, y))$. Thus we can get $\torso(x, y)$ by repeating this step $h$ times along the path. Each look-up step takes constant time, so obtaining $\torso(x, y)$ takes $O(h) = O(1)$ time.

Therefore, the structure uses $2^{O(k^2)} + O(k^2n  \alpha_c(n))$ space to store all possible torsos, the table $U$, and the pointers. The initialization time is dominated by computing all torsos, which is $O(k^3 n \log^2 n + k^6 n \log n)$, and torso queries can be answered in $O(1)$ time, finishing the proof.\\

Next we introduce profiles, which maintain the connectivity of an adhesion within its subtree. For a node $x \in V(T)$, we define the \textit{profile} of $x$, denoted by $\profile(x)$, which is a graph on the vertex set $\adh(x)$. Two vertices $u, v \in V(\profile(x))$ are adjacent if and only if they are connected in $G[\cone(x)] \setminus S$ where $S$ is the set of failed vertices. Note that by this definition every connected component in each profile must be a clique as connectivity is transitive. As profiles are dependent on the set $S$ of failed vertices, we do not build profiles directly in this section, but they will be constructed and used extensively in the update and query phases. In this section, we design the following data structure to compute profiles quickly from existing profiles and torsos.

\begin{plemma}\label{lemma:combining-profiles-with-torsos}
    There is a data structure that uses $O(k^2) \cdot 2^{O(k^2)}$ space, initializes in time $O(k^2) \cdot 2^{O(k^2)}$, and answers the query in $O(1)$ time: given $\profile(y)$ and $\torso(x, y)$, where $x$ is a strict ancestor of $y$ and $\comp(x) \setminus \cone(y)$ is disjoint from $S$, output the following.
    \begin{enumerate}
        \item The graph $\profile(x)$.
        \item A coloring $\col: \adh(x) \cup \adh(y) \rightarrow [O(k)]$ where $\col(u) = \col(v)$ if and only if $u, v \in \adh(x) \cup \adh(y)$ are connected in $G[\cone(x)] \setminus S$.
        \item A inverse coloring $\invcol: [O(k)] \rightarrow \adh(x)$ such that $\invcol(i) = v$ where $v \in \adh(x)$ is an arbitrary vertex with $\col(v) = i$.
    \end{enumerate}
\end{plemma}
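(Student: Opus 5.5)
The plan is a precomputed lookup table indexed by the pair $(\profile(y), \torso(x,y))$, in the same spirit as the table $U$ of \cref{lemma:torso-query}. The key fact is that, when $\comp(x)\setminus\cone(y)$ avoids $S$, the connectivity among $\adh(x)\cup\adh(y)$ in $G[\cone(x)]\setminus S$ is determined purely by the union graph $H=\torso(x,y)\cup\profile(y)$ on vertex set $\adh(x)\cup\adh(y)$, with the vertices of $S$ removed. Since $|\adh(x)\cup\adh(y)|\le 2k$, we fix a canonical labeling of this vertex set by $[2k]$ (for instance by the order in which the stored torso lists its vertices). Membership in $\adh(x)$ versus $\adh(y)$ and membership in $S$ are encoded as $O(k)$ extra bits. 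Thus each query input is an element of a set of size $2^{O(k^2)}\cdot 2^{O(k)}=2^{O(k^2)}$.

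\emph{Correctness of the union graph.} Take $u,v\in\adh(x)\cup\adh(y)\setminus S$ connected in $G[\cone(x)]\setminus S$, and cut a connecting path at its visits to $\adh(x)\cup\adh(y)$. Each segment has interior either in $\comp(y)$ or in $\cone(x)\setminus\cone(y)$.

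\begin{itemize}
\item A segment with interior in $\comp(y)$ has both endpoints in $\adh(y)$ and lies in $G[\cone(y)]\setminus S$. It is therefore an edge of $\profile(y)$.
\item A segment with interior in $\cone(x)\setminus\cone(y)$ has interior avoiding $\adh(x)\cup\adh(y)$, so it lies in $G[\cone(x)\setminus\comp(y)]$. It is therefore an edge of $\torso(x,y)$.
\end{itemize}

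Here the hypothesis matters: a torso edge is witnessed by a path whose interior lies in $\comp(x)\setminus\cone(y)$, which is disjoint from $S$. This is what makes torso edges valid in $G\setminus S$ and gives the converse direction, where every edge of $H$ between non-failed vertices lifts to a path in $G[\cone(x)]\setminus S$.

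The only vertices of $S$ that can appear are in $\adh(x)\cup\adh(y)$. The profile already excludes paths through $S$, and torso edges incident to failed vertices must be deleted. With that, connectivity in $H\setminus S$ equals connectivity in $G[\cone(x)]\setminus S$ restricted to these vertices.

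\emph{Building the table.} For each of the $2^{O(k^2)}$ possible inputs, compute the connected components of $H\setminus S$ by BFS in $O(k^2)$ time. From the components we read off:
\begin{itemize}
\item $\profile(x)$, the clique on each component restricted to $\adh(x)$;
\item $\col$, the component index, using at most $2k$ colors;
\item $\invcol$, which maps each color to one representative in $\adh(x)$ when one exists.
\end{itemize}
Each output has size $O(k^2)$. We store a pointer to it in the table entry, giving space and preprocessing time $O(k^2)\cdot 2^{O(k^2)}$.

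A query hashes the pair of pointers (to the stored profile and torso graphs) together with the $S$-mask into a table index and returns in $O(1)$ time. This assumes profiles are also represented as pointers into a canonical set of all graphs on $\le k$ vertices, as is done for torsos.

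\emph{Main obstacle.} The delicate step is the path-decomposition argument: we must verify that no path escapes into $\comp(x)\setminus\cone(y)$ through $S$ in a way that the torso fails to capture, and handle failed vertices lying in the adhesions correctly. I would do this step first, carefully. A secondary issue is representing the $O(k)$-bit $S$-mask and vertex identities in $O(1)$ words, so that indexing is constant time in the word-RAM model.
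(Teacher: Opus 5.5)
Your construction is the paper's: a table over (torso, profile) pairs, each entry filled by BFS on $\torso(x,y)\cup\profile(y)$, with $\profile(x)$, $\col$ and $\invcol$ read off the components. Two of your additions are worth keeping.

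First, your path-splitting argument is the justification the paper only asserts ``by construction''. Segments with interior in $\comp(y)$ give profile edges. The remaining segments give torso edges whose witnesses have interior in $\comp(x)\setminus\cone(y)$, and so avoid $S$.

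Second, your $S$-mask is not cosmetic. The paper indexes its table by $(\torso(x,y),\profile(y))$ alone. That pair does not determine the answer when $S$ meets $\adh(x)\cup\adh(y)$. The hypothesis allows this, and it happens in the paper's own uses: in Claim~\ref{claim:building-affected-child-profile}, a failed vertex of $\mrg(x)$ may lie in $\adh(z)$. For example, take:
\begin{itemize}
\item $\adh(x)=\{a,b,w\}$ and $\adh(y)=\{c_1,c_2\}$;
\item $\comp(y)=\{f\}$ with $f\sim c_1,c_2$;
\item $\comp(x)\setminus\cone(y)=\{p_1,p_2\}$ with $p_1\sim a,w,c_1$ and $p_2\sim b,w,c_2$.
\end{itemize}
This is regular, and trivially unbreakable for $k\ge 7$. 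For $S=\{f\}$ and $S=\{f,w\}$ the inputs coincide. In both, $\profile(y)$ is edgeless, and the static torso has edges $a$--$w$ and $w$--$b$ but no $a$--$b$ edge. Yet $a,b$ are connected in $G[\cone(x)]\setminus S$ only in the first case. The paper's union graph routes through $w$ and is wrong in the second.

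What your proposal leaves implicit is where the mask comes from. It is not one of the lemma's inputs. Obtaining it means checking up to $2k$ vertices of $\adh(x)\cup\adh(y)$ against the failure labels, which takes $O(k)$, not $O(1)$. By the example, some such $S$-dependent information beyond the two stated inputs is genuinely needed, so the lemma cannot hold verbatim. What you actually prove is the lemma with the mask as an extra input, answered in $O(1)$ once the mask is available. You should state this and check that the callers can afford it.

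They can:
\begin{itemize}
\item The $O(k)$ calls along the chain in Lemma~\ref{lemma:query-case1} involve only pairs $(\dir(x_i,x_{i-1}),x_{i-1})$ that are fixed by the update. Their outputs, including $\col$ and $\invcol$, can be computed and stored during the update in $O(k^2)$ total time.
\item Every other call happens $O(1)$ times per affected child or per query vertex. This covers Claim~\ref{claim:building-affected-child-profile} during the update and Cases 3 and 4 of the query. An $O(k)$ mask computation fits those budgets.
\end{itemize}
The static membership bits, by contrast, can be folded into the canonical torso representation at preprocessing. This requires each adhesion to be labeled in a fixed global order, e.g.\ by vertex identifier, so that the bits determine how the labels of $\profile(y)$ embed into those of the torso.
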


\proof First note that $|V(\profile(x))| \leq k$ for any $x \in V(T)$, so $|E(\profile(x))| \leq O(k^2)$. There are $2^{k^2}$ possible graphs on $k$ vertices, we compute and store them explicitly in $k^2 \cdot 2^{k^2}$ time and space. Recall that each torso has at most $2k$ vertices, so storing all possible torsos take $O(k^2) \cdot 2^{O(k^2)}$ time and space.

Let $S_t, S_p$ be the set of all torsos and profiles. Next, we construct a 2-dimensional look-up table $U$, indexed by torsos and profiles. For each $G_t \in S_t$ and $G_p \in S_p$ we compute entry be $U(G_t, G_p)$ in the following way. Construct graph $H$ on the vertex set $V(G_t) \cup V(G_p)$ and the edge set $E(G_t) \cup E(G_p)$. Then, perform BFS over $H$ to identify its connected components. We construct the graph $H'$ on vertex set $V(H') =  V(G_t) \setminus V(G_p)$, and each pair of vertices $u, v$ in this vertex set are adjacent if and only if they are in the same connected component in $H$. We assign an arbitrary coloring $\col: V(G_t) \rightarrow [O(|V(G_t)|)]$ such that only vertices in the same connected component in $H$ have the same color. For each color $i$ used, we assign $\invcol(i) = v$ where $v$ is an arbitrary vertex $v \in V(H')$ such that $\col(v) = i$ (or $\invcol(i) = \emptyset$ if such $v$ does not exist). Finally we set the entry $U(G_t, G_p) = (H', \col, \invcol)$.

It is clear that computing one entry takes $O(k^2)$ time. Suppose that on a valid query $\torso(x, y)$ and $\profile(y)$ we have the entry $U(\torso(x, y), \profile(y)) = (H',\col, \invcol)$. By construction, $H'$ has vertex set exactly $\adh(x) = V(\profile(x))$, and two vertices $u, v \in \adh(x)$ are adjacent if and only if they are connected in $G[\cone(x)] \setminus S$, thus $H' = \profile(x)$. The coloring $\col$ and inverse coloring $\invcol$ are also correct by the same argument.

$U$ has $|S_t| \cdot |S_p| = 2^{O(k^2)} \cdot 2^{k^2} = 2^{O(k^2)}$ entries, each storing a graph of at most $k$ vertices that takes $O(k^2)$ time to compute along with a coloring and inverse coloring linear in its size. Thus, constructing the table $U$ takes $O(k^2) \cdot 2^{O(k^2)} = O(k^2) \cdot 2^{O(k^2)}$ time and space. On query  $\torso(x, y)$ and $\profile(y)$, the structure outputs $U(\torso(x, y), \profile(y)) = (\profile(x), \col, \invcol)$ by looking up the table in constant time. This completes the proof.

\subsection{Patch sets, processed bag graphs, and neighbor sets}

To maintain and determine connectivity in update and query, the data structure relies heavily on bag graphs. More specifically, for every node $x \in V(T)$, we store a version of $\bgraph_{\emptyset}(x)$ that is preprocessed, along with the \textit{patch set} of $x$, denoted by $\pset(x)$. Each $\pset(x)$ contain \textit{patches}, where a \textit{patch} $P \subseteq \bag(x)$ is simply a subset of vertices. We add patches to $\pset(x)$ in the following way. Start by choosing an arbitrary vertex $v \in \bag(x)$ and perform BFS in $\bgraph_{\emptyset}(x)$ starting from $v$ until $k + 1$ vertices, including $v$, are visited. Let $P_1$ be a set that contains exactly these $k + 1$ visited vertices, remove $P_1$ from $\bgraph_{\emptyset}(x)$, include $P_1$ in $\pset(x)$, and repeat this process in $\bgraph_{\emptyset}(x) \setminus P_1$. This is repeated until in $\bgraph_{\emptyset}(x) \setminus (\bigcup_{P_i \in \pset(x)} P_i)$ there is no connected component of size at least $k + 1$. Finally, for any remaining connected component $C$, we include $C$ as one patch in $\pset(x)$.

We say a patch $P \in \pset(x)$ is \textit{big} if $|P| = k + 1$, and we say that it is \textit{small} otherwise. Moreover, we call two patches $P_i, P_j \in \pset(x)$ \textit{adjacent} to each other if and only if there exist some $u \in P_i$ and $v \in P_j$ such that $u$ and $v$ are adjacent in $\bgraph_{\emptyset}(x)$. From the construction process, the following structural claim is clear.

\begin{pclaim}\label{claim:pset-structure}
    For any $x \in V(T)$ and its computed $\pset(x)$, the following properties must hold.
    \begin{enumerate}
        \item All patches in $\pset(x)$ are disjoint, and $\bigcup_{P_i \in \pset(x)} P_i = \bag(x)$.
        \item Any small patch $P \in \pset(x)$ can only be adjacent to big patches.
    \end{enumerate}
\end{pclaim}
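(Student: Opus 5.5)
The argument follows the construction procedure directly. Property 1 and property 2 get separate treatments, and the only care needed is in making precise what ``the process terminates'' guarantees.

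For property 1, I track the set $R_j \subseteq \bag(x)$ of vertices not yet assigned to any patch after the $j$-th big patch is removed. Each big patch $P_j$ is chosen as a set of visited vertices inside the current graph $\bgraph_{\emptyset}(x) \setminus (P_1 \cup \dots \cup P_{j-1})$. So $P_j \subseteq R_{j-1}$, which makes it disjoint from all earlier patches, and then $R_j = R_{j-1} \setminus P_j$. Every step removes $k+1 \geq 1$ vertices, so the loop halts after finitely many rounds, say with remaining set $R$. The small patches are exactly the connected components of $\bgraph_{\emptyset}(x)[R]$. These components partition $R$ and are disjoint from the big patches. Hence all patches are pairwise disjoint, and their union is $(\bag(x) \setminus R) \cup R = \bag(x)$.

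For property 2, let $P$ and $P'$ be two distinct small patches, and suppose for contradiction that they are adjacent. Then there are $u \in P$ and $v \in P'$ with $(u,v)$ an edge of $\bgraph_{\emptyset}(x)$. Both $u$ and $v$ lie in $R$, so this edge survives in the induced subgraph $\bgraph_{\emptyset}(x)[R] = \bgraph_{\emptyset}(x) \setminus \bigcup_i P_i$. Therefore $u$ and $v$ are in the same connected component of that graph. But $P$ and $P'$ are distinct components of the same graph, a contradiction. So a small patch can only be adjacent to big patches; adjacency of a patch with itself is excluded, since by definition we only speak of two distinct patches.

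The one point needing care is that each BFS in the loop really does collect $k+1$ vertices. This holds because the loop continues only while some component of the remaining graph has at least $k+1$ vertices, and the BFS is started inside such a component. I read ``choosing an arbitrary vertex'' as choosing a vertex from such a component. Without that reading, a BFS could exhaust a smaller component, and that component would then become a patch that is neither big nor a final leftover component. If one insists on the literal reading, the fix is to treat any component exhausted by the BFS as a leftover component placed into $R$. This does not affect either property, since every edge leaving such a component goes to vertices already removed in big patches.
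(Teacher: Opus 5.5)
Your proof is correct and is essentially the paper's argument. The paper only states that the claim is clear from the construction, and you make that explicit: patches are carved out of a shrinking remaining set, and two distinct small patches are distinct connected components of $\bgraph_{\emptyset}(x)\setminus\bigcup_i P_i$, so no edge of $\bgraph_{\emptyset}(x)$ joins them. You also point out that the BFS must start in a component of size at least $k+1$, or else an exhausted component must be set aside as a leftover component; this is a fair clarification of something the paper leaves implicit.
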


We then proceed with a simple processing of bag graphs. For each vertex $v \in \bag(x)$, store in $\bgraph_{\emptyset}(x)$ a pointer along with $v$ that points to $P_v$, where $P_v \in \pset(x)$ is the patch such that $v \in P_v$.

Along side this, we also store some neighbor sets of the patches. Let $N(P)$ be the set of at most $k + 1$ vertices $v \in \bag(x)$ where $v \notin P$ and there exists a normal edge $(u, v)$ in $\bgraph_{\emptyset}(x)$ such that $u \in P$ (if there are more than $k + 1$ such $v$, choosing an arbitrary subset of $k + 1$ of them suffices). We call $N(P)$ the set of \textit{normal neighbors} of $P$. 

Moreover, for any patch $P \in \pset(x)$, let $N_{\adh}(P)$ be the set of neighbors $v$ such that there is an adhesion edge $(u, v)$ in $\bgraph_\emptyset(x)$ where $u \in P$ and $v \notin P$. We call this the set of \textit{adhesion neighbors} of $P$. Each set $N_{\adh}(P)$ stores at most $k^3 + 1$ such vertices, and we implement it as a doubly-linked list such that deletion can be done in constant time. We store along with each vertex $v \in N_{\adh}(P)$ a counter indicating the number of adhesion edges $(u, v)$ where $u \in P$.

Now, suppose $P_1,...,P_{\ell} \in \pset(x)$ are all the small patches. We store with each $P_i \in \{P_1,...,P_{\ell} \}$ the set $N(P_i)$ and $N_{\adh}(P_i)$. Furthermore, we claim that we can compute and store all patch sets, processed bag graphs, and neighbor sets of small patches in time and space linear to the size of $G$.

\begin{plemma}\label{lemma:pset-bgraph-time-and-space}
    One can in time $O(k^3n)$ and space $O(k^2 n)$ compute and store the following.
    \begin{enumerate}
        \item For every $x \in V(T)$, the patch set $\pset(x)$ and the processed $\bgraph_{\emptyset}(x)$.
        \item For every $x$ and each small patch $P \in \pset(x)$, the neighbor sets $N(P)$ and $N_{\adh}(P)$.
    \end{enumerate}
\end{plemma}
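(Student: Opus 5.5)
We handle each node $x\in V(T)$ separately and then sum over all nodes using \cref{lemma:total-bgraph-size}, which with $q=k$ and $m=O(nk)$ gives $\sum_x |V(\bgraph_\emptyset(x))| = O(kn)$ and $\sum_x |E(\bgraph_\emptyset(x))| = O(m+k^2n)=O(k^2n)$. The whole task is to show that the per-node work is linear in the size of $\bgraph_\emptyset(x)$, up to one extra factor of $k$ in the time bound.

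\textbf{Patch sets.} We build $\bgraph_\emptyset(x)$ explicitly, at cost proportional to its size. We keep a global ``alive'' mark on each vertex and a pointer that scans $\bag(x)$ for the next alive start vertex. From each start vertex $v$ we run a BFS over alive vertices. If it reaches $k+1$ vertices, we stop and output them as a big patch. Otherwise it exhausts $v$'s whole remaining component, which has size at most $k$.

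Such a small component can never later grow, because the set of alive vertices only shrinks. So it is safe to declare it a small patch immediately instead of deferring it to the end. This yields exactly the patches of the paper's process, up to the order in which patches are chosen, and \cref{claim:pset-structure} still holds.

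The difficulty is that a BFS stopped early may scan many edges incident to a visited vertex. We avoid this by charging. Each visited vertex is removed right after its BFS finishes, so every vertex is visited as a BFS member exactly once. Within one BFS we scan each member's full adjacency list at most once, so an edge is scanned only from its endpoints while they are being finalized. The total cost is therefore $O(|V|+|E|)$ of $\bgraph_\emptyset(x)$.

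While forming each patch we set the pointer $P_v$ for every member $v$, which costs $O(1)$ per vertex. In total this gives $O(k^2 n)$ time and space for item~1.

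\textbf{Neighbor sets.} For each small patch $P$ (of size at most $k$) we scan every normal edge leaving $P$ and collect distinct outside endpoints. We use a timestamped mark array for deduplication and stop once $k+1$ are found. Stopping early is optional; the full scan is already paid for by edge charging.

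For $N_{\adh}(P)$ we scan every adhesion edge incident to $P$, which again costs the total edge count. For each outside endpoint we increment a counter, and we append the endpoint to a doubly-linked list on its first occurrence. The number of distinct adhesion neighbors is bounded by $|P|\cdot k\cdot(\text{children}) $ only in a crude sense. However, the stored list never exceeds the number of adhesion edges incident to $P$, and that sum is $O(k^2 n)$ over all $x$, which suffices for the space bound. The paper's $k^3+1$ cap can be enforced by truncating the list.

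Each $N(P)$ stores $O(k)$ items. Since patches are disjoint within a bag, there are at most $O(kn)$ patches in total. Hence the neighbor sets use $O(k^2n)$ space.

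\textbf{Main obstacle.} The main difficulty is the accounting that makes the truncated BFS and the neighbor scans linear. The plan is to charge every scanned edge to a vertex that is being permanently removed, or to the small patch that owns it. This gives the $O(k^2n)$ space bound, and the time bound of $O(k^3n)$ follows with the looser per-edge/per-neighbor bookkeeping, such as initializing the counters of $N_{\adh}(P)$ of size up to $k^3$.
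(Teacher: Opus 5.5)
Your proposal is correct and follows essentially the same route as the paper. The paper likewise uses truncated BFS with deletion, so each edge of $\bgraph_\emptyset(x)$ is scanned $O(1)$ times; it stores one patch pointer per vertex; and it collects neighbor sets by scanning incident normal and adhesion edges, with deduplication, counters and the $k^3+1$ cap.

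The differences are minor. You scan all normal edges leaving a small patch, charged to edges, which gives $O(k^2n)$ total time; the paper instead visits $2k+1$ neighbors per vertex, costing $O(k^2)$ per small patch and $O(k^3n)$ overall. Your closing remark about initializing size-$k^3$ counters is unnecessary and could cost up to $O(k^4n)$ if done literally. Your lazy linked-list construction already stays within $O(k^2n)$ time and space.
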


\proof Recall from \cref{lemma:total-bgraph-size} that in $\mathcal{T}$ the total size of all bags is at most $O(qn) = O(kn)$ and the total size of all bag graphs is at most $O(m + k^2n)$. From \cref{claim:pset-structure} we know that for a $\pset(x)$ all patches are disjoint and cover $\bag(x)$, thus $|\bigcup_{P \in \pset(x)} P| = |\bag(x)|$, so the total space to store all patch sets would be the same as the total size of all bags, which is $O(kn)$. We compute $\pset(x)$ by performing BFS procedures in $\bgraph_{\emptyset}(x)$ such that each edge is visited only once, so the total time required to computed all patch sets would be $\sum_{x \in V(T)} |E(\bgraph_{\emptyset}(x))| = O(m + k^2 n)$. Moreover, when processing each $\bgraph_{\emptyset}(x)$, we only add one pointer for each vertex, so the space to store all processed bag graphs would still be $O(m + k^2n) = O(k^2n)$. This process can be done while we are computing patch sets.

In each $\pset(x)$ there can be at most $|\bag(x)|$ small patches. For each small patch $P$, at most $k + 1$ vertices can be in $N(P)$. Hence storing all normal neighbor sets across all patch sets would require at most $O(k \cdot kn) = O(k^2n)$ space. To compute $N(P)$, we visit $2k + 1$ neighbors of each $v \in P$ if able, and for each neighbor $u$ of $v$, we include it in $N(P)$ if it is not already included and $u \notin P$. We stop early when $|N(P)| = k + 1$. This guarantees that if $P$ has at most $k$ neighbors, all of them would be in $N(P)$, and otherwise $N(P)$ would contain $k + 1$ arbitrary neighbors of $P$. This takes $O(k^2)$ time for each small patch $P$, so in total across all patch sets this takes $O(k^2 \cdot kn) = O(k^3n)$ time.

For each $x$, we next check each adhesion edge $(u, v)$ in $\bgraph_{\emptyset}(x)$. Let $P_u, P_v$ be the patches that contain $u, v$ respectively. If $P_u = P_v$, we discard this edge and continue to the next one. Otherwise, we add $v$ to $N_{\adh}(P_u)$ and $u$ to $N_{\adh}(P_v)$, incrementing the counters accordingly. We do not add an edge to a set if it already contains $k^3 + 1$ vertices. There are at most $k^2 \cdot |\children(x)|$ adhesion edges in $\bgraph_{\emptyset}(x)$ since each child of $x$ supports at most $k^2$ adhesion edges. Thus the total number of adhesion edges for in bag graphs is $\sum_{x \in V(T)} k^2 \cdot |\children(x)| = k^2n = O(k^2n)$, and we can compute all outgoing adhesion edge sets with $O(k^2n)$ time and space.

Combining all of these, the total time required is $O(k^3n)$ while the total space usage is $O(k^2n)$, which proves the lemma.

\subsection{Auxiliary structure for single-child failure}\label{section:single-child-failure}

In this section we consider the special case of single-child failure. Suppose $S$ is the set of failed vertices, for a node $x \in V(T)$, we say that it is in the case of \textit{single-child failure} if $S$ is disjoint from $\bag(x)$ and there is exactly one $z \in \children(x)$ such that $\comp(z) \cap S \neq \emptyset$. We call such $z$ the \textit{single affected child} of $x$. To speed up query time, we need $O(k)$ time access to some vital information about $\bgraph_S(x)$ for nodes $x$ that are in this case, formally captured in the following lemma.

\begin{plemma}\label{lemma:single-child-failure-query}
    There is a data structure that uses $2^{O(k^2)} n$ space, initializes in $2^{O(k^2)}n$ time, and answers the query in $O(k)$ time: given a node $x \in V(T)$ in the case of single-child failure, $\profile(z)$ with $z$ being its single affected child, and a vertex $u \in \bag(x)$, output the following.
    \begin{enumerate}
        \item Let $C$ be the unique connected component in $\bgraph_S(x)$ where $|C| > k$, the set $L = \adh(x) \cap C$.
        \item The set of vertices $I$ where each $v \in I$ is not in $C$ and is connected to some $v' \in \adh(x)$, with a coloring $\col: I \rightarrow [O(k)]$ such that $\col(u) = \col(v)$ for any $u, v \in I$ if and only if they are connected in $\bgraph_S(x)$.
        \item A label indicating whether $u \in C$.
    \end{enumerate}
\end{plemma}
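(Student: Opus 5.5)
The plan is to precompute, for every tree node $x$ and every child $z$ of $x$, a table indexed by all possible profiles of $z$. By the single-child failure hypothesis, $S\cap\bag(x)=\emptyset$ and only $z$ has $\comp(z)\cap S\neq\emptyset$. For every other child $z'$ the adhesion edges supported by $z'$ survive in $\bgraph_S(x)$: since $\mathcal{T}$ is regular, $\adh(z')$ is connected through $\comp(z')$, which avoids $S$. The $z$-supported adhesion edges survive exactly according to $\profile(z)$, because for $u,v\in\adh(z)$ the pair is connected in $G[\cone(z)]\setminus S$ iff $(u,v)$ is an edge of $\profile(z)$. Hence $\bgraph_S(x)$ is fully determined by $x$, $z$ and the graph $\profile(z)$, which ranges over at most $2^{k^2}$ graphs on $\adh(z)$.

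Storing a table per pair $(x,z)$ would cost $\sum_x|\children(x)|\cdot 2^{k^2}\le n\,2^{O(k^2)}$ entries, since $|T|\le n$ as in \cref{lemma:total-bgraph-size}. This is acceptable provided each entry has size $O(k)$ and is computable in $2^{O(k^2)}$-independent... amortized time. The total must stay within $2^{O(k^2)}n$, but recomputing the components of $\bgraph_S(x)$ from scratch costs $|E(\bgraph(x))|$ per entry. That gives $\sum_{x}|\children(x)|\cdot 2^{k^2}\cdot|E(\bgraph(x))|$, which is not linear when $x$ has high degree.

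To fix this, for a fixed $x$ I first compute the components of $\bgraph(x)$ with all $z$-supported edges removed, for each child $z$. Doing this naively per child is again too slow. Instead I use the fact from \cref{lemma:bgraph-structure}(2) that one giant component $C$ exists and everything outside it has at most $k$ vertices. Removing the $z$-adhesion edges can only split off pieces touching $\adh(z)$, so it suffices to run a bounded search from each vertex of $\adh(z)$ in the graph minus $z$-edges, stopping after $k+1$ vertices. The patch-set and neighbor machinery of \cref{lemma:pset-bgraph-time-and-space} bounds each such search by $\poly(k)$ time, using the stored $N(P)$ and $N_{\adh}(P)$ with their counters, where deleting the $z$-edges just decrements counters. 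This yields, per $(x,z)$ and in $\poly(k)$ time, a small graph on at most $O(k^2)$ vertices (the pieces reached plus a single contracted vertex for "reaches size $>k$"). Each profile of $z$ then adds edges on $\adh(z)$ to this small graph, and I compute the components of the result by BFS in $\poly(k)$ time.

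From these components I read off the three required outputs for the entry. Item 1 is $L=\adh(x)\cap C$, where $C$ is the component containing the contracted vertex. For item 2, $I$ is the union of the non-giant components that meet $\adh(x)$; because $\adh(x)$ is not itself touched by $z$-edges except within $\adh(z)$, I must also include the bounded searches from $\adh(x)$, which I do in the same way. The coloring of $I$ is given by the component indices. Since $|I|\le k$ by \cref{lemma:bgraph-structure}(2), each entry has size $O(k)$.

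For item 3 at query time, given $u$ I look up the patch $P_u$. If $P_u$ is big and was untouched by the bounded searches, then $u\in C$. Otherwise $u$ lies among the $O(k^2)$ explored vertices; to answer in $O(k)$ I would store for them a hash or pointer from the vertex to its entry bit. The main obstacle I expect is exactly this: making item 3 $O(k)$ without per-vertex storage multiplied by $2^{k^2}$. I would try storing, per $(x,z)$, the explored vertex list once, with each vertex's component identifier kept as a $2^{k^2}$-indexed array. The total explored vertices are $\poly(k)$ per child, so this stays within $2^{O(k^2)}n$ space.
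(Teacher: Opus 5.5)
Your overall architecture matches the paper. You keep one table per non-root child $z$, indexed by the at most $2^{k^2}$ graphs that $\profile(z)$ can be. This rests on the observation that in single-child failure $\bgraph_S(x)$ is $\bgraph_\emptyset(x)$ minus the $z$-supported edges absent from $\profile(z)$.

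Items 1 and 2 are also fine. You route them through a profile-independent sketch built from bounded searches in $\bgraph_\emptyset(x)$ minus the $z$-edges. The paper instead edits the $O(k^2)$ $z$-supported edges per entry and reruns $(k+1)$-bounded BFS from $\adh(x)$. Either way, the $\poly(k)$ cost per entry comes from the plain bounded-BFS argument of \cref{lemma:subset-connectivity-in-bgraphs}, not from the patch machinery.

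The gap is item 3, which is exactly where the patch machinery is needed.

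First, your case split is wrong. The claim ``otherwise $u$ lies among the explored vertices'' fails in two situations. One is when $P_u$ is small and none of its vertices was reached by your searches, e.g.\ a small patch hanging off the giant component far from $\adh(z)$. The other is when $P_u$ is big and only other vertices of $P_u$ were reached. In both cases your procedure has nothing to look up.

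Second, even for explored $u$ you never say how to locate $u$ in the $O(k^2)$-size explored list of $(x,z)$ in $O(k)$ time. A scan costs $O(k^2)$. A single pointer stored at $u$ cannot work, because $u$ may be explored for arbitrarily many siblings $z$ (e.g.\ many children whose adhesions sit next to $u$). A hash table is left unspecified, and the oracle is deterministic. Keeping component identifiers in a $2^{k^2}$-indexed array only addresses space, not lookup.

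The paper closes this by storing three things per entry:
\begin{itemize}
\item the touched patches, i.e.\ those containing both endpoints of a deleted $z$-edge; there are at most $k/2$ of them, all among the at most $k$ patches meeting $\adh(z)$, so they are recognizable via the stored patch pointers;
\item the $C$-labels and colors of the vertices in these touched patches;
\item the profile-updated lists $N_{\adh}(P)$ of small patches.
\end{itemize}
A query then replays the procedure of \cref{lemma:query-case2}:
\begin{itemize}
\item if $P_u$ is touched, read the stored label;
\item if $P_u$ is big and untouched, then $u\in C$;
\item if $P_u$ is small and untouched, take a surviving neighbor from $N(P_u)$ or the updated $N_{\adh}(P_u)$; by \cref{claim:pset-structure} it lies in a big patch whose status is known in $O(k)$. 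If no neighbor survives, $P_u$ is a whole component of size at most $k$.
\end{itemize}

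Within your own framework, the first defect could be repaired differently. For non-root $x$, $\bgraph_\emptyset(x)$ is connected, so every piece of it minus the $z$-edges meets $\adh(z)$ (when $\adh(z)\neq\emptyset$). Hence an unexplored vertex lies in a piece of size $>k$, and therefore in $C$. The $O(k)$ lookup for explored vertices would still need an argument.
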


We defer the proof of this lemma to \cref{sec:single-child-failure} as we will use some key steps developed in \cref{sec:update-query} for the update phase to help with the computation. At this moment we will assume the correctness of this lemma, and we can think of it intuitively as a look-up table that stores all queried information and takes $O(k)$ time to check the entries.

\subsection{Data structure}
With all necessary supporting structures in place, we can describe the data structure used to prove \cref{thm:main}, which consists of the following parts.

\begin{enumerate}
    \item The unbreakable decomposition $\mathcal{T} = (T, \bag)$ computed using \cref{thm:unbreakable-decomp}. Each node $x \in V(T)$ also stores its depth in $T$. Additionally, each vertex $v \in G$ contains a pointer that points to the unique tree node $x_v$ where $v \in \mrg(x_v)$.
    \item The data structure for $\lca$ and $\dir$ queries of \cref{lemma:lca-dir}.
    \item \begin{enumerate}
            \item[(a)] The torso query structure given by \cref{lemma:torso-query}.
            \item[(b)] The profile computing structure given by \cref{lemma:combining-profiles-with-torsos}.
        \end{enumerate} 
    \item For each node $x \in V(T)$, the patch set $\pset(x)$, processed bag graph $\bgraph_{\emptyset}(x)$, and additionally for each patch $P \in \pset(x)$, its neighbor set $N(P)$. These are computed using \cref{lemma:pset-bgraph-time-and-space}.
    \item The auxiliary data structure for the case of single-child failure given by \cref{lemma:single-child-failure-query}. 
\end{enumerate}

We now analyze the total space usage and processing time of this data structure. It is clear that all structures use space linear in $G$ except 3(a), which uses $2^{O(k^2)} +  O(k^2n \alpha_c(n))$ space. Also, structure 5 uses space that is linear in $n$ but exponential in $k$. Thus their total space usage would be $2^{O(k^2)}n + O(k^2n \alpha_c(n))$. The total preprocessing time is dominated by computing the unbreakable decomposition and initializing structure 3(a), which would be $k^{O(k^2)} n + O(k^3 n \log^2 n + k^6n \log n)$ in total\footnote{Recall we assume $m=O(kn)$ from \cite{nagamochi1992computing}. If $m>kn$, we actually take an extra $O(m)$ time in sparsification.}. Hence this proves the guarantees in \cref{thm:main}.

\section{Update and query algorithms}\label{sec:update-query}
In this section we prove the update and query time in \cref{thm:main} using the data structure defined and preprocessed in \cref{sec:prep}. Recall that in the update phase we are given a set $S \subseteq V(G)$ of failed vertices where $|S| \leq k$ and we update the data structure accordingly, and in the query phase we are given a pair of vertices $u, v \in V(G)$ and answer whether they are connected in $G \setminus S$. Note that there can be an update followed by multiple queries. Let us begin by describing the update algorithm.

\subsection{Update phase}

Suppose we are given $S$, we first label all $v \in S$ as "failed" in $G$ without deleting them or their incident edges. We then compute the set $X \subseteq V(T)$ of all nodes $x \in T$ that satisfy $\mrg(x) \cap S \neq \emptyset$. This can be computed in $O(k)$ time by including $x_v$ in $X$ for each $v \in S$ by checking its pointer, where $x_v$ is the unique node such that $v \in \mrg(x_v)$. Let $r$ be the root of $T$, if $r \notin X$, we also include $r$ in $X$. Note that $|X| \leq k + 1$. Additionally, let $Y$ be the lowest common ancestor closure of $X$, so $|Y| \leq 2|X| - 1 \leq 2k - 1 = O(k)$. We can compute $Y$ in $O(k^2)$ time by computing $\lca(x, y)$ using \cref{lemma:lca-dir} and including it in $Y$ for every pair of nodes $x, y \in X$.

We say that a node $x \in T(V)$ is \textit{important} if $x \in Y$. Furthermore, we call a node $x$ \textit{affected} if $\comp(x) \cap S \neq \emptyset$, otherwise we call it \textit{unaffected}. We begin by providing a algorithm that allows us to determine connectivity between a subset of vertices of $\bag(x)$ within $G[\cone(x)] \setminus S$ quickly for any important node $x$.

\begin{plemma}\label{lemma:subset-connectivity-in-bgraphs}
    For a node $x \in V(T)$, given $\bgraph_S(x)$ and a subset of vertices $V(H) \subseteq \bag(x)$, we can compute the graph $H$ in time $O(k^2 \cdot |V(H)|)$, where two vertices $u, v \in V(H)$ are adjacent in $H$ if and only if they are connected in $G[\cone(x)] \setminus S$.
\end{plemma}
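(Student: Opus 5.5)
By \cref{lemma:bgraph-structure}(1), for $u,v\in\bag(x)$, connectivity in $G[\cone(x)]\setminus S$ coincides with connectivity in $\bgraph_S(x)$ with the failed vertices removed. So it suffices to determine, for each vertex of $V(H)$, which other vertices of $V(H)$ lie in its component of $\bgraph_S(x)\setminus S$, while spending only $O(k^2)$ time per vertex of $V(H)$. The tool is the unbreakability bound of \cref{lemma:bgraph-structure}(2). There is at most one component $C$ of size $>k$, and all other components together contain at most $k$ vertices.

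For each $u\in V(H)$ I run a truncated BFS from $u$ in $\bgraph_S(x)$. The BFS skips vertices labeled failed and stops as soon as $k+1$ distinct vertices have been visited. There are two outcomes. If the BFS terminates on its own with at most $k$ vertices, it has found the entire component of $u$, and I record it explicitly. If it reaches $k+1$ vertices, then $u$ lies in a component of size $>k$, which must be the unique big component $C$, and I mark $u$ as ``big''.

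The cost of one search needs care. Each explored vertex may have many incident edges, including multi-edges (adhesion edges supported by different children) and edges into failed vertices. I will cap the total work by $O(k^2)$ as follows.
\begin{itemize}
\item The search stops once $k+1$ vertices are discovered.
\item Each discovered vertex is expanded one at a time, and the whole search halts as soon as the discovery count reaches $k+1$.
\item The edges examined from a vertex either lead to new vertices (at most $k+1$ in total), to already-visited vertices, or to failed vertices (at most $k$ of them).
\end{itemize}

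The main obstacle is that repeated edges to already-visited vertices and parallel adhesion edges are not obviously bounded by $O(k)$ per vertex. I would handle this by precomputing, in each processed bag graph, a simple-graph adjacency in which parallel edges are merged. Edges into visited vertices then contribute at most $k$ per expanded vertex, failed neighbours at most $k$, and the remaining edges discover new vertices. Thus each expanded vertex costs $O(k)$, and with at most $k+1$ expansions the search costs $O(k^2)$. Here I take $\bgraph_S(x)$ to be given with removed adhesion edges already deleted. If the input instead only marks removed edges, I would charge each skipped removed edge to the at most $O(k^2)$ distinct adhesion pairs among the visited set.

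Finally I assemble $H$. Give every vertex marked ``big'' the same colour. Give each vertex with a small component the identifier of that component, which I take as the minimum visited vertex so that two BFS runs on the same small component agree. Two vertices of $V(H)$ are adjacent in $H$ exactly when their colours agree, and I output $H$ as a coloring, i.e.\ as a union of cliques, rather than listing edges. Correctness follows from \cref{lemma:bgraph-structure}: two big-marked vertices are both in $C$, and small components are found exactly. The total time is $O(k^2\cdot|V(H)|)$.
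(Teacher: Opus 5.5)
Your proof is correct and follows the paper's own argument. For each $u\in V(H)$ you run a BFS in $\bgraph_S(x)$ that skips failed vertices and halts once $k+1$ vertices are visited, and \cref{lemma:bgraph-structure} guarantees two things: every search that halts this way lies in the single large component $C$, and every other search recovers its component exactly.

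Your refinements fix points the paper's write-up glosses over:
\begin{itemize}
\item The paper simply asserts $O(k^2)$ time per BFS, even though there may be unboundedly many parallel adhesion edges. Your merged simple adjacency handles this, provided each merged edge carries a multiplicity count so that it disappears only when all of its copies are deleted.
\item The paper writes out every edge among the marked vertices, which costs $\Theta(|V(H)|^2)$. That exceeds the stated bound when $|V(H)|\gg k^2$, as in \cref{lemma:precompute-affected-patch-connectivity}; your colouring output stays within $O(k^2|V(H)|)$.
\end{itemize}
Your fallback charging for a marks-only representation is not right, because a skipped deleted edge out of a visited vertex can lead to an unvisited one. It is not needed, however, since the paper builds $\bgraph_S(x)$ by actually deleting edges.
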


\proof From \cref{lemma:bgraph-structure} we know that two vertices $u, v \in \bag(x)$ are connected in $G[\cone(x)] \setminus S$ if and only if they are connected in $\bgraph_S(x)$, and that there exists a unique connected component $C \subseteq \bag(x)$ in $\bgraph_S(x)$ with $|C| > k$. Then we build the edge set of $H$ as follows. We perform BFS in $\bgraph_S(x)$ starting from each $v \in V(H)$ until $k + 1$ vertices are reached, avoiding failed vertices. We add an edge $(u, v)$ in $H$ for every $u$ reached by $v$ where $u \in V(H)$. If more than $k$ vertices are reached, $v$ must be in $C$, and we mark it as such. After this step has been repeated for all vertices in $V(H)$, we add edges between all pairs of $u, v \in V(H)$ where both $u$ and $v$ have been marked that they are in $C$.

It is clear from the construction of $H$ that any two vertices are adjacent in $H$ if and only if they are connected in $\bgraph_S(x)$ and thus in $G[\cone(x)] \setminus S$. Each $(k + 1)$-step BFS takes $O(k^2)$ time, so processing all vertices in $V(H)$ takes $O(k^2 \cdot |V(H)|)$ time, completing the proof of this lemma.\\

Recall the definition of profiles, the next step is to compute $\profile(x)$ for all $x \in Y$.

\subsubsection{Computing important profiles}\label{sec:computing-important-profiles}

We will compute these profiles in a bottom-up manner. We first focus on a general node $x \in V(T)$. Suppose $Z \subseteq \children(x)$ is the set of all affected children of $x$. The following claim describes how $\profile(x)$ is computed given the profiles of all nodes in $Z$.

\begin{pclaim}\label{claim:building-profile-from-affected-children}
    Let $x \in V(T)$ be a node. Given the set $Z$ and $\profile(z)$ for all $z \in Z$, one can compute $\bgraph_S(x)$ and $\profile(x)$ in time $O(k^3)$.
\end{pclaim}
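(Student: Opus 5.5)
We obtain $\bgraph_S(x)$ \emph{implicitly} from the stored $\bgraph_\emptyset(x)$, then extract $\profile(x)$ with \cref{lemma:subset-connectivity-in-bgraphs}. The key observation is that for an unaffected child $z \notin Z$, $\comp(z) \cap S = \emptyset$. By regularity, every pair $u,v \in \adh(z)$ is joined by a path through $\comp(z)$, which avoids $S$. So the $z$-supported adhesion edges of $\bgraph_\emptyset(x)$ survive in $\bgraph_S(x)$, up to their endpoints possibly lying in $S$, and endpoints in $S$ are handled by the ``failed'' labels. Hence $\bgraph_S(x)$ differs from $\bgraph_\emptyset(x)$ only in the adhesion edges supported by children $z \in Z$.

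For such a $z$, the surviving $z$-supported edges between $u,v \in \adh(z)\setminus S$ are exactly the edges of $\profile(z)$. By definition of the profile, these are the pairs connected in $G[\cone(z)]\setminus S$. We therefore do not copy $\bgraph_\emptyset(x)$. Instead, we represent $\bgraph_S(x)$ as $\bgraph_\emptyset(x)$ together with the following:
\begin{itemize}
\item[(i)] the failed labels on $S$;
\item[(ii)] a per-call mark on each $z\in Z$ telling the traversal to skip $z$-supported adhesion edges;
\item[(iii)] an overlay list of the edges of $\profile(z)$ for $z \in Z$, attached to their endpoints.
\end{itemize}
Since $|Z| \le |S| \le k$, because the $\comp(z)$ for distinct children are disjoint, and each profile has $O(k^2)$ edges, building the overlay costs $O(k^3)$. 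A BFS in this representation skips marked edges and failed vertices and additionally traverses overlay edges, so it explores $\bgraph_S(x)$.

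Next, apply \cref{lemma:subset-connectivity-in-bgraphs} with $V(H) = \adh(x)\setminus S$. By \cref{lemma:bgraph-structure}(1), the resulting $H$ records exactly which pairs are connected in $G[\cone(x)]\setminus S$, so $H$ is $\profile(x)$ (vertices of $\adh(x)\cap S$ remain isolated). The cost is $O(k^2\cdot k) = O(k^3)$. Each truncated BFS stops after $k+1$ vertices, and we charge it for the edges scanned. Correctness of the truncation uses the unique large component from \cref{lemma:bgraph-structure}(2), as in the lemma.

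The main obstacle is justifying the $O(k^2)$ cost per truncated BFS. A visited vertex may have many incident edges: edges to failed vertices, or many skipped adhesion edges from different children in $Z$ and from many unaffected children. Unaffected-child edges are real edges, so they only help reach $k+1$ distinct vertices quickly. Still, duplicate multi-edges to already-visited vertices could in principle be numerous. I would first try to bound the wasted scans as follows. Skipped edges number at most $\sum_{z\in Z}|\adh(z)|^2 \le k^3$ in total across the whole call. Edges into $S$ number at most $k$ per visited vertex. For duplicates, I would rely on the stored neighbor structures, namely the counters in $N_{\adh}$ that collapse parallel adhesion edges, so that each visited vertex contributes $O(k)$ useful scans before $k+1$ distinct vertices are found. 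If duplicate edges cannot be bounded this way, the fallback is to charge the total work over all $O(k)$ BFS runs to $O(k^3)$ by observing that within one run each edge is scanned once, and the skipped and failed edges total $O(k^3)$ globally.
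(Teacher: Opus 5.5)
\textbf{Gap in the running-time argument.} Your correctness argument matches the paper's. Only adhesion edges supported by children in $Z$ can change, the surviving ones are read off $\profile(z)$ at cost $O(|Z|k^2)=O(k^3)$, and \cref{lemma:subset-connectivity-in-bgraphs} applied to $\adh(x)$ then yields $\profile(x)$. The gap is in the running-time part you flag yourself, and neither of your remedies closes it.

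\emph{The fallback.} It bounds the scans of marked $Z$-supported edges by $O(k^3)$ \emph{per truncated BFS}. But the lemma runs an independent truncated BFS from each of the up to $k$ vertices of $\adh(x)$, so the total is $O(k^4)$, not $O(k^3)$. This can happen: if all $|Z|=k$ affected children share one $k$-vertex adhesion, each vertex of it carries $k(k-1)$ marked edges, and every run passing through it rescans them. The fallback would only work if visited marks were shared across runs, which is not the lemma's procedure.

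\emph{The $N_{\adh}$ remedy.} It does not apply at all. $N_{\adh}(P)$ is stored only for small patches, is indexed by patch rather than by vertex, and is capped at $k^3+1$ entries. It is also only updated with respect to $S$ later in the update (\cref{claim:computing-affected-patches-and-updating-neighbor-sets}), after all profiles have been computed. So it gives you no collapsed adjacency list to run a BFS on.

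\textbf{How the paper avoids this.} It deletes the $z$-supported edges not in $\profile(z)$ outright. What it hands to \cref{lemma:subset-connectivity-in-bgraphs} therefore really is $\bgraph_S(x)$, and the lemma's bound can simply be quoted. With skip marks and an overlay you are running a different procedure, so you owe your own cost analysis, and that analysis is what fails. Switch to explicit deletion.

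Your remaining worry about parallel edges is a fair point about the lemma itself: its ``$O(k^2)$ per $(k+1)$-step BFS'' is not automatic in the multigraph $\bgraph_S(x)$, and the paper does not spell this out. If you want it rigorous, do the following:
\begin{itemize}
\item Store each $\bgraph_\emptyset(x)$ with parallel copies merged into one edge carrying a support counter (one for a normal edge, plus one per supporting child).
\item For each deleted $z$-supported copy, decrement the counter.
\item When a counter reaches $0$, physically unlink the edge.
\end{itemize}
In the resulting simple graph, a truncated BFS scans, per visited vertex, at most $k$ edges to already-visited vertices and at most $k$ edges to failed vertices. Edges to new vertices number at most $k+1$ overall. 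That gives $O(k^2)$ per run and $O(k^3)$ in total, as the claim requires.
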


\proof First note that $|Z| \leq |S| \leq k$ as there are at most $k$ vertices in $S$ and the components of all affected children are pairwise disjoint. We will build $\bgraph_S(x)$ by removing edges from $\bgraph_{\emptyset}(x)$. By the definition of $\bgraph_S(x)$, only adhesion edges supported by the affected children need to be considered for deletion as $S$ is disjoint from the components of children in $\children(x) \setminus Z$. For every $z \in Z$, consider all adhesion edges supported by $z$. Recall that any such edge $(u, v)$ is not deleted in $\bgraph_S(x)$ if and only if $u$ and $v$ are connected in $G[\cone(z)] \setminus S$. Note that this is also exactly the definition of the edge set of $\profile(z)$, which is provided. Thus for each such edge $(u, v)$, we delete it if and only if $(u, v)$ is not in $\profile(z)$, which can be checked in constant time. Since each $|\adh(z)| \leq k$, there will be at most $|Z| \cdot k^2 \leq k \cdot k^2 = k^3$ edges to consider, thus we can construct $\bgraph_S(x)$ in $O(k^3)$ time.

To build $\profile(x)$, we need to decide for every pair $u, v \in \adh(x)$ whether they are connected in $G[\cone(x)] \setminus S$ or not. We use \cref{lemma:subset-connectivity-in-bgraphs} to compute $\profile(x)$ by giving as input the $\bgraph_S(x)$ that we just computed and $V(H) = V(\profile(x)) = \adh(x)$. Thus, we can obtain $H = \profile(x)$ in time $O(k^2 \cdot |\adh(x)|) \leq O(k^3)$. Note that building both $\bgraph_S(x)$ and $\profile(x)$ take $O(k^3)$ time in total, proving the claim.\\

Note that this lemma holds even when $x$ has no affected children since in this case $\bgraph_S(x) = \bgraph_{\emptyset}(x)$.

Next, we need to compute the profiles of all affected children $Z$ of $x$ in order to compute $\profile(x)$. The next lemma shows how $\profile(z)$ is computed for any non-important $z \in Z$ from the important profiles in its subtree.

\begin{pclaim}\label{claim:building-affected-child-profile}
    Let $z$ be an affected child of $x$ and suppose $z \notin Y$, given $\profile(y)$ for all $y \in Y$ that are contained in the subtree $T(z)$, one can compute $\profile(z)$ in time $O(k)$.
\end{pclaim}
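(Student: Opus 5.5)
Since $z$ is affected but not important, I will first show that $T(z)$ contains exactly one topmost important node $y$, and that every failed vertex in $\comp(z)$ lies in $\cone(y)$. Then $\profile(z)$ is obtained in one lookup via \cref{lemma:combining-profiles-with-torsos}, applied to $\profile(y)$ and $\torso(z,y)$ from \cref{lemma:torso-query}.

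\textbf{Finding $y$.} Since $z$ is affected, some $s\in S\cap\comp(z)$. The node $x_s$ with $s\in\mrg(x_s)$ lies in $T(z)$. Indeed, $s\notin\adh(z)$, and by the subtree property of tree decompositions the topmost bag containing $s$ is a descendant of $z$ (it is $z$ itself if $s\in\mrg(z)$). So $X\cap V(T(z))\neq\emptyset$. Because $Y$ is the lca closure of $X$ and $z\notin Y$, all nodes of $Y$ inside $T(z)$ lie in the subtree of a single child of $z$. Otherwise, two of them in different child subtrees would have their lca equal to $z$, forcing $z\in Y$. Iterating this argument downward, the nodes of $Y\cap V(T(z))$ have a unique topmost element $y$, namely the lca of $X\cap V(T(z))$, which is in $Y$. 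Moreover $y\neq z$, so $z$ is a strict ancestor of $y$. In the data structure I can locate $y$ as the element of $Y$ of minimum depth among those with $\lca(z,\cdot)=z$. Doing this naively takes $O(|Y|)=O(k)$ time using \cref{lemma:lca-dir} and the stored depths, which fits the $O(k)$ bound.

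\textbf{Checking the precondition.} \cref{lemma:combining-profiles-with-torsos} requires that $\comp(z)\setminus\cone(y)$ be disjoint from $S$. Suppose some $s\in S$ lies in $\comp(z)\setminus\cone(y)$. As above, $x_s\in T(z)$ and $x_s\in X\subseteq Y$, so $x_s$ is a descendant of $y$. Then $s\in\bag(x_s)\subseteq\cone(y)$, a contradiction. Hence the precondition holds, and $\profile(y)$ is given by hypothesis. I then query $\torso(z,y)$ in $O(1)$ time and feed the pair to the table of \cref{lemma:combining-profiles-with-torsos}, which returns $\profile(z)$ in $O(1)$ time. The total is $O(k)$, dominated by finding $y$.

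\textbf{Main obstacle.} The delicate step is the structural claim that all failures below $z$ funnel through one important descendant $y$. This needs care with the definition of $X$ through margins rather than bags: vertices of $S\cap\adh(z)$ are not in $\comp(z)$, so they do not matter for $z$ being affected. These same vertices are nonetheless correctly accounted for by the precomputed torso and the profile, because both are taken in graphs where the adhesion vertices are handled explicitly.
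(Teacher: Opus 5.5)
Your proposal is correct and follows the paper's proof. Both take $y$ as the minimum-depth element of $Y$ with $\lca(z,y)=z$ and show that $\comp(z)\setminus\cone(y)$ avoids $S$, since any failure there would give an important node of $T(z)$ outside $T(y)$ (you argue via $x_s$ being a descendant of $y=\lca(X\cap V(T(z)))$, the paper via $\lca(y,y')$ having smaller depth). Both then obtain $\profile(z)$ from $\torso(z,y)$ and $\profile(y)$ with one lookup in the table of \cref{lemma:combining-profiles-with-torsos}.

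Your closing remark about failed vertices in $\adh(z)$ is not needed: the claim only has to check the stated precondition of \cref{lemma:combining-profiles-with-torsos}. Whether such vertices are handled correctly is a property that lemma must guarantee, and your argument does not establish it.
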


\proof Let $y \in Y$ be the topmost important node in the subtree of $T$ rooted at $z$. It can be computed in $O(k)$ time by considering every $y \in Y$ and choosing the node  with the smallest depth such that $\lca(y, z) = z$. Note that $\comp(z) \setminus \cone(y)$ is disjoint from $S$ because otherwise there must be an important node $y' \in V(T(z)) \setminus V(T(y))$ and $\lca(y, y')$ will also be in $Y$ and have a smaller depth than $y$. Since we have $\profile(y)$ and can get $\torso(z, y)$ in constant time from \cref{lemma:torso-query}, we can use \cref{lemma:combining-profiles-with-torsos} to compute $\profile(z)$ in constant time in this situation. This completes the proof.\\

With these tools, we can proceed with computing all important profiles.

\begin{plemma}\label{lemma:computing-all-important-profiles}
    Given the set of important nodes $Y$, one can in time $O(k^4)$ compute $\profile(x)$ for all $x \in Y$.
\end{plemma}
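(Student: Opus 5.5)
We process the nodes of $Y$ bottom-up, in order of non-increasing depth (each node stores its depth, so sorting the $O(k)$ nodes of $Y$ takes $O(k\log k)$ or $O(k^2)$ time). The invariant is that when we reach $x\in Y$, $\profile(y)$ is already known for every $y\in Y$ that is a strict descendant of $x$. For each $x\in Y$ we first determine its set $Z$ of affected children. Then we obtain $\profile(z)$ for every $z\in Z$: if $z\in Y$ it has already been computed, and otherwise we use \cref{claim:building-affected-child-profile}. Finally we apply \cref{claim:building-profile-from-affected-children} to obtain $\bgraph_S(x)$ and $\profile(x)$ in $O(k^3)$ time. Since $|Y|=O(k)$, the total cost is $O(k)\cdot O(k^3)=O(k^4)$, provided that identifying $Z$ and invoking \cref{claim:building-affected-child-profile} cost at most $O(k^3)$ per node.

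The step that needs care is computing $Z$. A child $z$ of $x$ is affected iff $\comp(z)\cap S\neq\emptyset$. A failed vertex $s\in\comp(z)$ has $x_s$ (the node with $s\in\mrg(x_s)$) in $T(z)$; conversely, if $x_s$ is a strict descendant of $x$, then $s\in\mrg(x_s)\subseteq\comp(\dir(x,x_s))$. This holds because, by the subtree property and since $s\notin\adh(x_s)$, $s$ lies in no bag outside $T(x_s)$, so $s\notin\adh(\dir(x,x_s))$. Hence $Z=\{\dir(x,x_s): s\in S,\ x_s \text{ a strict descendant of } x\}$, where the descendant test is $\lca(x,x_s)=x$ with $x_s\neq x$. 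Using \cref{lemma:lca-dir}, this takes $O(k)$ time per $x$ and $O(k^2)$ overall. Each $x_s$ lies in $X\subseteq Y$, so this is consistent with the definition of $Y$.

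For $z\in Z\setminus Y$, \cref{claim:building-affected-child-profile} needs the topmost $y\in Y\cap T(z)$, found in $O(k)$ time, together with $\profile(y)$. Since $y$ is a strict descendant of $x$, it has been processed, by the ordering. The claim also needs $\comp(z)\setminus\cone(y)$ to be disjoint from $S$; this is argued inside the claim via the LCA closure of $Y$. The per-$x$ cost is then $|Z|\cdot O(k)=O(k^2)$, dominated by the $O(k^3)$ of \cref{claim:building-profile-from-affected-children}. The root is in $Y$, so $\profile$ and $\bgraph_S$ are obtained for every node of $Y$.

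The main obstacle is ensuring that every needed child profile is available, i.e.\ that a non-important affected child $z$ has a unique topmost important descendant and no failures between $z$ and that descendant. This rests on the LCA-closure argument already given in \cref{claim:building-affected-child-profile}; beyond that, the proof is bookkeeping.
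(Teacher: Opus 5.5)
Your proof is correct and follows essentially the same route as the paper: process $Y$ bottom-up, find the affected children $Z$ of each $x\in Y$ via $\lca$/$\dir$ queries, get the profiles of non-important affected children from \cref{claim:building-affected-child-profile}, and apply \cref{claim:building-profile-from-affected-children}, for a total of $O(k)\cdot O(k^3)=O(k^4)$. The differences are cosmetic. The paper orders the work via the compressed tree $T_Y$, which it reuses later in the query phase, whereas you sort by depth. The paper also derives $Z$ from all $y\in Y$ below $x$ rather than only from the nodes $x_s$, which yields the same set.
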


\proof We begin by constructing the tree $T_Y$ where $V(T_Y) = Y$. For any two nodes $x, y \in Y$, $x$ is the parent of $y$ in $T_Y$ if and only if it is a strict ancestor of $y$ and the $x$-$y$ path in $T$ is disjoint from $Y$. Note that one can compute $T_Y$ in $O(k^2)$ time using $\lca$ queries on $Y$. Then, for each leaf node $x$ in $T_Y$, we can compute $\profile(x)$ using \cref{claim:building-profile-from-affected-children} by setting $Z = \emptyset$ as $x$ must have no affected children.

Next, for any non-leaf node $x \in Y$, we compute the set $Z$ of its affected children in $T$ by checking every $y \in Y$, and we include $\dir(x, y)$ in $Z$ if $\lca(x, y) = x$. Note that this takes $O(k^2)$ time for all $x$. With this, we can build all important profiles bottom-up by repeating the following process until all of them are computed. First, use \cref{claim:building-affected-child-profile} to compute the profiles of all affected children $z$ where the profile of every node in $V(T(z)) \cap Y$ is computed, then use \cref{claim:building-profile-from-affected-children} to compute any $\profile(x)$ whose affected children profiles have all been computed.

There are $O(k)$ important profiles to compute, so \cref{claim:building-profile-from-affected-children} is called $O(k)$ times, taking total time $O(k \cdot k^3) = O(k^4)$. Each affected child is computed from a unique important profile and the torso between them, so \cref{claim:building-affected-child-profile} is called $O(k)$ times, which take total time $O(k \cdot k) = O(k^2)$. Therefore, computing all important profiles takes $O(k^4) + O(k^2) = O(k^4)$ time, completing the proof.\\

Note that in the process of computing all important profiles, for each $x \in Y$, all profiles of its affected children are also computed.

\subsubsection{Computing adhesion connectivity graphs}\label{sec:computing-adhconn}

We now design a structure that keeps track of the connectivity between the profile of any important node and the profiles of its affected children within its restricted bag graph. Formally, for any $x \in Y$ with the set of its affected children $Z$, we define the \textit{adhesion connectivity graph}, denoted by $\adhconn(x)$, as the graph on the vertex set $(\bigcup_{z \in Z} \adh(z)) \cup \adh(x)$ where two vertices $u, v \in V(\adhconn(x))$ are adjacent if and only if they are connected in $G[\cone(x)] \setminus S$. We compute these along with some coloring that preserves connectivity information as follows.

\begin{plemma}\label{lemma:computing-all-adhconn}
    Given $\profile(x)$ and $\bgraph_S(x)$, one can compute the following for all $x \in Y$ in time $O(k^5)$.
    \begin{enumerate}
        \item The graph $\adhconn(x)$.
        \item A coloring $\col: V(\adhconn(x)) \rightarrow [O(k^2)]$ where $\col(u) = \col(v)$ if and only if $u, v \in V(\adhconn(x))$ are connected in $G[\cone(x)] \setminus S$.
        \item A inverse coloring $\invcol: [O(k^2)] \rightarrow \adh(x)$ such that $\invcol(i) = v$ where $v \in \adh(x)$ with $\col(v) = i$.
    \end{enumerate}
\end{plemma}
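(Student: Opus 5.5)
The plan is to handle each $x \in Y$ separately, spending $O(k^4)$ time per node; since $|Y| = O(k)$, this gives $O(k^5)$ overall. Fix $x \in Y$ and let $Z$ be its set of affected children. These were already found in \cref{lemma:computing-all-important-profiles}, together with all $\profile(z)$ for $z \in Z$, and by \cref{claim:building-profile-from-affected-children} we also have $\bgraph_S(x)$. Set $W = \bigcup_{z \in Z} \adh(z) \cup \adh(x)$. Every $\adh(z)$ lies in $\bag(x)$, as does $\adh(x)$, so $W \subseteq \bag(x)$. Moreover $|Z| \le |S| \le k$, because the components of distinct children are disjoint, and each adhesion has size at most $k$. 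Hence $|W| \le k^2 + k = O(k^2)$.

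First I build $\adhconn(x)$ by applying \cref{lemma:subset-connectivity-in-bgraphs} to $\bgraph_S(x)$ with $V(H) = W$. By \cref{lemma:bgraph-structure}(1), two vertices of $\bag(x)$ are connected in $G[\cone(x)] \setminus S$ exactly when they are connected in $\bgraph_S(x)$, so the resulting $H$ is precisely $\adhconn(x)$. The cost is $O(k^2 \cdot |W|) = O(k^4)$. There is one subtlety. The graph $H$ as written in that lemma may contain $\Theta(|W|^2) = \Theta(k^4)$ edges, which is still within budget. Even so, it is cleaner to keep the procedure's intermediate data, namely for each $v \in W$ the set reached by its truncated BFS and the mark ``$v \in C$''.

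Next I compute the coloring. All vertices marked as lying in the big component $C$ get one common color. For every unmarked $v \in W$, its truncated BFS reached its entire component, which has at most $k$ vertices. So I run a union--find, or equivalently a BFS over $H$ restricted to unmarked vertices, to give each such component a fresh color. Alternatively, I can just run BFS on $H$ itself in $O(|W|^2) = O(k^4)$ time and color by connected components. This is correct because adjacency in $H$ already encodes connectivity, which is transitive, so the components of $H$ are cliques. The number of colors is at most $|W| = O(k^2)$.

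Finally, for $\invcol$, I scan $\adh(x)$ and set $\invcol(\col(v)) = v$ for each $v$, which takes $O(k)$ time. Colors that have no representative in $\adh(x)$ are left undefined, the same convention as in \cref{lemma:combining-profiles-with-torsos}.

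The step I expect to be the main obstacle is keeping the per-node cost at $O(k^4)$ rather than letting the edge count of $H$ or repeated BFS blow it up. The key point is that each truncated BFS visits at most $k+1$ vertices in $O(k^2)$ time, because restricted bag graphs of unbreakable bags have at most $k$ vertices outside the big component (\cref{lemma:bgraph-structure}(2)). This bounds the work by $O(k^2)$ per vertex of $W$. Summing $O(k^4)$ over the $O(k)$ nodes of $Y$ then yields the claimed $O(k^5)$.
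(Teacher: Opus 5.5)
Your proposal is correct and follows the paper's proof essentially step for step. You bound $|V(\adhconn(x))| = O(k^2)$, obtain $\adhconn(x)$ via \cref{lemma:subset-connectivity-in-bgraphs} in $O(k^4)$ time, color its components by BFS in $O(k^4)$, choose representatives in $\adh(x)$ for $\invcol$, and sum over the $O(k)$ nodes of $Y$. One small nit on your justification: the $O(k^2)$ cost of each truncated BFS comes from stopping after $k+1$ vertices, while \cref{lemma:bgraph-structure} is what makes reaching $k+1$ vertices a valid certificate of membership in $C$; it does not bound the running time.
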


\proof We first consider computing $\adhconn(x)$ for a specific $x \in Y$. Suppose $Z \subseteq \children(x)$ is the set of its affected children. Note that $|Z| \leq k$ and all adhesions have at most $k = O(k)$ vertices, so $|V(\adhconn(x))| \leq (k + 1)k$. By its definition, we can use \cref{lemma:subset-connectivity-in-bgraphs} to compute $H = \adhconn(x)$, giving as input $\bgraph_S(x)$ and $V(H) = V(\adhconn(x))$, in $O(k^2 \cdot |V(\adhconn(x))|) \leq O(k^2 \cdot (k + 1)k) = O(k^4)$. Then, we perform BFS in $\adhconn(x)$ in $O(k^4)$ time to identify its connected components, and assign an arbitrary coloring $\col: V(\adhconn(x)) \rightarrow [O(k^2)]$ where only vertices in the same connected components in $\adhconn(x)$ have the same color. For each used color $i$, we assign $\invcol(i) = v$ by choosing an arbitrary $v \in \adh(x)$ with $\col(i) = v$, or assign $\invcol(i) = \emptyset$ if no such $v$ exist.

It is clear that computing $\adhconn(x)$ with its coloring $\col$ and inverse coloring $\invcol$ takes $O(k^4)$ time for each $x \in Y$. There are at most $O(k)$ important nodes, so computing these for all $x \in Y$ takes $O(k \cdot k^4) = O(k^5)$ time. This completes the proof.

\subsubsection{Precomputing connectivity within bag graphs}\label{sec:precompute-bgraph-connectivity}

To achieve fast query time, we precompute some vital connectivity information with regard to the patch set $\pset(x)$ of $\bgraph_S(x)$, for every node $x$ where $\profile(x)$ is computed. Note that $x$ can either be an important node or an affected children of some important node. For any such $x$, we say a patch $P \in \pset(x)$ is \textit{touched} if it either contains some vertex $v \in S$, or it contains both $u, v$ for some adhesion edge $(u, v)$ that is deleted during construction of $\bgraph_S(x)$. Let $A \subseteq \pset(x)$ be the set of all touched patches in $\pset(x)$. 

Suppose $Z$ is the set of affected children of $x$. For every affected child $z \in S$, the set of all $z$-supported adhesion edges form a clique in $\bgraph_{\emptyset}(x)$ on at most $k$ vertices and $k^2$ edges. Note that only at most $k/2 = O(k)$ of these edges are vertex disjoint, so at most $O(k)$ patches can be touched due to these $z$-supported adhesion edges. Thus we have $|A| \leq k + O(k \cdot k) = O(k^2)$ as $|S| \leq k$ and there can be at most $|Z| \cdot O(k) \leq O(k^2)$ disjoint patches touched due to adhesion edge deletions.

Additionally, for each small patch $P$, we need to update the set $N_{\adh}(P)$ of adhesion neighbors for each small patch $P \in \pset(x)$ due to the adhesion edge deletions in $\bgraph_S(x)$. We need to compute $A$ and update $N_{\adh}(P)$ for all small patches $P$ before proceeding to the next step.

\begin{pclaim}\label{claim:computing-affected-patches-and-updating-neighbor-sets}
    For a node $x \in V(T)$, given $\bgraph_S(x)$, one can in time $O(k^3)$ compute the set $A$ of affected patches and update the set $N_{\adh}(P)$ for each small patch $P \in \pset(x)$.
\end{pclaim}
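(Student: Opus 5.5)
The computation is driven by the at most $k$ failed vertices of $S$ and the at most $|Z|\cdot k^2 \le k^3$ adhesion edges deleted in the construction of $\bgraph_S(x)$ (Claim~\ref{claim:building-profile-from-affected-children}). Each such object changes only $O(1)$ patches and $O(1)$ neighbor-set entries, and every patch or entry involved is reachable in $O(1)$ time through the pointers stored in the processed bag graph (Lemma~\ref{lemma:pset-bgraph-time-and-space}). So the plan is a single linear scan over these objects.

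\textbf{Step 1: failed vertices.} For each $v \in S \cap \bag(x)$, follow the pointer from $v$ to its patch $P_v$ and mark $P_v$ as touched. This takes $O(k)$ time.

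\textbf{Step 2: deleted adhesion edges.} While building $\bgraph_S(x)$ from $\bgraph_\emptyset(x)$, the procedure of Claim~\ref{claim:building-profile-from-affected-children} already enumerates the $z$-supported adhesion edges for each $z\in Z$ and decides which to delete. We record the deleted edges in a list of length at most $k^3$. For each deleted edge $(u,v)$, look up $P_u$ and $P_v$ in $O(1)$ time and split into two cases.
\begin{enumerate}
\item If $P_u = P_v$, mark this patch as touched.
\item If $P_u \ne P_v$, the edge was counted in the adhesion-neighbor structures at preprocessing time. If $P_u$ is small, decrement the counter of $v$ in $N_{\adh}(P_u)$, and delete $v$ from the doubly linked list in $O(1)$ time when the counter reaches zero. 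Treat $P_v$ symmetrically.
\end{enumerate}

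To locate the entry for $v$ inside $N_{\adh}(P_u)$ in $O(1)$ time, I would have preprocessing store, with each adhesion edge of $\bgraph_\emptyset(x)$, pointers to the two list nodes it incremented, or null if the edge was discarded. This extra data fits within the space bound of Lemma~\ref{lemma:pset-bgraph-time-and-space}. Marking a patch is done with a flag together with a list of marked patches, so duplicates are avoided and $A$ can be output directly. All flags and counter changes are logged so they can be undone before the next update.

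\textbf{Main obstacle.} The hard part is the truncation at $k^3+1$ entries. If some adhesion edge out of $P_u$ was never recorded because $N_{\adh}(P_u)$ was already full, then decrementing counters could wrongly make the list look like it shrank below its true size. The way out is to require only the weak invariant: if the true number of adhesion neighbors is at most $k^3$, the list is exact; otherwise the list still certifies at least one surviving neighbor. Deleting at most $k^3$ edges cannot kill all $k^3+1$ stored neighbors, because each neighbor's counter is positive and there are more stored neighbors than deleted edges. So a list that started full still certifies a surviving adhesion neighbor, which is presumably all the later query step needs.

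\textbf{Running time.} Step 1 costs $O(k)$ and Step 2 costs $O(1)$ per deleted edge, for $O(k^3)$ in total. The same bound applies to undoing the changes afterwards.
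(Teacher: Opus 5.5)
Your proposal is correct and follows the paper's proof: you mark the patches of the failed vertices in $\bag(x)$, then scan the at most $k^3$ deleted adhesion edges, marking the common patch when both endpoints lie in one patch and otherwise decrementing (and, at zero, unlinking) the counters in $N_{\adh}$ of the small endpoint patches, for $O(k^3)$ total. Your per-edge pointers into the lists and your undo log make explicit the $O(1)$ lookup and the restoration that the paper leaves implicit. One caution on your aside: the later query step scans $k+1$ entries of $N_{\adh}(P_u)$, because surviving adhesion edges supported by unaffected children may end at failed vertices. So a full list needs $k+1$ surviving entries rather than one, and your counting gives this once you use that at most $k\binom{k}{2}$ edges are deleted.
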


\proof We have $\bgraph_S(x)$ and $\profile(z)$ for all $z \in Z$ where $Z$ is the set of its affected children since $x \in Y$. We first check all vertices $v \in S$. For any $v \in S \cap \bag(x)$, we identify the patch $P_v$ where $v \in P_v$ by looking at its stored pointer and include it in $A$. Then we compute the set of deleted adhesion edges by comparing for each $z \in Z$ the set of $z$-supported adhesion edges in $\bgraph_S(x)$ and $E(\profile(z))$. For any deleted adhesion edge $(u, v)$, we check if they are in the same patch by comparing their pointers, and we include $P$ if $u, v \in P$. It is clear that computing $A$ take $O(k^3)$ time as there are at most $O(k^3)$ deleted adhesion edges.

We now update $N_{\adh}(P)$ for each small path $P$. For any deleted adhesion edge $(u, v)$ in $\bgraph_S(x)$, we consider the patches $u$ and $v$ are in. Let $P_u, P_v$ be the patches containing $u, v$ respectively, we discard this adhesion edge if $P_u = P_v$. Otherwise, if $P_v$ is a small patch, we decrement the counter of $u$ if $u \in N_{\adh}(P_v)$, and delete it from $N_{\adh}(P_v)$ if the counter reaches zero. We do the same for $P_u$ if it is small. We can compute this in $O(k^3)$ time as there can be at most $O(k^3)$ deleted adhesion edges in $\bgraph_S(x)$. We then finish the proof.\\

Next, we precompute and store for each vertex in a touched patch whether it is in the unique connected component in $\bgraph_S(x)$ that has size more than $k$. If not, we label the vertex with the connected component it is in, formally stated in the following lemma.

\begin{plemma}\label{lemma:precompute-affected-patch-connectivity}
    For a node $x \in V(T)$, given $\bgraph_S(x)$, one can compute and label every vertex $v \in P$ for each patch $P \in A$ whether it is in the single connected component $C$ in $\bgraph_S(x)$ where $|C| > k$ in $O(k^5)$ time. Moreover, let $B$ be the set of all vertices reachable from $(\bigcup_{P \in A} P) \setminus C$ in $\bgraph_S(x)$, one can compute a coloring $\col: B \rightarrow [O(k)]$ such that $\col(u) = \col(v)$ for each $u, v \in B$ if and only if they are connected in $\bgraph_S(x)$ in the same time.
\end{plemma}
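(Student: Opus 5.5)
The plan is to run truncated BFS from every vertex of every touched patch, as in the proof of \cref{lemma:subset-connectivity-in-bgraphs}, and then merge the results into a coloring.

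First I bound the number of vertices involved. By the discussion before \cref{claim:computing-affected-patches-and-updating-neighbor-sets}, $|A| = O(k^2)$. Each patch has at most $k+1$ vertices by construction, so the set $W = \bigcup_{P\in A} P$ has $|W| = O(k^3)$.

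For each non-failed $v \in W$, I run BFS in $\bgraph_S(x)$ from $v$. The search skips vertices labeled failed and skips deleted adhesion edges, which are exactly the edges absent from $\bgraph_S(x)$. It stops as soon as $k+1$ distinct vertices have been visited. By \cref{lemma:bgraph-structure}(2), components of $\bgraph_S(x)$ other than $C$ have total size at most $k$. So if the BFS reaches $k+1$ vertices, then $v \in C$ and we label $v$ accordingly. Otherwise the BFS has explored the whole component of $v$, which has at most $k$ vertices, and $v \notin C$.

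The main obstacle is the running time of each BFS, because a vertex of $\bgraph_S(x)$ can have large degree. I resolve this with a stopping rule on edges rather than on vertices. A component of size at most $k$ in this multigraph has at most $O(k^2)$ distinct neighbor pairs. To keep multi-edges from inflating the count, I would store adjacency in a form where parallel adhesion edges between the same pair are grouped, or else rely on the adhesion-edge multiplicity being bounded by the at most $k^3$ counters. Under that accounting, scanning $O(k^2)$ edge slots either reveals $k+1$ distinct vertices or exhausts the component. Each BFS therefore costs $O(k^2)$, and the total is $O(k^3)\cdot O(k^2) = O(k^5)$, as claimed.

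For the coloring, each $v \in W\setminus C$ has its whole component, of size at most $k$, explicitly listed. Let $B$ be the union of these lists; then $|B| \le k$ by \cref{lemma:bgraph-structure}(2), since all these components avoid $C$. I assign colors by scanning the components found. Whenever a vertex of a component is not yet colored, the whole component receives a fresh color; otherwise that component is skipped. This uses at most $k$ colors, so the range is $[O(k)]$. Two vertices of $B$ get the same color exactly when they were found in the same BFS component, which holds if and only if they are connected in $\bgraph_S(x)$. Duplicate checks cost $O(k)$ per BFS, well within the $O(k^5)$ budget.
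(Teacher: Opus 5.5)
Your proposal is correct and follows the paper's proof: the paper also runs the truncated BFS of \cref{lemma:subset-connectivity-in-bgraphs} from each of the $O(k^3)$ vertices in touched patches at $O(k^2)$ apiece, and then colors the at most $k$ vertices of $B$ one component at a time. Your concern about parallel adhesion edges is a real point that the paper's $O(k^2)$-per-BFS bound does not address, and grouping parallel copies into one entry per vertex pair with a multiplicity counter is the right fix; drop the alternative appeal to the $N_{\adh}$ counters, since those counters record multiplicities rather than bound them (and exist only for small patches).
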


\proof Recall that each patch has size at most $k + 1 = O(k)$, then the total number of vertices in touched patches is $|\bigcup_{P \in A} P| \leq |A| \cdot O(k) = O(k^3)$. We first call \cref{lemma:subset-connectivity-in-bgraphs} with $\bgraph_S(x)$ and $V(H) = \bigcup_{P \in A} P$ to obtain $H$ in time $O(k^2 \cdot |H|) \leq O(k^2 \cdot k^3) = O(k^5)$. For each vertex $v \in V(H)$ that has been marked as in $C$ when computing $H$, we also label it as such in $\bgraph_S(x)$. 

Note that $|B| \leq k$ by \cref{lemma:bgraph-structure} since $B$ and $C$ are disjoint by definition. We can compute $B$ and its coloring $\col$ as follows. For each vertex $v \in V(H) \setminus C$, let $\col(v)$ be an arbitrary unused color in $[O(k)]$ if $v$ is not already colored, and perform BFS in $\bgraph_S(x)$ starting from $v$ to visit all vertices reachable from $v$. For each vertex $u$ reached from $v$, let $\col(u) = \col(v)$. Repeating this step for all vertices in $V(H) \setminus C$ takes $O(|B|^2) = O(k^2)$ time as by definition no vertex outside of $B$ can be reached. Thus the proof is complete.

\subsubsection{Complete update process}

Now we have finished every step in the update algorithm, let us summarize the complete update process below.

\begin{enumerate}
    \item Given the set of failed vertices $S$ where $|S| \leq k$, compute the set of important nodes $Y$ in $O(k^2)$ time.
    \item Compute $\profile(x)$ for all $x \in Y$ using \cref{lemma:computing-all-important-profiles}. The profiles of all affected children of every $x \in Y$ are also computed in this process.
    \item Compute $\adhconn(x)$ along with its coloring and inverse coloring for all $x \in Y$ using \cref{lemma:computing-all-adhconn}.
    \item Precompute bag graph connectivity information for every node $x$ where $\profile(x)$ is computed by repeating \cref{lemma:precompute-affected-patch-connectivity}.
\end{enumerate}

It is clear that the runtime of the update algorithm is dominated by step 4, where repeating \cref{lemma:precompute-affected-patch-connectivity} for all important nodes and their affected children takes $O(k^5) \cdot O(|Y|) \leq O(k^6)$ time. This proves the update time guarantee in \cref{thm:main}.

\subsection{Query phase}

Finally, we describe the algorithm this data structure uses to answer queries. Suppose we are given a query $u, v \in V(G)$ where we need to answer whether $u$ and $v$ are connected in $G \setminus S$. If either $u \in S$ or $v \in S$, we simply output "not connected", so let us assume that $u, v \notin S$ for the rest of this section.

We will repeat the same process for $u$ and $v$, then combine them at the end to answer the query, so for now let us focus on $u$. The goal is to lift $u$ up to the root and compute their connectivity in $\profile(r)$. The procedure is split into cases depending on the position of $u$, and we will start with the simple case of when $u$ is already included in some precomputed profile.

\subsubsection{Case 1: $u$ is in a profile computed during update}\label{sec:query-case1}

Let $x$ be a node where $u \in V(\profile(x))$ and $\profile(x)$ is computed in the update phase. Recall that $x$ is either an important node or an affected children of some important node. 

We \textit{augment} $u$ to some $\profile(x')$ by adding $u$ to $V(\profile(x'))$ and an edge $(u, u')$ to $E(\profile(x'))$ where $u'$ is a vertex in $\adh(x')$ and $u, u'$ are connected in $G[\cone(x')] \setminus S$. This generalizes the definition of profiles as the connected component containing $u$ in the augmented $\profile(x')$ may not be a clique. However, it remains true that every pair of vertices $u, v \in V(\profile(x'))$ are connected (rather than adjacent) in $\profile(x')$ if and only if they are connected in $G[\cone(x')] \setminus S$. We can augment $u$ to the profile of all important nodes along the path from $x$ to $r$ in $T$ quickly in this case.

\begin{plemma}\label{lemma:query-case1}
    Given the node $x$ where $u \in V(\profile(x))$, let $x_1, x_2,...,x_{\ell}, r \in Y$ be the sequence of important nodes along the $x$-$r$ path in $T$, one can in time $O(k)$ augment $u$ to $\profile(x_i)$ for all $1 \leq i \leq \ell$ and $\profile(r)$.
\end{plemma}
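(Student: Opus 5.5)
We lift $u$ upward one step at a time, spending $O(1)$ per important node on the path. The only tools needed are the colorings precomputed during the update: those of \cref{lemma:computing-all-adhconn} for important nodes, and those of \cref{lemma:combining-profiles-with-torsos} when we pass through a torso.

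First, $u$ has to reach the first important node $x_1$. If $x \in Y$, we set $x_1 = x$ and are done. Otherwise $x$ is an affected child of the important node $x_1 = \parent(x)$, so $\adh(x) \subseteq V(\adhconn(x_1))$. In either case, $u$ lies in $\adh(x) \subseteq V(\adhconn(x_1))$, so it has a color $\col(u)$ in $\adhconn(x_1)$. If $\invcol(\col(u)) = u' \neq \emptyset$, then $u' \in \adh(x_1)$ is connected to $u$ in $G[\cone(x_1)] \setminus S$, and we augment $u$ to $\profile(x_1)$ via the edge $(u,u')$. If $\invcol(\col(u)) = \emptyset$, no vertex of $\adh(x_1)$ is connected to $u$ inside $\cone(x_1)$. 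Since $\adh(x_1)$ separates $\comp(x_1)$ from the rest of $G$, the component of $u$ is then entirely contained in $\cone(x_1)$, and we record it as a "terminated" component. We identify it by the pair $(x_1, \col(u))$, so that two such vertices are connected if and only if they have the same pair. From then on, nothing more needs to be lifted for $u$.

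Next comes the inductive step from $x_i$ to $x_{i+1}$. Suppose $u$ is augmented to $\profile(x_i)$ through a representative $u_i \in \adh(x_i)$. Let $z = \dir(x_{i+1}, x_i)$, obtained in $O(1)$ by \cref{lemma:lca-dir}. Then $z$ is an affected child of $x_{i+1}$, and $\profile(z)$ was computed from $\profile(x_i)$ and $\torso(z, x_i)$ via \cref{lemma:combining-profiles-with-torsos}. Since that lookup also returns a coloring of $\adh(z) \cup \adh(x_i)$ with an inverse into $\adh(z)$, we read $\col(u_i)$ there. If there is no inverse, the component of $u$ is closed inside $\cone(z)$ and we record it as terminated, identified by the pair $(z, \col(u_i))$. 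Otherwise we get a representative $u' \in \adh(z)$. Then we apply the $\adhconn(x_{i+1})$ coloring to $u'$, exactly as in the first step, to get a representative $u_{i+1} \in \adh(x_{i+1})$ or termination. (If $z = x_i$, we skip the torso lookup.) To make this $O(1)$, the update phase should store, for each $x \in Y$, the table lookup result for its affected child, i.e.\ the coloring from \cref{lemma:combining-profiles-with-torsos}. Storing this is free. The chain $x_1, \dots, x_\ell, r$ is just the ancestor chain of $x_1$ in $T_Y$, which has length $O(k)$, so the total time is $O(k)$.

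Correctness follows because connectivity is transitive, and because each coloring characterizes connectivity within the corresponding cone. The representative $u_{i+1}$ is connected to $u_i$ in $G[\cone(x_{i+1})] \setminus S$, and $\cone(x_i) \subseteq \cone(x_{i+1})$, so by induction $u$ is connected to $u_{i+1}$ in $G[\cone(x_{i+1})] \setminus S$. This makes the augmented profile valid.

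The main obstacle is the termination case. There the augmented profile has no edge at $u$, and we must argue that a component with no vertex in $\adh(x_i)$ cannot reconnect higher up. This follows from the separator property of adhesions in a tree decomposition, since $\comp(x_i)$ has neighbors only in $\adh(x_i)$. We also need to ensure that the final comparison with $v$ uses the recorded terminated labels consistently.
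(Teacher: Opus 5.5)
\textbf{Comparison with the paper.} Your lifting is the paper's. You enter $\profile(x_1)$ through the $\adhconn(x_1)$ coloring. Then, for each consecutive pair, you pass through $z=\dir(x_{i+1},x_i)$ using the coloring returned by \cref{lemma:combining-profiles-with-torsos}, and then through the $\adhconn(x_{i+1})$ coloring. Each step costs $O(1)$, along the $O(k)$ ancestors of $x_1$ in $T_Y$. Whether you store the table entry or look it up again is immaterial, and your skip when $z=x_i$ covers a case the paper glosses over.

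The real difference is the case $\invcol(\cdot)=\emptyset$. The paper overwrites the colorings in place ($\col_{x_2}(u)=-1$, $\invcol_{x_2}(-1)=u$, or $\invcol_{x_2}(\col_{x_2}(u_2'))=u$). This makes $u$ the representative of its class, so a later-lifted $v$ picks up an edge to $u$. That keeps the lemma's literal output, an augmented profile, at the price of mutating update-phase data during a query. You instead freeze the component under a label $(t,c)$. This is stateless and needs nothing carried upward after closure, but it changes the output. Since $\adh(r)=\emptyset$, every lift closes at $r$ at the latest, so in your scheme the final answer is entirely a label comparison. The claim ``connected iff same pair'' therefore carries all of the correctness, and you assert it without proving it.

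\textbf{The missing justification.} The claim does hold, and you should include the argument. Let $K$ be the component of $u$ in $G\setminus S$.
\begin{itemize}
\item A lift closes at a check node $t$ iff $K\subseteq\comp(t)$.
\item These $t$ form the path from the lowest such node $w$ up to $r$, because comps of incomparable nodes are disjoint and $\comp(t)\subseteq\comp(\parent(t))$.
\item Every start node lies strictly below $w$, since its representative lies both in its adhesion and in $K\subseteq\comp(w)$.
\end{itemize}
If $w\in Y$, both $u$ and $v$ close at $w$ under the $\adhconn(w)$ coloring. If $w\notin Y$, let $y'$ be the highest important node strictly below $w$. It exists because each chain's $x_1$ is important and below $w$. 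It is the same on both paths, because the lca of two candidates is important, lies in $T(w)$, and is not $w$. So both lifts close at $\dir(y,y')$, where $y$ is the first important node above $w$, under the same table entry keyed by $\torso(\dir(y,y'),y')$ and $\profile(y')$.

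\textbf{Two smaller repairs.} First, Case 1 is only entered after $u$ has been augmented, so in general $u\notin\adh(x)$; you must start from its neighbor in $\profile(x)$. Second, a component already closed in Cases 2--4 must be labelled in these same colorings (e.g.\ the $\adhconn(x_u)$ coloring rather than the patch coloring of \cref{lemma:precompute-affected-patch-connectivity}). Otherwise the comparison with $v$ can fail.
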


\proof If $x \notin Y$, then $x_1 = \parent(x)$ as $x$ must be its affected child. Otherwise $x_1 = x$. We can find the sequence $x_1,...,x_{\ell},r$ in $O(k)$ time by traversing the tree $T_Y$ constructed in the proof of \cref{lemma:computing-all-important-profiles} from node $x_1$ to its root, which must also be $r$ since $r \in Y$. Note that $\ell \leq O(k)$ as $|Y| \leq O(k)$.

Let $u_1 \in V(\profile(x_1))$ such that $u$ and $u_1$ are connected in $G[\cone(x_1)] \setminus S$. If $x = x_1$, we simply set $u_1 = u$. If $x \neq x_1$, we can find $u_1$ in constant time. Note that in this case $u \in V(\adhconn(x_1))$. Let $\col_{x_1}$ and $\invcol_{x_1}$ be the coloring and inverse coloring computed with $\adhconn(x_1)$ by \cref{lemma:computing-all-adhconn}, we simply let $u_1 = \invcol_{x_1}(\col_{x_1}(u))$. One can easily verify that $u_1$ is correct by the definition of these functions. Then, add $u$ to $V(\profile(x_1))$ and include the edge $(u, u_1)$ in $E(\profile(x_1))$.

Next, we move up to $x_2$. Let $z_2 = \dir(x_2, x_1)$ in $T$ and $z_2$ must be an affected child of $x_2$. We obtain $\torso(z_2, x_1)$ from \cref{lemma:torso-query} in constant time, and get $(\profile(z_2), \col_{z_2}, \invcol_{z_2})$ using \cref{lemma:combining-profiles-with-torsos} by giving $\profile(x_1)$ and $\torso(z_2, x_1)$ as input (removing $u$ from $\profile(x_1)$ for indexing if it is newly added). Let $u_2' = \invcol_{z_2}(\col_{z_2}(u_1))$, then by definition $u_2'$, $u_1$, and $u$ must be connected in $G[\cone(z_2)] \setminus S$. Note that $u_2' \in V(\adhconn(x_2))$. Let $\col_{x_2}$ and $\invcol_{x_2}$ be the coloring and inverse coloring of $\adhconn(x_2)$, and let $u_2 = \invcol_{x_2}(\col_{x_2}(u_2'))$, then $u_2$, $u_2'$, and $u$ must be connected in $G[\cone(x_2)] \setminus S$. We add $u$ to $V(\profile(x_2))$ and $(u, u_2)$ to $E(\profile(x_2))$. If $u_2' = \emptyset$, we additionally set $\col_{x_2}(u) = -1$ and $\invcol_{x_2}(-1) = u$. If $u_2' \neq \emptyset$ but $u_2 = \emptyset$, we set $\invcol_{x_2}(\col_{x_2}(u_2')) = u$. Note that every step takes $O(1)$ time.

We repeat the same process to move from $x_{i - 1}$ to $x_{i}$, , updating the coloring and inverse coloring of $\adhconn(x_i)$ along the process, until we reach $x_{\ell}$ and then $r$. Since $\ell \leq O(k)$ and each move-up step takes $O(1)$ time, augmenting $u$ to the profile of all nodes $x_1,...,x_{\ell}, r$ takes $O(k)$ time in total. Thus we have proved the lemma.\\

Let us move on to the next case. Suppose $x_u \in V(T)$ is the unique node where $u \in \mrg(x_u)$. We can find $x_u$ by looking at the pointer stored with $u$. In the following case we consider when $x_u$ is an important node.

\subsubsection{Case 2: $x_u \in Y$}\label{sec:query-case2}

Because $x_u$ is important, $\profile(x_u)$ is also computed during update. It must be true that $u \notin V(\profile(x_u))$ as $u \in \mrg(x_u)$. Thus, the goal of this case is to augment $u$ to the profile of $x_u$ quickly and return to Case 1.

\begin{plemma}\label{lemma:query-case2}
    Given $x_u \in Y$, one can in time $O(k)$ augment $u$ to $\profile(x_u)$.
\end{plemma}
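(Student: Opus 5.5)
We need to find, in $O(k)$ time, a vertex $u' \in \adh(x_u)$ connected to $u$ in $G[\cone(x_u)]\setminus S$, or certify that none exists. We then add $u$ to $\profile(x_u)$ with the edge $(u,u')$, or as an isolated vertex with a fresh colour. The plan relies on the patch structure of $\bgraph_S(x_u)$ and the labels precomputed in the update phase by \cref{lemma:precompute-affected-patch-connectivity}. Throughout, \cref{lemma:bgraph-structure} lets us reason entirely inside $\bgraph_S(x_u)$.

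First look up the patch $P_u \in \pset(x_u)$ through the stored pointer, and split into three cases.

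\textbf{$P_u$ is touched ($P_u\in A$).} The update already labelled $u$ as either in the large component $C$ or not. If $u\notin C$, then $u\in B$ and has a colour $\col(u)$. We then need a representative of that colour in $\adh(x_u)$. I would obtain it from the colouring of $\adhconn(x_u)$ from \cref{lemma:computing-all-adhconn}, restricted to $\adh(x_u)$, or by storing during the update an inverse map from the $O(k)$ colours of $B$ to $\adh(x_u)$. This extra update work is $O(k^2)$ and fits inside the $O(k^6)$ budget. If $u\in C$, we need some vertex of $L=\adh(x_u)\cap C$. During the update we can precompute one representative of $C$ in $\adh(x_u)$ (if any exists) using the marks produced when $\profile(x_u)$ was built by \cref{lemma:subset-connectivity-in-bgraphs}.

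\textbf{$P_u$ is untouched and big.} Then $P_u$ is connected in $\bgraph_{\emptyset}(x_u)$ and contains no failed vertex and no deleted adhesion edge with both endpoints inside it. I would argue that $P_u$ stays connected in $\bgraph_S(x_u)$. This is the delicate point, since the BFS tree of $P_u$ may use an adhesion edge that was deleted. If it fails, I would strengthen the definition of ``touched'' so that it covers any patch containing an endpoint pair of a deleted edge; by construction this is already the case. Granting this, $|P_u|=k+1>k$ forces $u\in C$, and we use the precomputed representative of $C$ as before.

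\textbf{$P_u$ is untouched and small.} This is the main obstacle. By \cref{claim:pset-structure}, every neighbour of $P_u$ lies in a big patch. We scan the stored $N(P_u)$ (at most $k+1$ vertices) and the updated $N_{\adh}(P_u)$, skipping failed vertices. If $N(P_u)$ has $k+1$ entries, at least one of them is non-failed, so $u$ reaches a vertex outside $P_u$. The difficulty is bounding the cost of checking whether each neighbour is in $C$ by $O(k)$ total. A neighbour in an untouched big patch is in $C$ by the previous case. A neighbour in a touched patch has a precomputed label. So each check is $O(1)$ and we can reduce to the earlier cases. If no valid neighbour exists, then $P_u$ with its surviving edges is the whole component. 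Here I would need $P_u\cap \adh(x_u)$ with $O(k)$ lookups, for instance by storing for each small patch a pointer to one of its adhesion vertices during preprocessing. The adhesion-neighbour list can have size $k^3$, so instead of scanning it I would only need its first non-deleted entry; the doubly-linked list with counters gives that in $O(1)$. Finally, augment $\profile(x_u)$ as in \cref{lemma:query-case1}, updating $\col$ and $\invcol$ so that Case 1 can continue upward.
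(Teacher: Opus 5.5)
Your proposal is correct and follows the paper's proof essentially step for step. It uses the same split of $P_u$ into touched, untouched big, and untouched small; the update-phase labels and the colouring of $B$; unbreakability for an untouched big patch (which does stay connected, because ``touched'' already covers deleted adhesion edges inside a patch); and the stored $N(P_u)$ and $N_{\adh}(P_u)$, either to reach a neighbouring big patch or to conclude that $P_u$ is a whole component.

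There are two small slips, neither of which affects the $O(k)$ bound. First, the first surviving entry of $N_{\adh}(P_u)$ may be a failed vertex, so, as in the paper, you must scan up to $k+1$ entries rather than one. Second, the $\adhconn(x_u)$ colouring is not directly usable in the touched case, since $u$ need not lie in $V(\adhconn(x_u))$. Your inverse-map option works instead, as does the paper's $O(k)$ scan of $\adh(x_u)$ comparing colours.
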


\proof We aim to find whether or not $u$ is contained in the single connected component $C$ in $\bgraph_S(x)$ with $|C| > k$. If so, we find some $u' \in \adh(x_u)$ that has been marked as in $C$, and add $u$ along with the edge $(u, u')$ in $\profile(x_u)$.

Let $P_u$ be the patch in $\pset(x_u)$ where $u \in P_u$. This can be found in constant time by looking at the pointer stored along side $u$ in the processed bag graph of $x_u$. Suppose $A \subseteq \pset(x_u)$ is the set of affected patches. If $P_u \in A$, then recall \cref{lemma:precompute-affected-patch-connectivity}, we would have already computed whether $u$ is in $C$. If $u$ is not labeled so, we look up $\col(u)$ to find some $u'' \in \adh(x_u)$ where $\col(u'') = \col(u)$, and add $u$ and $(u, u'')$ to $\profile(x_u)$.

If $P_u \notin A$ and $P_u$ is a big patch, then it must be true that $u \in C$ since $P_u$ has $k + 1$ connected vertices and otherwise unbreakability is contradicted. Finally, if $P_u \notin A$ and $P_u$ is a small patch, we need to decide whether $P_u \subseteq C$. We check every $v \in N(P_u)$. The set $N(P)$ must contain a neighbor that is not failed if such neighbor exists since $|N(P_u)| = k + 1$ if $P_u$ has more than $k$ neighbors. If all vertices in $N(P_u)$ are failed, we find another arbitrary neighbor by checking $k + 1$ remaining vertices in $N_{\adh}(P_u)$. Similarly, one such vertex must be in the set if such a neighbor exists since $|N_{\adh}(P_u)| = k^3 + 1$ if there are more than $k^3$ adhesion neighbors. If there exist some neighbor $v$ that is not failed, from \cref{claim:pset-structure} we know that $v$ must be in a big patch $P_v$. If $P_v \notin A$, then $P_v, P_u \subseteq C$ and thus $u \in C$. Otherwise, we can check the label or color of $v$ and augment $u$ accordingly. If all vertices in $N(P_u)$ are failed and $E_{\adh}(P_u)$ is empty, we check if there is a vertex $u'' \in P_u$ such that $u'' \in \adh(x_u)$ and add $u$ with $(u, u'')$ to $\profile(x_u)$. If no such $u''$ exist, we only add $u$ to $\profile(x_u)$ without any edges.

Note that in each scenario at most $O(k)$ vertices are checked, so this step takes $O(k)$ time, finishing the proof.\\

With $u$ augmented to $\profile(x_u)$, we can return to Case 1 since $x_u \in Y$.

Now we consider the cases when $x_u \notin Y$.

\subsubsection{Case 3: $x_u \notin Y$ and $x_u$ is affected}\label{sec:query-case3}

Recall that $x_u$ is affected when $\comp(x_u) \cap S \neq \emptyset$. In this case, there must be exactly one affected child of $x_u$ as otherwise $x_u$ would be in the $\lca$ closure thus be in $Y$. Let this unique affected child be $z$. Recall from \cref{section:single-child-failure} that $x_u$ is in the case of single-child failure as $S$ is also disjoint from $\bag(x_u)$ because $x_u \notin Y$. We will exploit this fact when augmenting $u$ to $\profile(x_u)$.

\begin{pclaim}\label{claim:query-case3}
    Given $x_u \notin Y$ and $x_u$ is in the case of single-child failure, one can in time $O(k)$ augment $u$ to $\profile(x_u)$.
\end{pclaim}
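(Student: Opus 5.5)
The plan is to use the auxiliary structure of \cref{lemma:single-child-failure-query} as a black box. To query it, we need the single affected child $z$ of $x_u$ and $\profile(z)$. Then we read off whether $u$ lies in the large component $C$ of $\bgraph_S(x_u)$, and otherwise how $u$ connects to $\adh(x_u)$.

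First, locate $z$ and obtain $\profile(z)$. Because $x_u$ is affected but not in $Y$, some important node lies strictly below $x_u$. Let $y$ be the topmost important node in $T(x_u)$. We find $y$ in $O(k)$ time by scanning $Y$ and keeping the smallest-depth $y$ with $\lca(x_u,y)=x_u$ and $y \neq x_u$. Then $z=\dir(x_u,y)$ by \cref{lemma:lca-dir}.

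Next we argue, exactly as in \cref{claim:building-affected-child-profile}, that $\comp(z)\setminus\cone(y)$ is disjoint from $S$; otherwise $Y$ being lca-closed would yield an important node above $y$ inside $T(z)$. If $z=y$, then $\profile(z)$ was already computed during update. Otherwise we fetch $\torso(z,y)$ in $O(1)$ time via \cref{lemma:torso-query} and combine it with $\profile(y)$ using \cref{lemma:combining-profiles-with-torsos}, again in $O(1)$ time. We also confirm the hypothesis of the single-child case: $S\cap\bag(x_u)=\emptyset$, since a failed vertex in $\bag(x_u)$ sits in the margin of $x_u$ or of an ancestor of $x_u$. The first would put $x_u\in X$. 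The second would make that ancestor not affected through $x_u$, so this case does not arise here.

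Now query \cref{lemma:single-child-failure-query} with $(x_u,\profile(z),u)$ in $O(k)$ time. This returns $L=\adh(x_u)\cap C$, the set $I$ with its coloring $\col$, and a bit telling whether $u\in C$. There are three cases.
\begin{itemize}
\item If $u\in C$ and $L\neq\emptyset$, pick any $u'\in L$ and augment $\profile(x_u)$ with $u$ and the edge $(u,u')$.
\item If $u\in C$ and $L=\emptyset$, or if $u\notin C$ and $u\notin I$, then $u$ is not connected to $\adh(x_u)$ in $G[\cone(x_u)]\setminus S$. We add $u$ as an isolated vertex.
\item If $u\in I$, scan $\adh(x_u)$, at most $k$ vertices, for some $u''\in I$ with $\col(u'')=\col(u)$, and add the edge $(u,u'')$.
\end{itemize}
Correctness follows from \cref{lemma:bgraph-structure}(1): connectivity in $\bgraph_S(x_u)$ equals connectivity in $G[\cone(x_u)]\setminus S$.

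One loose end is that $\profile(x_u)$ itself is not among the profiles computed during update, since $x_u\notin Y$ and $x_u$ is not an affected child of a $Y$-node in general. If needed, it can be obtained from $\profile(y)$ and $\torso(x_u,y)$ via \cref{lemma:combining-profiles-with-torsos}, because $\comp(x_u)\setminus\cone(y)$ avoids $S$. Once $u$ is augmented, we continue upward as in Case 1. The hardest step is the bookkeeping: making sure $\profile(z)$ and $\profile(x_u)$ are available in $O(1)$ time, and that the returned data suffices to produce a valid augmentation. The total time is dominated by the single $O(k)$ query plus $O(k)$ scans.
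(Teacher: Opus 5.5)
Your proposal follows the paper's proof essentially step for step: find the topmost important node $y$ in $T(x_u)$, set $z=\dir(x_u,y)$, obtain $\profile(z)$ and $\profile(x_u)$ from $\profile(y)$ and the corresponding torsos via \cref{lemma:combining-profiles-with-torsos}, query \cref{lemma:single-child-failure-query}, and attach $u$ to a vertex of $L$ or to a same-colored adhesion vertex. Your handling of $z=y$, $L=\emptyset$ and $u\notin C\cup I$ are welcome refinements that the paper glosses over. The one flaw is your ``confirmation'' that $S\cap\bag(x_u)=\emptyset$: it is unnecessary, because single-child failure is a hypothesis of the claim, and the argument is wrong, because a failed vertex of $\adh(x_u)$ lying in the margin of a strict ancestor of $x_u$ is compatible with $x_u\notin Y$, so simply drop it.
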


\proof Let $y$ be the important node in the subtree $T(x_u)$ with the smallest depth. We can find $y$ in $O(k)$ time by checking all $y \in Y$ and select the node $y$ where $\lca(x_u, y) = x_u$ with the smallest depth. Then the single affected child of $x$ would be $z = \dir(x_u, y)$. First we obtain $\torso(x_u, y)$ and $\torso(z, y)$ by calling \cref{lemma:torso-query}, then get $\profile(x_u)$ and $\profile(z)$ by \cref{lemma:combining-profiles-with-torsos}. Note that we have $\profile(y)$ to do so because $y \in Y$.

Next, we apply \cref{lemma:single-child-failure-query}, given as input $x_u$, $\profile(z)$, and $u$ to obtain $L, I, \col$ as well as the label of $u$ in $O(k)$ time. Suppose $C$ is the connected component with $|C| > k$ in $\bgraph_S(x_u)$. If the label indicates that $u \in C$, we choose an arbitrary vertex $u' \in L$ and augment $u$ to $\profile(x_u)$ with the edge $(u, u')$. Otherwise, we choose some $u'' \in \adh(x)$ where $\col(u) = \col(u'')$ and augment $u$ to $\profile(x_u)$ via $(u, u'')$. Note that checking all vertices in $\adh(x)$ takes $O(k)$ time, thus we can augment $u$ to $\profile(x_u)$ in $O(k)$ time. This proves the lemma.\\

To return to Case 1, we still need to augment $u$ to a precomputed profile. Let $x_1$ be the first affected node on the $x_u$-$r$ path in $T$ and let $z_1 = \dir(x_1, x_u)$. Note that $z_1$ must be an affected child of $x_1$ since $z_1 = \dir(x_1, y)$ where $y$ is the important node in $T(x_u)$ with the smallest depth. Thus augmenting $u$ to the profile of $z_1$ would suffice as $\profile(z_1)$ is precomputed.

\begin{plemma}\label{lemma:query-case3}
    Given $x_u \notin Y$ and $x_u$ is in the case of single-child failure, suppose $x_1$ is the first important node along the $x_u$-$r$ path in $T$, one can augment $u$ to $\profile(z_1)$ where $z_1 = \dir(x_1, x_u)$ in $O(k)$ time.
\end{plemma}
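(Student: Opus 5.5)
First apply \cref{claim:query-case3} to augment $u$ to $\profile(x_u)$ in $O(k)$ time. This yields a vertex $u' \in \adh(x_u)$ connected to $u$ in $G[\cone(x_u)] \setminus S$, or the fact that no such vertex exists. In the latter case the component of $u$ in $G \setminus S$ lies entirely inside $\comp(x_u)$. We then mark $u$ as isolated for all higher profiles and skip the remaining lifting (a sentinel colour, as in the proof of \cref{lemma:query-case1}). So assume $u'$ exists. The next step is to lift $u'$ from $\adh(x_u)$ to $\adh(z_1)$ in a single jump, using a torso between $z_1$ and the unique important node below it.

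Let $y$ be the topmost node of $Y$ in $T(x_u)$, which was already found in the proof of \cref{claim:query-case3}. We know $x_1$ is an ancestor of $y$ and $z_1 = \dir(x_1, x_u) = \dir(x_1, y)$. We also know that $\comp(z_1) \setminus \cone(y)$ is disjoint from $S$: otherwise some important node would lie in $T(z_1)\setminus T(y)$, and its $\lca$ with $y$ would be an important node strictly between $x_1$ and $y$, or above $y$ within $T(z_1)$. Both options contradict the choice of $x_1$ and $y$, by the same argument as in \cref{claim:building-affected-child-profile}. Hence we may query \cref{lemma:torso-query} for $\torso(z_1, y)$ in $O(1)$ time. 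We then feed it together with $\profile(y)$ to \cref{lemma:combining-profiles-with-torsos}, which gives $\profile(z_1)$ with a colouring $\col$ on $\adh(z_1)\cup\adh(y)$ and an inverse colouring $\invcol$.

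The difficulty is that $u'$ lies in $\adh(x_u)$, which is in general not in $\adh(z_1)\cup\adh(y)$. My plan is to split the path at $x_u$. Call \cref{lemma:torso-query} for $\torso(z_1, x_u)$; this is allowed when $z_1 \neq x_u$, and if $z_1 = x_u$ we are already done. Apply \cref{lemma:combining-profiles-with-torsos} with $\profile(x_u)$ (the unaugmented version, computed in \cref{claim:query-case3}) and $\torso(z_1, x_u)$. The hypothesis holds because $\comp(z_1)\setminus\cone(x_u)$ avoids $S$ by the same argument. The output colouring $\col'$ is defined on $\adh(z_1)\cup\adh(x_u)$, and colours agree exactly when the vertices are connected in $G[\cone(z_1)]\setminus S$. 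This is sound because $\profile(x_u)$ encodes connectivity inside $\cone(x_u)$ and the torso encodes connectivity in the rest of $\cone(z_1)$. Set $u_1 = \invcol'(\col'(u'))$. If $u_1 \neq \emptyset$, augment $u$ to $\profile(z_1)$ with the edge $(u, u_1)$. Otherwise $u$ reaches no vertex of $\adh(z_1)$, and we record $u$ as a new singleton colour in $\profile(z_1)$.

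Every step is a constant number of table lookups, $\lca$/$\dir$ queries, and the $O(k)$ work in \cref{claim:query-case3}, so the total time is $O(k)$. After this we are in Case 1 with $x = z_1$, which is an affected child of the important node $x_1$, and \cref{lemma:query-case1} applies. The main thing to check carefully is the disjointness hypothesis of \cref{lemma:combining-profiles-with-torsos} for the pair $(z_1, x_u)$. That is, no failed vertex lies in $\comp(z_1)\setminus\cone(x_u)$: every failed vertex in $\comp(z_1)$ must lie in $\comp(y) \subseteq \cone(x_u)$, which follows from $y$ being the unique topmost important node in $T(z_1)$.
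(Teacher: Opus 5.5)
Your proposal is correct and follows the paper's proof. After \cref{claim:query-case3}, both arguments feed the unaugmented $\profile(x_u)$ and $\torso(z_1,x_u)$ into \cref{lemma:combining-profiles-with-torsos}, justified by $\comp(z_1)\setminus\cone(x_u)$ avoiding $S$, and take $\invcol(\col(u'))$ as the proxy of $u$ in $\adh(z_1)$. Your preliminary lookup of $\torso(z_1,y)$ and recomputation of $\profile(z_1)$ from $\profile(y)$ is unnecessary, since that profile already exists from the update phase; on the other hand, your explicit handling of $z_1 = x_u$ and of a missing $u'$ covers cases the paper passes over.
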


\proof We start by using \cref{claim:query-case3} to augment $u$ to $\profile(x_u)$ in $O(k)$ time. Next we compute $x_1$ in time $O(k)$ by selecting the node $x_1 \in Y$ where $\lca(x_1, x_u) = x_1$ with the largest depth. Let $z_1 = \dir(x_1, x_u)$ and suppose $y$ is the smallest-depth important node in $T(x_u)$. There cannot be any important node in $V(T(z_1)) \setminus V(T(x_u))$ because otherwise there will be some node $x' \in Y$ along the $x_u$-$z_1$ path that is the lowest common ancestor of such an important node and $y$, contradicting the selection of $x_1$.

This implies that $\comp(z_1) \setminus \cone(x_u)$ is disjoint from $S$, and we can apply \cref{lemma:combining-profiles-with-torsos} to obtain $(\profile(z_1), \col, \invcol)$. Suppose $u'$ is a neighbor of $u$ in the augmented $\profile(x_u)$. Let $u'' = \invcol(\col(u'))$, the $u, u'$, and $u''$ must be connected in $\bgraph_S(z_1)$ by definition. This way we can augment $u$ to $\profile(z_1)$ with the edge $(u, u'')$. If $u'' = \emptyset$, we let $\col(u) = -1$ and $\invcol(-1) = u$. Thus we can augment $u$ to $\profile(z_1)$ in $O(k)$ time, finishing the proof.\\

We now return to Case 1 with $\profile(z_1)$ augmented with $u$.

The final remaining case is when $x_u \notin Y$ and $x_u$ is unaffected.

\subsubsection{Case 4: $x_u \notin Y$ and $x_u$ is unaffected}\label{sec:query-case4}

In this case, there must not be any node $y \in T(x_u)$ with $y \in Y$ because $\comp(x_u) \cap S = \emptyset$. Then, by the regularity of the decomposition $\mathcal{T}$, we know that $\profile(x_u)$ is simply a complete graph. Since $G[\cone(x_u)]$ is connected in this situation, we can augment $u$ by adding the edge $(u, u')$ in $\profile(x_u)$ where $u'$ is an arbitrary vertex in $\adh(x_u)$. We will next show how to augment $u$ to a precomputed profile.

\begin{plemma}\label{lemma:query-case4}
    Given $x_u \notin Y$ and that there are no important nodes in the subtree $T(x_u)$, one can in time $O(k)$ augment $u$ to precomputed profile.
\end{plemma}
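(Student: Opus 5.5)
The plan is to lift $u$ from $\profile(x_u)$ directly to a precomputed profile in one jump, using the torso structure. Since $r \in Y$ and $x_u \notin Y$, let $x_1$ be the deepest node of $Y$ that is a strict ancestor of $x_u$; it is the $y \in Y$ with $\lca(y,x_u)=y$ of largest depth. We find it in $O(k)$ time by scanning $Y$ with \cref{lemma:lca-dir}, and we set $z_1=\dir(x_1,x_u)$. We then split into two subcases, according to whether $z_1$ is affected.

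\textbf{Subcase $z_1$ affected.} Then $\profile(z_1)$ was computed during the update, so the goal is to augment $u$ to it and return to Case 1. First augment $u$ to $\profile(x_u)$ via an arbitrary $u'\in\adh(x_u)$, as explained before the lemma. If $x_u=z_1$, we are done. Otherwise, check that $\comp(z_1)\setminus\cone(x_u)$ is disjoint from $S$. Any failed vertex there would lie in the margin of some node $w$ in $T(z_1)\setminus T(x_u)$, hence $w \in X\subseteq Y$. But then $\lca(w,y)$ lies on the $z_1$--$x_u$ path for the important node $y$ that makes $z_1$ affected. This contradicts the choice of $x_1$, by the same argument as in \cref{lemma:query-case3}. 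Since $\comp(x_u)\cap S=\emptyset$, $\profile(x_u)$ is complete on $\adh(x_u)$ and is known without any computation. Query $\torso(z_1,x_u)$ via \cref{lemma:torso-query} and feed it, together with the complete profile, to \cref{lemma:combining-profiles-with-torsos}. This yields $(\profile(z_1),\col,\invcol)$. Set $u''=\invcol(\col(u'))$ and add the edge $(u,u'')$. If $u''=\emptyset$, use the $\col(u)=-1$ convention exactly as in \cref{lemma:query-case3}.

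\textbf{Subcase $z_1$ unaffected.} Here $\comp(z_1)\cap S=\emptyset$, so $G[\comp(z_1)]$ is connected by regularity and $\profile(z_1)$ is complete. Moreover $u\in\comp(z_1)$, so $u$ is connected in $G[\cone(z_1)]\setminus S$ to every non-failed vertex of $\adh(z_1)$. Note that $\adh(z_1)\subseteq\bag(x_1)$ may contain failed vertices. The aim is to augment $u$ to $\profile(x_1)$, which is precomputed because $x_1\in Y$. If $\adh(z_1)\setminus S$ is nonempty, pick any $w\in\adh(z_1)\setminus S$. Then $w \in \bag(x_1)$, and the augmentation reduces to the Case 2 machinery for $w$ in $\bgraph_S(x_1)$. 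So we run the procedure of \cref{lemma:query-case2} on $w$ instead of $u$, using the patch of $w$, the touched-patch labels and colors from \cref{lemma:precompute-affected-patch-connectivity}, or its neighbor sets. This finds $u''\in\adh(x_1)$ connected to $w$, or certifies that there is none; we then add $u$ with edge $(u,u'')$, or as an isolated vertex. This costs $O(k)$. If $\adh(z_1)\subseteq S$, then $u$'s component is confined to $\comp(z_1)$ and $u$ is added isolated. Scanning $\adh(z_1)$ to find $w$ takes $O(k)$ time.

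The main obstacle is the unaffected-$z_1$ subcase. The issue is the correctness of reusing \cref{lemma:query-case2} for an arbitrary bag vertex $w$ of $x_1$ rather than a margin vertex. I expect this to go through, since that proof only uses $w\in\bag(x_1)$ together with the precomputed patch data of $x_1$. After augmentation in either subcase, we are in Case 1, which finishes in $O(k)$, so the total time is $O(k)$.
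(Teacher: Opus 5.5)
Your unaffected-$z_1$ subcase matches the paper's argument. Restricting to $w\in\adh(z_1)\setminus S$ is, if anything, more careful than the paper's ``arbitrary $u'\in\adh(z_1)$''.

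The affected-$z_1$ subcase has a genuine gap. The claim that $\comp(z_1)\setminus\cone(x_u)$ is disjoint from $S$ is false, and it fails \emph{every} time this subcase occurs.
\begin{enumerate}
\item Take $s\in S\cap\comp(z_1)$. Its margin node $w$ lies in $T(z_1)$ and in $X\subseteq Y$.
\item By hypothesis $w\notin T(x_u)$. By the choice of $x_1$, $w$ is not on the $z_1$--$x_u$ path. So $w$ sits in a side branch hanging off that path.
\item Since $w$ is the topmost node whose bag contains $s$, we get $s\notin\cone(x_u)$. Hence $s\in\comp(z_1)\setminus\cone(x_u)$.
\end{enumerate}
The argument of \cref{lemma:query-case3} does not transfer. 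There it relied on an important node \emph{inside} $T(x_u)$, so that its $\lca$ with any side-branch important node lands strictly on the path. In Case 4 there is no such node. Your ``$y$ that makes $z_1$ affected'' is itself a side-branch node; with a single failure, $y=w$ and $\lca(w,y)=w$.

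The consequence is that $\torso(z_1,x_u)$ records failure-free connectivity in $G[\cone(z_1)\setminus\comp(x_u)]$, including paths through $s$. So \cref{lemma:combining-profiles-with-torsos} can return a wrong $\profile(z_1)$. For example, it may connect $u$ to a vertex of $\adh(z_1)$ reachable only through $s$, and the final answer can then be wrong.

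The missing idea is to stop at the node where the failed side branch leaves the $z_1$--$x_u$ path.
\begin{enumerate}
\item Take any $y\in Y$ with $\lca(z_1,y)=z_1$; it exists since $z_1$ is affected and is found in $O(k)$ time.
\item Set $x'=\lca(y,x_u)$, $z'=\dir(x',y)$ and $z''=\dir(x',x_u)$.
\item $Y$ is $\lca$-closed, has no node in $T(x_u)$, and has no node on the $z_1$--$x_u$ path. Hence every important node of $T(z_1)$ lies in $T(z')$.
\item So $\comp(z'')$ is failure-free. By regularity, $u$ is connected in $G[\cone(z'')]\setminus S$ to any non-failed $u'\in\adh(z'')\subseteq\bag(x')$.
\item $x'$ has $z'$ as its single affected child, so the Case 3 machinery applies to $u'$ at $x'$. \Cref{lemma:single-child-failure-query} augments $u'$ to $\profile(x')$.
\item The jump via $\torso(z_1,x')$ is now legitimate, because $\comp(z_1)\setminus\cone(x')$ really is failure-free.
\end{enumerate}
This puts $u'$, and hence $u$, into $\profile(z_1)$ in $O(k)$ time. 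It is exactly the paper's route: the $z'$, $x'=\parent(z')$, $z''$ construction followed by \cref{lemma:query-case3}.
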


\proof Suppose $x_1$ is the first important node on the $x_u$-$r$ path in $T$. We begin by computing $x_1$ in $O(k)$ time by selecting the node $x_1 \in Y$ with the largest depth where $\lca(x_1, x_u) = x_1$. Let $z_1 = \dir(x_1, x_u)$. There are two situations in this case, and let us first consider when $z_1$ is an affected child of $x_1$. In this situation, there must be exactly one node $z'$ such that $\parent(z')$ is a node in the $z_1$-$x_u$ path and the subtree $T(z')$ contains some important nodes. This is true because otherwise a node in the $z_1$-$x_u$ path will be a lowest common ancestor of two important nodes and be in $Y$, contradicting the selection of $x_1$. Let $x' = \parent(z')$ and let $z'' = \dir(x', x_u)$.

Since there must not be any important nodes in $T(z'')$, the subgraph $G[\cone(z'')]$ must be connected by regularity of $\mathcal{T}$. This implies that $u$ is connected to all vertices in $\adh(z_u')$, so we can find an arbitrary vertex $u' \in \adh(z'')$ as a proxy of $u$, to augment $u$ to $\profile(z_1)$ we simply augment $u'$ to $\profile(x_1)$ and replace $u'$ with $u$. Note that $u \in \bag(x')$ and $z'$ is the single affected child of $x'$, meaning that $x'$ is in the case of single-child failure, the we are back in Case 3 and can use \cref{lemma:query-case3} to augment $u'$ (and thus $u$) to $\profile(z_1)$ in $O(k)$ time.

Now let us consider the situation when $z_1$ is not affected. Then $G[\cone(z_1)]$ must be connected by regularity, and we can find an arbitrary vertex $u' \in \adh(z_1)$ as a proxy of $u$. Since $u' \in \bag(x_1)$ and $x_1 \in Y$, we return to Case 2 and can use \cref{lemma:query-case2} in $O(k)$ time to augment $u'$ to $\profile(x_1)$. Replacing $u'$ by $u$ in $\profile(x_1)$ would be a valid $u$-augmented $\profile(x_1)$ as $u$ and $u'$ are connected. Therefore, we conclude the proof as in both situations, we augment $u$ to a precomputed profile in $O(k)$.\\

We can then return to Case 1 since $u$ is augmented in a precomputed profile.

\subsubsection{Complete query process}

Given the query $u, v \in V(G)$, we first identify $x_u, x_v$ where $u \in \mrg(x_u), v \in \mrg(x_v)$ using their stored pointers. Starting with $u$, we first decide if $x_u \in Y$. If so, we proceed to Case 2. Otherwise, we decide between Case 3 and Case 4 by checking if there exists an important node $y \in Y$ where $\lca(x_u, y) = x_u$ in $O(k)$ time. We proceed to Case 3 if such $y$ exists as $x_u$ must be affected, and we move to Case 4 otherwise. The same process is repeated for $v$. After all steps are done, we check $\profile(r)$, which contains exactly $u, v$ since $\adh(r) = \emptyset$. One can easily verify by the proofs in each step that $\profile(r)$ is a valid profile, thus $u$ and $v$ are connected in $G[\cone(r)] \setminus S = G \setminus S$ if and only if they are connected in $\profile(r)$. We finally output "connected" if $u, v$ are connected in $\profile(r)$, and output "not connected" otherwise.

Since each step takes $O(k)$ time, one can answer the query on any $u, v \in V(G)$ in $O(k)$ time. This proves the update time guarantee of \cref{thm:main} and thus completes the proof of the entire theorem.

\section{Data structure for single-child failure}\label{sec:single-child-failure}
In this section we give the proof for \cref{lemma:single-child-failure-query}, which is the only missing component of proving \cref{thm:main}. The lemma is restated below for convenience.

\begin{plemma}
    There is a data structure that uses $2^{O(k^2)} n$ space, initializes in $2^{O(k^2)} n$ time, and answers the query in $O(k)$ time: given a node $x \in V(T)$ in the case of single-child failure, $\profile(z)$ with $z$ being its single affected child, and a vertex $u \in \bag(x)$, output the following.
    \begin{enumerate}
        \item Let $C$ be the unique connected component in $\bgraph_S(x)$ where $|C| > k$, the set $L = \adh(x) \cap C$.
        \item The set of vertices $I$ where each $v \in I$ is not in $C$ and is connected to some $v' \in \adh(x)$, with a coloring $\col: I \rightarrow [O(k))]$ such that $\col(u) = \col(v)$ for any $u, v \in I$ if and only if they are connected in $\bgraph_S(x)$.
        \item A label indicating whether $u \in C$.
    \end{enumerate}
\end{plemma}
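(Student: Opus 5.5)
The key observation is that when $x$ is in the case of single-child failure, $\bgraph_S(x)$ is determined by $\bgraph_{\emptyset}(x)$ together with $\profile(z)$. Indeed, $S \cap \bag(x) = \emptyset$, and every child other than $z$ is unaffected, so its supported adhesion edges all survive. The only edges that may be deleted are the $z$-supported adhesion edges, and such an edge $(u,v)$ survives exactly when $(u,v) \in E(\profile(z))$. Since $\profile(z)$ is a graph on $\adh(z)$ with $|\adh(z)| \le k$, there are at most $2^{k^2}$ possibilities for it. So for each node $x$ and each child $z$ I will tabulate the answers for every possible profile graph on $\adh(z)$. Storing this for all pairs $(x,z)$ would cost $2^{O(k^2)}$ per child. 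There are at most $n$ nodes in total, hence at most $n$ parent--child pairs, which gives $2^{O(k^2)} n$ entries overall. Each entry stores $L$, $I$ and $\col$, which have size $O(k)$ by \cref{lemma:bgraph-structure}.

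\textbf{Per-vertex labels.} Item 3 cannot be stored for every $u \in \bag(x)$ and every profile, because the total would be $2^{O(k^2)}\sum_x|\bag(x)|\cdot(\text{number of children})$. I will use the patch-set structure instead. A vertex $u$ can fail to lie in $C$ only if it belongs to one of at most $k$ vertices outside $C$ (\cref{lemma:bgraph-structure}, property 2). These exceptional vertices depend only on the entry indexed by $(x,z,\profile(z))$, so I will store them explicitly as a set $J$ of size at most $k$, together with their colors. Vertices of $J$ that reach $\adh(x)$ form $I$; the others are simply marked as not in $C$. At query time I answer item 3 by checking membership of $u$ in $J$. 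To make this $O(k)$ time without hashing, I scan $J$, which costs $O(k)$. The query then consists of locating the table entry (an index computed from $\profile(z)$, which costs $O(k^2)$ if done naively) and scanning.

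\textbf{Reducing indexing to $O(k)$.} Getting the indexing cost down to $O(k)$ is the main obstacle. The profile is a disjoint union of cliques, so it is determined by a partition of $\adh(z)$, which can be encoded in $O(k)$ words (one block id per adhesion vertex). If the caller passes $\profile(z)$ as the pointer produced by the table of \cref{lemma:combining-profiles-with-torsos}, I can instead precompute a canonical index for each stored profile graph and look it up in $O(1)$. With either encoding, locating the entry costs $O(k)$, and then I read off $L$, $I$, $\col$ and scan $J$ in $O(k)$.

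\textbf{Preprocessing.} For each pair $(x,z)$ and each partition of $\adh(z)$, of which there are $k^{O(k)} \le 2^{O(k^2)}$, I would construct $\bgraph_S(x)$ by deleting the $z$-supported adhesion edges absent from the partition. Doing this naively costs $|\bgraph(x)|$ per entry, which is too slow when $x$ has many children. Instead I compute $J$ by BFS limited to $k+1$ vertices, using \cref{lemma:subset-connectivity-in-bgraphs}, starting only from $\adh(z)\cup\adh(x)$ and the patches touched by deleted edges. The justification is that any vertex not in $C$ must be cut off, and its small component must contain an endpoint of a deleted edge, since $\bgraph_{\emptyset}(x)$ is connected by regularity. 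This costs $\poly(k)$ per entry, and the total is $2^{O(k^2)} n$.
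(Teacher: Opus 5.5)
Your proof is correct. It uses the paper's skeleton: one table per non-root $z$ with $x=\parent(z)$, indexed by the possible values of $\profile(z)$. Each entry is obtained by deleting from $\bgraph_{\emptyset}(x)$ the $z$-supported adhesion edges missing from the profile. Where you depart from the paper is in how item 3 is answered.

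\textbf{The paper's route.} Each entry stores the touched-patch set $A$, the labels and colors of \cref{lemma:precompute-affected-patch-connectivity}, and the updated adhesion-neighbor lists of the small patches, about $k^6$ words per entry. At query time it reruns the patch procedure of \cref{lemma:query-case2}:
\begin{itemize}
\item if $P_u$ is touched, read its stored label;
\item if $P_u$ is big and untouched, conclude $u\in C$;
\item if $P_u$ is small and untouched, probe $N(P_u)$ and the updated $N_{\adh}(P_u)$ for a neighbor lying in a big patch.
\end{itemize}

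\textbf{Your route.} You use that $S\cap\bag(x)=\emptyset$, so $J=\bag(x)\setminus C$ has at most $k$ vertices by \cref{lemma:bgraph-structure}. It depends only on $(z,\profile(z))$, so it can be stored explicitly and scanned in $O(k)$ time. To build entries, you note that every component other than $C$ contains an endpoint of a deleted $z$-supported edge and hence meets $\adh(z)$. So bounded BFS from $\adh(z)\cup\adh(x)$ finds $J$ in $\poly(k)$ time per entry. Starting also from the touched patches is redundant, since their deleted-edge endpoints already lie in $\adh(z)$.

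\textbf{Comparison.} Your version is self-contained and needs neither patch sets nor neighbor lists for this lemma. It stores $O(k)$ words per entry, and it makes the $O(k)$ query bound transparent. The paper's version reuses the Case 2 machinery. That machinery is designed for important nodes, where failed vertices of large degree inside $\bag(x)$ make an explicit enumeration of $\bag(x)\setminus(C\cup S)$ expensive. Here no vertex of the bag fails, so that machinery is heavier than necessary.

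\textbf{Two small points.}
\begin{itemize}
\item Connectivity of $\bgraph_{\emptyset}(x)$ follows from regularity only for non-root $x$, or when $G$ is connected. This suffices because $r\in Y$ is never queried in this case.
\item Your discussion of turning $\profile(z)$ into a table index in $O(k)$ time fills in a detail the paper simply asserts takes constant time.
\end{itemize}
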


\proof For a specific node $x$ in this case, given $\profile(z)$, we can build $\bgraph_S(x)$ explicitly. As $z$ is the single affected child of $x$, only the connectivity provided by the $z$-supported adhesion edges will be affected by the deletion of $S$ in $\bgraph_{\emptyset}(x)$. Other adhesion edges and edges in $G[\bag(x)]$ should remain unchanged as $S$ is disjoint from the components at all other children of $x$ and $\bag(x)$. Thus, to obtain $\bgraph_S(x)$ from $\bgraph_{\emptyset}(x)$, we only need to delete any $z$-supported adhesion edge $(u, v)$ that is not in $\profile(z)$ since $u, v$ would not be connected in $G[\cone(z)] \setminus S$ by the definition of profiles. This takes $O(k^2)$ time as there are at most $k^2$ adhesion edges supported by $z$ and checking whether a specific edge is in $\profile(z)$ takes constant time.

Then we can compute $L$ and $I$ given a specific $x$ and $\profile(z)$ using $\bgraph_S(x)$. For every vertex $v \in \adh(x)$, we perform BFS starting from $v$ for $k + 1$ steps. If more than $k$ vertices are reached, $v$ must be in $C$ by \cref{lemma:bgraph-structure}, and we add $v$ to $L$. Otherwise, we add all visited vertices to $I$ and assign these vertices with the same color in $\col$. This takes $O(k^3) = O(k^3)$ time in total since $\adh(x) \leq k$ and each $(k + 1)$-step BFS takes $O(k^2)$ time. Since both $L$ and $I$ have size at most $O(k)$, this takes $O(k)$ space to store.

Next, we process the patch set $\pset(x)$ with respect to $\bgraph_S(x)$. Using \cref{claim:computing-affected-patches-and-updating-neighbor-sets}, we can compute the set $A$ of affected patches and update the set $N_{\adh}(P)$ for each small patch $P$. Let $U_{\adh}$ be the set of updated outgoing adhesion edge sets, then it takes $O(k^3 \cdot k^3) = O(k^6) = O(k^6)$ space to store and time to compute since there are at most $k^3$ deleted adhesion edges and each edge set stores at most $k^3 + 1$ edges. Using the method described in \cref{lemma:precompute-affected-patch-connectivity}, we can compute and store the additional described labels and colorings, which take $O(k^5)$ time.

To achieve fast query time, we precompute and store $L, I, \col$ with $U_{\adh}$ and other labels and colorings for all possible $x$ and $\profile(z)$ queries by constructing the following. For each non-root node $z \in V(T)$, let $x = \parent(z)$, and we store a table $F_z$ indexed by all possible profiles. Let $S_p$ be the set of all graphs on $k$ vertices, which must contain all possible profiles since any profile has at most $k$ vertices. Note that $|S_p| = 2^{k^2} = 2^{O(k^2)}$. For each $G_p \in S_p$, we compute its entry $F_z(G_p)$ by setting $\profile(z) = G_t$, construct $\bgraph_S(x)$ by deleting edges in $\bgraph_{\emptyset}(x)$ accordingly, compute $L, I, \col, U_{\adh}$ with other labels and colorings from it, and store them in $F_z(G_p)$. When given any query on $x$ and $\profile(z)$, we can look up the entry $F_z(\profile(z))$ in constant time.

There are $|T| - 1 = O(n)$ such tables and each has $2^{O(k^2)}$ entries, where the content of each entry takes $O(k^6)$ space to store and $O(k^6)$ time to compute. Then we can prepare this table in $2^{O(k^2)} \cdot O(k^6n) = 2^{O(k^2)}n$ time and space.

Now for any query given $x$, $\profile(z)$, and any $u \in \bag(x)$, we can first return $L, I, \col$ in constant time by looking up the table. We then decide the label of $u$ using the process shown in \cref{lemma:query-case2}. Note that we can do this because we have stored all the updated neighbor sets with labels and colorings required by this procedure. Therefore, we can answer any query in $O(k)$ time. This completes the proof.

\section{Tradeoffs between update time and query time}\label{sec:tradeoffs}
In this section we present two different configurations of our update and query algorithms. More specifically, we will show how to obtain a connectivity oracle that updates in $O(k^5)$ time and answers query in $O(k^2)$ time, as well as an oracle with $O(k^4)$ update time and $O(k^3)$ query time. We will not build additional structures or do any additional precomputing during the preprocessing and update phases respectively.

The first configuration of our data structure is summarized in the following theorem.

\begin{ptheorem}\label{thm:config1}
    There exists a deterministic vertex-failure connectivity oracle that uses $2^{O(k^2)}n + O(k^2n \alpha_c(n))$ space, $k^{O(k^2)} n + O(m + k^3 n \log^2 n + k^6n \log n)$ preprocessing time, $O(k^5)$ update time, and $O(k^2)$ query time.
\end{ptheorem}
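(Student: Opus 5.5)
We reuse the data structure of \cref{sec:prep} as is, so the space and preprocessing bounds carry over from \cref{thm:main}. The change is to move work between the update and query phases. The only $O(k^6)$ step in the update is step~4: we run \cref{lemma:precompute-affected-patch-connectivity} on all $O(k)$ important nodes \emph{and} on all their affected children, up to $O(k^2)$ nodes in total, and each run costs $O(k^5)$. So the first task is to see which of these calls the query actually needs.

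\textbf{Update.} In the $O(k^5)$ configuration we keep steps~1--3 unchanged. Their cost is $O(k^2)+O(k^4)+O(k^5)$ by \cref{lemma:computing-all-important-profiles} and \cref{lemma:computing-all-adhconn}. We then run \cref{claim:computing-affected-patches-and-updating-neighbor-sets} and \cref{lemma:precompute-affected-patch-connectivity} only for the important nodes $x\in Y$. We skip their affected children, since the query touches patch data of an affected child $z_1$ only through profiles, torsos and $\adhconn$ colorings, never through $\pset(z_1)$. This takes $O(k)\cdot O(k^5)=O(k^6)$ time, which is too much. To fix this, I would tighten the analysis of \cref{lemma:precompute-affected-patch-connectivity}.

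The bound $|A|=O(k^2)$ comes from deleted adhesion edges, and each deleted edge only touches a patch containing both endpoints. Instead of calling \cref{lemma:subset-connectivity-in-bgraphs} on every vertex of every touched patch, we run the $(k+1)$-bounded BFS from one representative per touched patch, together with the vertices of $S\cap\bag(x)$ and their neighbours inside the patch. This gives $O(k^2)$ start vertices per node, at $O(k^2)$ each, so $O(k^4)$ per node and $O(k^5)$ over all of $Y$. The claim needed is that a touched patch splits into at most $O(k)$ pieces, each containing a representative we explore. If this is hard to justify, the fallback is to not precompute patch labels at all.

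\textbf{Query.} The query follows Cases~1--4 of \cref{sec:update-query}. Cases~1, 3 and 4 already run in $O(k)$ time and use only precomputed profiles, colorings, and the table of \cref{lemma:single-child-failure-query}. The table's content does not depend on step~4, because it stores its own $F_z$ entries. Case~2 is the only place where step~4 labels are read. If $P_u$ is touched, or a neighbour found via $N(P_u)$ or $N_{\adh}(P_u)$ lies in a touched big patch, we decide membership in $C$ directly. We run a BFS in $\bgraph_S(x_u)$ from that vertex, stopping after $k+1$ vertices. This costs $O(k^2)$ by the same argument as in \cref{lemma:subset-connectivity-in-bgraphs}. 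If the BFS stops early, it has found the whole small component, and we scan it for a vertex of $\adh(x_u)$. Such a BFS happens only a constant number of times per query, once for $u$ and once for $v$, which gives $O(k^2)$ query time.

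\textbf{Main obstacle.} The delicate point is Case~2 when $P_u$ is untouched and small, because $\bgraph_S(x_u)$ contains deleted adhesion edges. We then need $N_{\adh}(P_u)$ to be up to date, which requires \cref{claim:computing-affected-patches-and-updating-neighbor-sets} to be run for every $x\in Y$. That is only $O(k)\cdot O(k^3)=O(k^4)$ in total, so it is cheap. If the tighter bound in \cref{lemma:precompute-affected-patch-connectivity} fails, there is a simpler safe route: drop step~4 entirely, keeping only the neighbour-set maintenance, and answer every Case~2 query by a single $O(k^2)$ bounded BFS from $u$. The update is then dominated by \cref{lemma:computing-all-adhconn} at $O(k^5)$, and the query is $O(k^2)$. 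I expect this fallback to be the proof actually written, since it needs no new structural claim about touched patches.
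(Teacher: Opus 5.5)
Your fallback route is exactly the paper's proof: keep update steps~1--3 so the update is dominated by \cref{lemma:computing-all-adhconn} at $O(k^5)$, skip \cref{sec:precompute-bgraph-connectivity}, and answer Case~2 by a single $(k+1)$-bounded BFS from $u$ in $\bgraph_S(x_u)$, with Cases~1, 3 and 4 unchanged. When the BFS does not stop early you should say explicitly that $u$ is attached to a vertex of $\adh(x_u)$ already marked as lying in $C$; these marks come from computing $\profile(x_u)$ via \cref{lemma:subset-connectivity-in-bgraphs}.

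Your primary route is an unnecessary detour, and its key claim is not established. Touched patches can be cut by many deleted adhesion edges, so one representative per touched patch does not obviously give $O(k^2)$ start vertices. The $N_{\adh}(P)$ maintenance you retain is harmless but unused, since the fallback Case~2 never consults the patch neighbour sets.
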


\proof The data structure and preprocessing steps remain the same as described in \cref{sec:prep}. During the update step, given the set $S$ of at most $k$ failed vertices, we compute all important profiles and adhesion connectivity graphs for all important nodes as described in \cref{sec:computing-important-profiles} and \cref{sec:computing-adhconn} respectively. However, we skip the steps in \cref{sec:precompute-bgraph-connectivity} that updates connectivity in affected patches for the bag graphs of every important node. Now, the runtime of the update algorithm is dominated by \cref{lemma:computing-all-adhconn}, which is $O(k^5)$, proving the update time guarantee.

Now we consider the query phase. Note that for Case 1 in \cref{sec:query-case1}, only $S$-restricted bag graphs, precomputed profiles, and adhesion connectivity graphs along with their colorings are used, so we have computed everything required by this case during the new update procedure. Same goes for Case 3 and 4 in sections \cref{sec:query-case3}, \cref{sec:query-case4} respectively, as the only additional information they require is the auxiliary single-child failure structure of \cref{lemma:single-child-failure-query}. Thus the runtime remains to be $O(k)$ for these three cases. For Case 2 in \cref{sec:query-case2}, however, we do not have updated bag graphs for important nodes. Instead, we simply perform BFS starting from $u$ for $k + 1$ steps. If it terminates before reaching $k + 1$ vertices, we check all visited vertices and choose some visited $u' \in \adh(x_u)$ to augment $u$ to $\profile(x_u)$ with the edge $(u, u')$. Otherwise, $u$ must be in the unique large connected component $C$ in $\bgraph_S(x)$, and we augment $u$ via edge $(u, u'')$ where $u'' \in \adh(x_u)$ is an arbitrary vertex marked as in $C$. This step takes $O(k^2)$ time, so we can still handle Case 2 in $O(k^2)$ time, proving the update time guarantee. This finishes the proof.\\

Next, we describe the second configuration of our connectivity oracle that avoids the $k$-exponential factor is the space usage, presented in the following theorem.

\begin{ptheorem}\label{thm:config2}
    There exists a deterministic vertex-failure connectivity oracle that uses $O(k^2 n \alpha_c(n))$ space, $k^{O(k^2)} n + O(m + k^3 n \log^2 n + k^6n \log n)$ preprocessing time, $O(k^4)$ update time, and $O(k^3)$ query time.
\end{ptheorem}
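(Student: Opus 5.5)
The plan is to remove every component that costs $2^{O(k^2)}$ space and pay for it with extra update or query time. Two components carry that cost: the lookup tables of \cref{lemma:torso-query} and \cref{lemma:combining-profiles-with-torsos}, and the single-child failure structure of \cref{lemma:single-child-failure-query}. Preprocessing is otherwise unchanged: the decomposition from \cref{thm:unbreakable-decomp}, the structure of \cref{lemma:lca-dir}, and the patch sets and neighbor sets of \cref{lemma:pset-bgraph-time-and-space}, which use $O(k^2n)$ space.

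For torsos, I will compute all torsos with \cref{claim:computing-all-torsos}. Each torso is then stored explicitly, in $O(k^2)$ space, on every edge of the shortcutting from \cref{thm:tree-shortcutting}. This takes $O(k^2 n\alpha_c(n))$ space in total, and there are no shared pointer tables.

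A query $\torso(x,y)$ follows the $O(1)$-hop shortcut path. It composes consecutive torsos by taking the union graph on $O(k)$ vertices and running BFS from each endpoint-adhesion vertex, avoiding the other adhesion vertices, exactly as in the second half of the proof of \cref{claim:computing-all-torsos}. This costs $O(k^3)$ per query. Similarly, the operation of \cref{lemma:combining-profiles-with-torsos} (profile plus torso gives profile, col, invcol) is computed directly by one BFS on a graph with $O(k)$ vertices and $O(k^2)$ edges, in $O(k^2)$ time.

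In the update phase I keep the steps of \cref{sec:computing-important-profiles}. \Cref{claim:building-affected-child-profile} now costs $O(k^3)$ instead of $O(k)$, and it is invoked $O(k)$ times, for $O(k^4)$ in total. \Cref{claim:building-profile-from-affected-children} also totals $O(k^4)$. So computing the important profiles still takes $O(k^4)$.

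I will skip \cref{lemma:computing-all-adhconn} in its full form, because it costs $O(k^5)$, and skip the patch precomputation of \cref{sec:precompute-bgraph-connectivity}. The main obstacle is moving a query vertex upward through $x_i\in Y$ without $\adhconn(x_i)$. For each $x\in Y$ I will instead compute, in $O(k^3)$ per node, the connectivity of $\adh(x)$ in $\bgraph_S(x)$ with a large-component marker, using \cref{lemma:subset-connectivity-in-bgraphs} on $\adh(x)$. At query time, when a vertex $w\in\adh(z)$ of an affected child must be lifted, I will run a $(k+1)$-step BFS from $w$ in $\bgraph_S(x)$. This either certifies that $w$ lies in the large component, so we attach it to a marked vertex of $\adh(x)$, or finds its small component and an adhesion vertex in it. Each lift costs $O(k^2)$, and there are $O(k)$ important ancestors, so Case 1 costs $O(k^3)$.

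In Case 1 each hop also needs one torso query and one combine, which would be $O(k^3)$ per hop and $O(k^4)$ overall. To avoid this, during the update I precompute, for every $x_i\in Y$ with affected child $z$ toward the next important node $y$ below, the triple $(\profile(z),\col,\invcol)$ obtained from $\torso(z,y)$. These are $O(k)$ pairs, giving $O(k^4)$ extra update work. With them, each hop at query time costs $O(1)$ plus the $O(k^2)$ BFS.

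In Case 2 we use a direct $(k+1)$-step BFS from $u$ in $\bgraph_S(x_u)$, costing $O(k^2)$, as in the proof of \cref{thm:config1}. In Case 3 the single-child table is unavailable. Here I compute $\profile(z)$ and $\profile(x_u)$ by two torso queries and combines, in $O(k^3)$. I then build the $z$-edge modification of $\bgraph_\emptyset(x_u)$ on the fly, deleting $z$-supported adhesion edges absent from $\profile(z)$ in $O(k^2)$, and BFS from $u$ for $k+1$ steps, costing $O(k^2)$. Lifting from $x_u$ to $\dir(x_1,x_u)$ costs one more torso query and combine, $O(k^3)$. Case 4 reduces to Cases 2 and 3 by the proxy argument of \cref{lemma:query-case4}.

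Altogether, the space is $O(k^2n\alpha_c(n))$, the update time is $O(k^4)$, and the query time is $O(k^3)$. The delicate point, which I expect to need care, is ensuring that the Case 1 chain does not incur a torso composition per hop; the precomputed per-hop triples from the update phase are what prevent this.
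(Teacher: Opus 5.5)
Your proposal is correct and matches the paper's proof of this theorem essentially step for step. You store torsos explicitly on the shortcut edges and compose them in $O(k^3)$ per query, drop the $2^{O(k^2)}$ lookup tables and the single-child structure, compute the important profiles in $O(k^4)$ using $O(k^3)$ torso queries while skipping $\adhconn$ and the patch precomputation, and lift through each important ancestor at query time with an $O(k^2)$ $(k+1)$-step BFS in $\bgraph_S(x_i)$; the one small difference is that you cache the combined $(\profile(z),\col,\invcol)$ per $T_Y$-edge during the update, whereas the paper caches only the torso and reruns an $O(k^2)$ BFS per hop, which leaves the bounds unchanged.
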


\proof In the preprocessing phase, we avoid initializing and storing the profile computing structure given by \cref{lemma:combining-profiles-with-torsos} as well as the structure for single-child failure of \cref{lemma:single-child-failure-query}. We also modify the torso query structure given by \cref{lemma:torso-query} in the following way. We still first compute all $\torso(x, y)$ by \cref{claim:computing-all-torsos} and compute a shortcutting of $T$ with $O(n \alpha_c(n))$ edges with \cref{thm:tree-shortcutting}. Then, for each edge $(x, y)$ in the shortcutting where $x$ is an ancestor of $y$, we store $\torso(x, y)$ explicitly. Each torso has $O(k)$ vertices thus takes $O(k^2)$ to store, thus the total space usage of this structure is $O(k^2 n \alpha_{c/2 + 1}(n))$. When $\torso(x, y)$ is queried, let $y, x_1, x_2,...,x_h, x$ be the shortcutting $y$-$x$ path, where $h \leq c$  is a constant. We construct the graph $H = \torso(y, x_1) \cup \torso(x_1,x_2) \cup ... \cup \torso(x_h, x)$ and for each $v \in \adh(x) \cup \adh(y)$, we perform a BFS starting from $v$ and terminating a branch once it reaches some $u \in \adh(x) \cup \adh(y)$. For all visited $u \in \adh(x) \cup \adh(y)$, we add the edge $(u, v)$ to $\torso(x, y)$. Note that this construction is correct by definition. Since $|E(H)| = O(k^2)$, each BFS takes $O(k^2)$ time. This is repeated for $O(k)$ times since $|\adh(x) \cup \adh(y)| = O(k)$, thus answering the query $\torso(x, y)$ takes $O(k^3)$ time.

For the update algorithm, we will only computed all important profiles as show in \cref{sec:computing-important-profiles}, and the steps in \cref{sec:computing-adhconn} and \cref{sec:precompute-bgraph-connectivity} will both be skipped. We only need to modify \cref{claim:building-affected-child-profile} in the following way. We still first find the topmost important node $y$ in the subtree $T(z)$ and query $\torso(z, y)$ in time $O(k^3)$. Then, we perform BFS on the graph $\profile(y) \cup \torso(z, y)$, and add an edge $(u, v)$ to $\profile(z)$ if $u$ and $z$ are in the same connected component for all $u, v \in \adh(z)$, taking $O(k^2)$ time. By the definition of profiles, this construction is correct. Note that in \cref{lemma:computing-all-important-profiles} this process is called $O(k)$ times, so the update time is still $O(k^4)$.

When answering query on vertices $u, v$, we first consider $u$ and repeat the same for $v$. Let $x_u \in V(T)$ be the unique node where $u \in \mrg(x_u)$. If $\profile(x_u)$ is not computed during update, we first compute it in time $O(k^3)$ the same way by finding the topmost important node $y$ in $T(x_u)$ and combining $\torso(x_u, y)$ and $\profile(y)$, and augment $u$ to it via a $(k + 1)$-step BFS in $O(k^2)$ time. Then, let $x_1,...,x_h$ be all the important nodes in the $x_u$-$r$ path, which can be computed in $O(k^2)$ time using $\lca$ queries and depth. We then augment $u$ to $\profile(x_1)$ by querying $\torso(x_1, x_u)$ in $O(k^3)$ time, and performing BFS in $\torso(x_1, x_u) \cup \profile(x_1)$ from a neighbor of $u$ in $\profile(x_u)$ in time $O(k^2)$ to find a proxy of $u$ in $\profile(x_1)$. We repeat the same process for augmenting $u$ to $z_2 = \dir(x_2, x_1)$, only now $\torso(z_2, x_1)$ would be available to use since it has been queried in the update phase so only $O(k^2)$ time is required. Then, we augment $u$ to $\profile(x_2)$ by finding its proxy via performing a $(k + 1)$-step BFS in $\bgraph_S(x_2)$ starting from the proxy of $u$ in $\profile(z_2)$. This is repeated for $O(k)$ times until $u$ is augmented to $\profile(r)$, taking $O(k^3)$ time. Therefore, in total, we can answer queries in $O(k^3)$ time, completing the proof.\\

Note that for all three configurations in \cref{thm:main}, \cref{thm:config1}, and \cref{thm:config2}, the underlying data structure and preprocessing steps are all the same. Thus after building the connectivity oracle, at every update step, one can select which update and query configuration is optimal depending on the number of queries that follow it.

\section{Vertex-Cut Oracles}
\label{sec:cutoracle}
In this section, we discuss two problems closely related to the vertex-failure connectivity oracle problem: the \textit{vertex-cut oracle} problem and its generalization, the \textit{Steiner vertex-cut oracle} problem. 

A $k$-vertex-cut oracle for an undirected graph $G$ is a data structure that supports vertex-cut queries: given a set $F$ of vertices with $|F|\leq k$, does $F$ form a global vertex cut, or more generally, how many connected components does $G\setminus F$ have? The more general Steiner $k$-vertex-cut oracles will further receive a set $A$ of terminal vertices, and the corresponding Steiner-vertex-cut queries ask whether $A\setminus F$ is disconnected in $G\setminus F$, or more generally, the number of connected components in $G\setminus F$ intersecting $A$.

%In this section, we will discuss a reduction from the vertex cut oracle problem to the vertex-failure connectivity oracle problem, which is formally stated in \Cref{thm:VertexCutOracle}.

In what follows, we introduce simple reductions from these two problems to the vertex-failure connectivity oracle problem. These two reductions both work for the harder queries which ask for the number of connected components.

\begin{ptheorem}[Refinement of \cite{DBLP:conf/icalp/Kosinas25}]
\label{thm:VertexCutOracle}
Let $G$ be an undirected graph with a given parameter $k$. Suppose there is a $k$-vertex-failure connectivity oracle of $G$ with preprocessing time $\sfP$, space $\sfS$, update time $\sfU$ and query time $\sfQ$. Then there exists a $k$-vertex cut oracle with preprocessing time $\sfP + O(m + kn\log n)$, space $\sfS + O(kn)$ and query time $\sfU + |F|\cdot \sfQ + O(2^{|F|}\cdot k^{2})$.

If the input graph $G$ is further known to be $k$-vertex-connected, the query time of the above $k$-vertex cut oracle can be improved to $\sfU + k\cdot \sfQ + O(k^{2})$.
\end{ptheorem}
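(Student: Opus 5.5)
We reduce counting components of $G\setminus F$ to DFS-tree structure plus connectivity queries, following Kosinas. In preprocessing we build the connectivity oracle. We also compute a DFS tree $T_D$ of $G$ (assumed connected; otherwise we work per component and add the count of untouched components). On top of $T_D$ we store: DFS numbers, subtree sizes, depths, and an LCA/level-ancestor structure. For every vertex $v$, we store the $k$ lowest and $k$ highest back-edge endpoints reachable from $T_D(v)$ (the standard $\mathrm{low}_1,\dots,\mathrm{low}_k$ values), computed bottom-up by merging sorted lists. This costs $O(m+kn\log n)$ time and $O(kn)$ space, which matches the claimed overhead.

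To answer a query $F$, we first run the update with $F$. Removing $F$ splits $T_D$ into connected pieces. Let $F'$ be the set of children (in $T_D$) of vertices in $F$ that are not themselves in $F$. Each piece is either the subtree of a vertex in $F'$ minus the subtrees of the $F$-vertices below it, or the top piece containing the root. Since we only ever need pieces that are separated from the root part, the relevant pieces are of two kinds: those attached at the children of $F$-vertices lying on the tree paths between $F$-vertices, and those attached at children whose subtrees contain no $F$-vertex. The first kind number $O(k)$; for each we pick a representative. The second kind may be numerous, so we handle them in aggregate.

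The key structural fact is that, since $T_D$ has only back edges, a subtree piece $T_D(c)$ with $T_D(c)\cap F=\emptyset$ is connected to the rest of $G\setminus F$ iff some back edge leaves $T_D(c)$ to an ancestor not in $F$. Because at most $|F|$ ancestors are removed, the $k+1$ stored low values of $c$ certify this.

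\begin{itemize}
\item \textbf{Hanging subtrees.} For each $f\in F$, we must count the children $c$ of $f$ whose subtrees avoid $F$ and whose low-values all lie in $F$ (isolated components). Children whose subtree contains an $F$-vertex are found via LCA/level-ancestor queries on $F$. The remaining children are too many to enumerate, so we precompute auxiliary data. For each vertex $c$, we store the set $\{f\in F\text{-ancestors}\}$ pattern, i.e.\ the sorted $k$ lowest back-edge ancestors. A child is isolated exactly when its low-ancestor set is a subset of $F$ containing $\parent(c)$ and included in the ancestors of $c$. We group the children of each $f$ by their low-ancestor set and store counts via hashing on the sorted tuple. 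At query time we enumerate the $2^{|F|}$ subsets of $F$ and look up the counts, each in $O(k^2)$ time. This is exactly where the $O(2^{|F|}k^2)$ term arises, and it is the step I expect to be the main obstacle: keeping the storage at $O(kn)$ and making the subset/ancestor consistency check correct.
\item \textbf{Path pieces.} The $O(|F|)$ pieces of the first kind (including the root piece) are merged using one connectivity query each against a fixed representative, with union-find over at most $|F|+1$ representatives. This costs $|F|\cdot\sfQ$ plus $O(k^2)$.
\end{itemize}

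Summing the isolated hanging subtrees and the number of distinct classes among the path pieces gives the component count. For a $k$-vertex-connected $G$, removing $F$ with $|F|<k$ never isolates anything, so we skip the $2^{|F|}$ enumeration. When $|F|=k$, only simple direct checks on the low-values remain, which leaves $\sfU+k\sfQ+O(k^2)$.
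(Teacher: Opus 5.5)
Your treatment of the general bound follows the paper's reduction. You use a DFS tree and split $T\setminus F$ into $O(|F|)$ internal pieces, handled by one update plus oracle queries on representatives, and hanging subtrees. The hanging subtrees are counted in aggregate through per-vertex dictionaries keyed by the children's sorted lists of back-edge ancestors, with a $2^{|F|}$ subset enumeration.

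The genuine gap is the $k$-vertex-connected improvement. Your remark only covers $|F|<k$. For $|F|=k$, ``simple direct checks on the low-values'' cannot work, because there may be $\Theta(n)$ isolated hanging subtrees (e.g.\ $K_{k,N}$ with $F$ the $k$-side). So you still need the aggregated dictionary counts, and you give no reason why the $2^{|F|}$ enumeration may then be dropped. The missing observation is as follows. By the DFS property, every neighbor of a hanging subtree $T[c]$ outside $T[c]$ is a strict ancestor of $c$, i.e.\ lies in $\backvertex(T[c])$. So if $\backvertex(T[c])\subsetneq F$, this set is a vertex cut of size less than $k$ separating $T[c]$ from $F\setminus\backvertex(T[c])\neq\emptyset$, contradicting $k$-connectivity. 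Hence $\backvertex(T[c])\subseteq F$ forces $\backvertex(T[c])=F$. A single lookup with key $F$ at each $F$-vertex then suffices, giving the $O(k^2)$ overhead. This is the paper's reduction to $F'=F$.

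There are also smaller points in the general part.
\begin{itemize}
\item Store the $k+1$ shallowest strict-ancestor endpoints with the tree edge counted; the ``$k$ highest'' list is then unnecessary. With tree edges counted, $\parent(c)$ is automatically in the set. Without them, your condition ``containing $\parent(c)$'' wrongly rejects, e.g., a pendant leaf child of an $F$-vertex.
\item State explicitly that the dictionary counts also include the at most $|F|$ bad children, i.e.\ children of $F$-vertices that lie in $F$ or whose subtree meets $F$. These must be subtracted after checking their own lists, as the paper does.
\item Querying each representative against one fixed representative reveals only that representative's class, not the partition union--find needs. The paper takes the same shortcut. This is fine if the oracle can return a component label, but a purely pairwise oracle may need up to $\binom{|F|}{2}$ queries.
\end{itemize}
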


The reduction from vertex-cut oracles to vertex-failure connectivity oracles (\Cref{thm:VertexCutOracle}) is implicitly given by \cite{DBLP:conf/icalp/Kosinas25}. Most interestingly, after slight refinement using dictionary data structures (e.g., \cite{hagerup2001deterministic}), this reduction is actually nearly lossless: all overhead terms are subsumed by the known lower bounds for vertex-failure connectivity oracles and vertex-cut oracles (except for a $\log n$ factor in the preprocessing time\footnote{It can be removed using the classic FKS perfect hashing \cite{fredman1984storing} if randomization is allowed.}). Therefore, it will give new vertex-cut oracles with query time purely depending on $k$, by plugging in such vertex-failure connectivity oracles (e.g., our oracles, \cite{pilipczuk2021algorithms}). For completeness, we include the proof of \Cref{thm:VertexCutOracle} in \Cref{sec:appendix1}.

%This reduction is a slight refinement of the DFS-based vertex cut oracle in \cite{DBLP:conf/icalp/Kosinas25}, but we want to highlight its conceptual significance here. By plugging in the vertex-failure connectivity oracles from our \Cref{thm:main} and \cite{long2024better}, we obtain the following tradeoffs of vertex cut oracles.

%\begin{pcorollary}
%Let $G$ be an undirected graph with a given parameter $k$. There is a deterministic $k$-vertex cut oracles admitting the following tradeoffs:
%\begin{enumerate}
%\item preprocessing time $k^{O(k^{2})}n + O(m) + \tilde{O}(k^{6}n)$, space $O(k^{2}n\alpha_{c}(n))$ and query time $O(k^{4} + 2^{k}\cdot k^{2})$, or
%\item preprocessing time $O(m + kn^{1+o(1)})$, space $\tilde{O}(kn)$ and query time $\tilde{O}(k^{2}) + O(2^{k}\cdot k^{2})$.
%\end{enumerate}
%If the input graph $G$ is further known to be $k$-vertex-connected, then the query time in the above tradeoffs can be improved to $O(k^{4})$ and $\tilde{O}(k^{2})$ respectively.
%\end{pcorollary}

Moreover, we show that the above reduction generalizes to the Steiner vertex-cut oracle problem, as shown in \Cref{thm:SteinerVertexCutOracle}. The proof of \Cref{thm:SteinerVertexCutOracle} is deferred to \Cref{sec:appendix2}.

\begin{ptheorem}
\label{thm:SteinerVertexCutOracle}
Let $G$ be an undirected graph with a given terminal set $A\subseteq V(G)$ and a given parameter $k$. Suppose there is a $k$-vertex-failure connectivity oracle of $G$ with preprocessing time $\sfP$, space $\sfS$, update time $\sfU$ and query time $\sfQ$. Then there exists a $k$-Steiner vertex cut oracle with preprocessing time $\sfP + O(m + kn\log n)$, space $\sfS + O(kn)$ and query time $\sfU + |F|\cdot \sfQ + O(2^{|F|}\cdot k^{2} + k^{2}\cdot\log n)$.

If the terminal set $A$ is further known to be $k$-vertex-connected in $G$, the query time of the above $k$-Steiner vertex cut oracle can be improved to $\sfU + |F|\cdot \sfQ + O(k^{2}\cdot \log n)$.
\end{ptheorem}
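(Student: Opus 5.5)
The plan is to reduce Steiner vertex-cut queries to connectivity queries along the DFS-based framework of \cite{DBLP:conf/icalp/Kosinas25}, as in \Cref{thm:VertexCutOracle}, while keeping track of terminals. In preprocessing I build the given connectivity oracle, a DFS tree $T_D$ of $G$ (assume $G$ connected; otherwise handle each component separately), and for every vertex $v$ its preorder number and subtree size. I also mark terminals and store, for each $v$, the number $a(v)=|A\cap T_D(v)|$. In addition, each vertex gets a deterministic dictionary \cite{hagerup2001deterministic} over its at most $k$ relevant low-point/back-edge values; after Nagamochi--Ibaraki sparsification \cite{nagamochi1992computing} this costs $O(m+kn\log n)$ time and $O(kn)$ space. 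Finally I store the preorder-sorted list of terminals, so that the number of terminals in a preorder interval, and one representative of them, can be found by binary search in $O(\log n)$ time. This is the source of the extra $\log n$.

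On a query $F$, I first call the update with $F$. Removing $F$ splits $T_D$ into at most $O(k)$ pieces. Each piece is either a subtree $T_D(c)$ for a child $c$ of some $f\in F$, with the descendants of $F$ removed, or the top part containing the root. A piece is a preorder interval minus $O(k)$ subintervals. Hence its terminal count equals $a(c)$ minus the $a(\cdot)$ values of the removed sub-subtrees, plus a correction for terminals that lie in $F$. The number of such pieces can be large, because $f$ may have many children. I handle this as Kosinas does: group the children of each $f$ by whether their subtree has a back edge escaping above $f$ while avoiding $F$. Children whose subtrees have no escape, and contain no other failed vertex, are isolated components. Among them, I count only those with $a(c)>0$, using a per-$f$ precomputed count of terminal-bearing children together with the dictionaries to subtract the $O(k)$ exceptional children that contain other failed vertices or low points in $F$.

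The remaining $O(k)$ pieces I represent by one vertex each and merge using the $|F|\cdot\sfQ$ connectivity queries. Mirroring the cited refinement, I query connectivity only between each piece adjacent to $f$ and a canonical vertex, one query per failed vertex. The $2^{|F|}\cdot k^2$ term comes from the same case analysis over subsets of $F$ as in \Cref{thm:VertexCutOracle}, applied to pieces whose escape edges land in different ancestors. For each merged class I report it as a component meeting $A$ iff some piece in it has a positive terminal count, each count computed by the $O(\log n)$ interval searches. There are $O(k)$ pieces with $O(k)$ removed intervals each, so this costs $O(k^2\log n)$ in total.

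For the $k$-vertex-connected terminal case, the $2^{|F|}$ enumeration disappears. Every component of $G\setminus F$ containing a terminal must touch many pieces, so I only need the $|F|$ connectivity queries plus the interval counting, which gives $\sfU+|F|\sfQ+O(k^2\log n)$. The main obstacle I expect is the correct handling of the many childless-escape subtrees under a failed vertex. There I must count, in $O(k)$ time after preprocessing, only those carrying terminals, while excluding the ones that contain other failed vertices. I would store per vertex $f$ the count of children $c$ with $a(c)>0$ whose subtree has low point $\ge f$, and subtract exceptions found via the dictionaries.
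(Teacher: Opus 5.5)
\textbf{Gap: terminals in non-isolated hanging subtrees.} Your claim that only $O(k)$ pieces remain after you discard the isolated hanging subtrees is false, and this is where the proof breaks. A child $c$ of some $f\in F$ whose subtree has a back edge to a non-failed ancestor is not isolated. Its subtree is glued, through that back edge, to an internal piece, and there can be $\Theta(n)$ such children. Your test ``a merged class meets $A$ iff some piece in it has a positive terminal count'' only examines the $O(k)$ internal pieces. It therefore ignores terminals that lie in these absorbed subtrees. Enumerating them instead would destroy the time bound.

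Concretely, take vertices $r,f,c_1,\dots,c_N$ with edges $rf$, $fc_i$ and $rc_i$, and the DFS tree $r\to f\to\{c_1,\dots,c_N\}$ (obtained by exploring $f$ first). Let $F=\{f\}$ and $A=\{c_1\}$. Then $G\setminus F$ is a single component meeting $A$. However, the only internal piece is $\{r\}$, with terminal count $0$, and no $c_i$ is isolated. So your procedure outputs $0$, or spends $\Omega(N)$ time if it walks the $c_i$.

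\textbf{How the paper handles this.} The missing idea is a way to credit, in aggregate, the terminals of all non-isolated hanging subtrees to the internal component they join. The paper stores, for each vertex $f$, a list ${\cal R}(f)$ containing all entries of the children's truncated back-vertex lists ${\cal L}(c)$. The list is sorted by DFS number, each entry is weighted by $\alpha(T[c])$, and prefix sums are kept. For an internal subtree $\tau$ it defines
\[
\kappa(\tau)=\sum_{v\in\tau}\;\sum_{\tau'\text{ hanging},\ v\in{\cal L}(r_{\tau'})}\alpha(T[r_{\tau'}]).
\]
It then shows that an internal component meets $A$ iff the sum of $\lambda(\tau)+\kappa(\tau)$ over its internal subtrees is at least $1$. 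Multiple counting is harmless, since only positivity matters. The key step is that a terminal-bearing, non-isolated hanging subtree has some $v\in{\cal L}(r_{\tau'})\setminus F$, and this $v$ lies in an internal subtree of the same component. The values $\kappa(\tau)$ are obtained by binary searching the DFS interval of $\tau$ in ${\cal R}(f)$ for each $f\in F$ and then subtracting the bad $F$-children. This costs $O(k^2\log n)$, and it is where the $\log n$ actually comes from. Your $\log n$, spent counting terminals in preorder intervals, is unnecessary: the subtree counts $a(\cdot)$ already give $\lambda(\tau)$ in $O(k^2)$.

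\textbf{Smaller issues.} Your count of children with low point $f$ only catches subtrees attached solely to $f$. Isolated subtrees whose back vertices all lie in $F$ but reach above $f$ need per-$f$ dictionaries keyed by the children's ${\cal L}(c)$, restricted to $a(c)>0$ and queried for every $F'\subseteq F$. Your description of the dictionaries (``over its own low-point values'') is not that structure. In the $k$-connected case, the right observation is that a terminal-bearing hanging subtree (with other terminals elsewhere) has at least $k$ back vertices, so only $F'=F$ needs to be queried. The claim that ``every component touches many pieces'' is not what justifies dropping the $2^{|F|}$ term.
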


\section{Discussions and Open Problems}\label{sec:discussions}
The ideal bounds for a $k$-vertex-failure connectivity oracle are
\[
    O(\min\{m,nk\}) \text{ space}, \qquad
    O(m) \text{ preprocessing}, \qquad
    O(k^2) \text{ update}, \qquad
    O(k) \text{ query}.
\]
Long et~al.~\cite{long2024,long2024better}
achieve all four bounds simultaneously up to $\polylog n$ factors.
Moreover, each of these bounds is separately conditionally optimal under
standard hypotheses~\cite{HKNS15,long2024}. Thus, the main remaining
frontier is to remove the dependence on $n$ from the update and query times
while keeping near-linear preprocessing and space.

Our oracle makes progress on this frontier. It has near-linear preprocessing,
uses
\(
    O(k^2 n \alpha_c(n))
\)
space, and supports updates and queries in small $\poly(k)$ time. Currently,
the two known approaches to $n$-independent update and query times are the
\emph{algebraic approach} of \cite{van2019sensitive} and the
\emph{unbreakable-decomposition approach} of
\cite{pilipczuk2021algorithms}. The following challenges highlight what
seems to require new techniques.

\begin{enumerate}[leftmargin=*]
    \item \textbf{Near-linear space and preprocessing without exponential
    $k$-dependence.}
    The algebraic approach uses $\Omega(n^2)$ space, while current
    unbreakable-decomposition based oracles incur exponential dependence on
    $k$ in the preprocessing. Can one obtain
    \[
        \widetilde O(n \poly(k)) \text{ space}
        \quad\text{and}\quad
        \widetilde O(m \poly(k)) \text{ preprocessing}
    \]
    together with $\poly(k)$ update and query time?

    \item \textbf{Optimal $k$-dependence for updates and queries.}
    All known $n$-independent oracles have update time at least
    $k^\omega$ or a larger polynomial in $k$. Can one achieve the
    conditionally optimal update time $O(k^2)$? More ambitiously, can this
    be achieved simultaneously with the conditionally optimal query time
    $O(k)$?

    \item \textbf{Exactly linear space in $n$.}
    Our shortcutting scheme reduces the space to
    $O_k(n\alpha_c(n))$, but still leaves the tiny $\alpha_c(n)$ factor.
    This factor appears tied to the constant-hop tree shortcutting used by
    our method; see the shortcutting lower bounds in
    \cite{bhattacharyya2012transitive}. Pilipczuk et
    al.~\cite{pilipczuk2021algorithms} achieve $O_k(n)$ space, but with
    large polynomial preprocessing time and $2^{2^{O(k)}}$ update and query
    time. Can one obtain
    \[
        O_k(n) \text{ space}, \qquad
        \widetilde O_k(m) \text{ preprocessing}, \qquad
        \poly(k) \text{ update and query time}?
    \]
    Such a result would likely require a way to bypass our shortcut-based
    technique.
\end{enumerate}

\appendix

\section{Omitted Proofs}
\subsection{Proof of \Cref{thm:VertexCutOracle}}
\label{sec:appendix1}

Before describing the oracle, we introduce some notations related to DFS trees. We fix an arbitrary DFS tree $T$ in $G$ by starting a DFS from an arbitrary root vertex. Note that $T$ is rooted, and thus induces a natural ancestor-descendant relationship on its vertices. For each non-root vertex $v$, its parent is denoted by $\parr(v)$. We define a natural \emph{depth} function $\depth$ on vertices by setting $\depth(r) = 1$ for the root $r$ and $\depth(v) = \depth(\parr(v))+1$ for each non-root vertex $v$.

A well-known property of the DFS tree $T$ is that, for each edge $e=(u,v)\in E(G)$, either $u$ or $v$ (called the \emph{lower endpoint} of $e$) must be an ancestor of the other in $T$. This property motivates the following notation. For each vertex $v$, we define the back-edges of $v$ to be
\[
\backedge(v) = \{(u,v)\in E(G)\mid \text{$u$ is an ancestor of $v$ in $T$}\}.
\]
Let $T[v]$ denote the subtree rooted at $v$. Then naturally we can define the back-edges of $T[v]$ to be
\[
\backedge(T[v]) = \{(u,w)\in E(G)\mid w\in T[v]\text{ and $u$ is an ancestor of $v$}\}.
\]
Let $\backvertex(T[v])$ be a vertex set that collects the lower endpoints of all back-edges in $\backedge(T[v])$.

\paragraph{The Oracle.} The $k$-vertex cut oracle consists of the following.
\begin{enumerate}
\item\label{Item:CutOracle1} The $k$-vertex-failure connectivity oracle with update time $\sfU$ and query time $\sfQ$.
\item\label{Item:CutOracle2} The DFS tree $T$, and a data structure that supports the following level-ancestor queries on $T$ in $O(1)$ time: given a vertex $v$ and a parameter $d<\depth(v)$, which vertex is the ancestor of $v$ with depth $d$?
\item\label{Item:CutOracle3} For each vertex $v$, an ordered list ${\cal L}(v)$ consisting of the $k+1$ vertices in $\backvertex(T[v])$ with the smallest depths (or all vertices in $\backvertex(T[v])$ if there are fewer than $k+1$ of them), sorted in increasing order of depth.
%a boolean value that indicates whether $\backvertex(T[v])$ has at least $k+1$ vertices or not. If not, store the set $\backvertex(T[v])$ explicitly as an ordered list, with vertices sorted by increasing depth.
\item\label{Item:CutOracle4} For each vertex $v$, a dictionary data structure that supports the following membership and counting queries in $O(k)$ time: given a set $F'\subseteq V$ with at most $k$ vertices (sorted by increasing depth), how many children $v'$ of $v$ have the ordered list $\backvertex(T[v])$ exactly the same as $F'$?

\end{enumerate}
Below, we analyze the preprocessing time and space of this oracle. 

First of all, we can always assume the input graph has only $\bar{m}:=O(nk)$ edges by doing the Nagamochi-Ibaraki sparsification \cite{nagamochi1992computing} in advance, which takes additional $O(m)$ preprocessing time. \Cref{Item:CutOracle1} takes $\sfP$ preprocessing time and $\sfS$ space. In \Cref{Item:CutOracle2}, the DFS tree takes $O(\bar{m})$ time to compute and the level-ancestor data structure takes $O(n)$ preprocessing time and space by \cite{bender2004level}. 

\Cref{Item:CutOracle3} can be computed in $O(nk)$ time as follows. 
%Note that it suffices to compute, for each vertex $v$, an ordered list ${\cal L}(v)$ consisting of the $k+1$ vertices in $\backvertex(T[v])$ with the smallest depths (or all vertices in $\backvertex(T[v])$ if there are fewer than $k+1$ of them), sorted in increasing order of depth. To do this, 
We first compute auxiliary lists ${\cal L}'(v)$ defined w.r.t. $\backvertex(v)$ for all vertices $v$. All these ${\cal L}'(v)$ can be computed in $O(\bar{m})$ time with bucket sort. Now, we are ready to compute all lists ${\cal L}(v)$ in a bottom-up manner. 
\begin{itemize}
\item For each leaf vertex, ${\cal L}(v)$ is exactly ${\cal L}'(v)$.
\item For each non-leaf vertex $v$, the key observation is that ${\cal L}(v)$ consists exactly of the vertices of depth less than $\depth(v)$ appearing in ${\cal L}'(v)$ and in the ${\cal L}$-lists of $v$’s children (truncating to the first $k+1$ such vertices if there are more). Therefore, we can enumerate the vertices in ${\cal L}(v)$ one by one as follows: at each step, we linearly scan the heads of ${\cal L}'(v)$ and the ${\cal L}$-lists of all children, select the minimum-depth vertex, and then pop it from every list in which it appears.
\end{itemize}
Therefore, computing \Cref{Item:CutOracle3} takes $O(nk)$ preprocessing time and space in total.

\begin{plemma}[\cite{hagerup2001deterministic}]
\label{lemma:DetDictionary}
Consider a word-RAM machine with word size $w$. Given a set $S$ in a universe $U=\{0,1\}^{w}$, where each element in $S$ can be associated with satellite data, there exists a deterministic data structure that, on a query $x\in U$, determines whether $x\in S$ and, if so, returns the satellite data associated with $x$. The data structure has preprocessing time $O(|S|\log |S|)$, space $O(|S|)$ and query time $O(1)$.
\end{plemma}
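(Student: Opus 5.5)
The plan is to follow the two-stage strategy of Hagerup, Miltersen and Pagh: first deterministically map the $w$-bit keys of $S$ injectively into a universe of size polynomial in $|S|$, and then build a perfect hash function on that small universe by a deterministic \emph{hash-and-displace} scheme. Throughout, $s=|S|$, and word operations (multiplication, shifts, table lookups) are assumed to take unit time.

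\textbf{Stage 1: universe reduction.} Replace each key $x$ by a short signature $f(x)\in\{0,1\}^{O(\log s)}$ such that $f$ is injective on $S$ and evaluable in $O(1)$ time.
\begin{enumerate}
\item Fix an error-correcting code $\psi$ of constant rate and constant relative distance, evaluable on a word in $O(1)$ time via multiplication-based encoding. Any two distinct keys then have codewords differing in a constant fraction of positions.
\item Choose $O(\log s)$ bit positions of the codewords one at a time. At each step, pick the position that (approximately) halves the number of pairs of $S$ still agreeing on all chosen positions. Averaging over positions shows such a position exists.
\item Let $f(x)$ be the codeword of $x$ restricted to the chosen positions. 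Once no colliding pair remains, $f$ is injective on $S$.
\end{enumerate}
Carried out naively, step 2 costs time quadratic in $s$. To reach $O(s\log s)$, group the keys by their current partial signature (for instance by sorting), and count collisions per candidate position from these groups.

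\textbf{Stage 2: perfect hashing on the reduced universe.} Now the keys are $O(\log s)$-bit integers. Split each key into two halves $(a,b)$ and use a displacement function $h(a,b)=b\oplus d[a]$.
\begin{enumerate}
\item Process the buckets (groups of keys sharing the same $a$) in decreasing order of size.
\item For each bucket in turn, choose a displacement $d[a]$ that sends its keys to cells of a table of size $O(s)$ not yet occupied. A counting argument in the style of Tarjan--Yao and Pagh shows a good displacement exists as long as the bucket-size profile is balanced.
\item If the profile is not balanced, first apply a preliminary displacement, or iterate a constant number of levels, to balance it. Where a choice must be made, select it by the method of conditional expectations.
\end{enumerate}
Store $d$ as an array and place each element with its satellite data in its cell. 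A query computes $f(x)$, evaluates $h$, and compares against the stored key, all in $O(1)$ time. The space used is $O(s)$ words.

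\textbf{Main obstacle.} I expect the bottleneck to be making the derandomized choices in both stages within total time $O(s\log s)$, rather than $O(s^2)$ or $O(s\log^2 s)$. For the displacement search, I would first try batching the candidate displacements for each bucket. For Stage 1, I would first try maintaining the collision counts incrementally under refinement of the partition into groups. If these do not give the claimed bound, the fallback is to invoke the published analysis of Hagerup, Miltersen and Pagh directly, since the lemma is used in this paper only as a black box.
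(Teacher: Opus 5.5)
The paper gives no proof of this lemma. It is quoted from Hagerup, Miltersen and Pagh and used purely as a black box, for the trie of dictionaries in the vertex-cut oracle. Your fallback of invoking their published analysis is therefore exactly the paper's route, and it is all the justification there is. Taken as a self-contained argument, however, your reconstruction does not prove the statement, and its holes go beyond the running-time issue you flagged.

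\textbf{Stage~1.} There are three gaps here.
\begin{enumerate}
\item Averaging over the positions of a code with relative distance $\delta$ only gives a position separating a $\delta$-fraction of the still-colliding pairs. Their number shrinks by a factor $1-\delta$, not $1/2$. A positive-rate binary code has $\delta<1/2$ (Plotkin), so injectivity is guaranteed only after about $\log_2(s^2)/\log_2\frac{1}{1-\delta}$ positions. That is, you can only guarantee a reduced universe of size $s^{c}$ with $c>2$.
\item You never explain how $f(x)$ is evaluated in $O(1)$ time. Here $f(x)$ means the bits of $\psi(x)$ at $\Theta(\log s)$ arbitrary positions, packed into one short integer. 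This is not a unit-cost word operation. The standard Fredman--Willard multiplication trick only packs $r$ chosen bits into a window of $O(r^4)$ bits, which would leave a universe of size $2^{O(\log^4 s)}$ rather than $s^{O(1)}$.
\item Your proposed speed-up for the selection is to group keys by current signature and count collisions per candidate position. This still spends $\Theta(s)$ on each of the $\Theta(w)$ candidate positions in each of $\Theta(\log s)$ rounds, i.e.\ $\Theta(sw\log s)$, whereas the claimed bound does not depend on $w$. What is missing is a way to score all positions simultaneously, for instance word-parallel (bit-sliced, carry-save) counting of the per-position bit counts inside each group.
\end{enumerate}

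\textbf{Stage~2.} A single displacement layer $h(a,b)=b\oplus d[a]$ needs an array of $2^{|a|}$ entries and a table of $2^{|b|}$ cells. It has linear size only when $|a|+|b|\le 2\log_2 s+O(1)$, i.e.\ for universes of size $O(s^2)$ (the Tarjan--Yao setting). Your Stage~1 does not deliver such a universe, so an additional reduction from $s^{c}$ to $O(s^2)$, or a genuinely multi-level scheme, is needed and is absent. As you say yourself, the derandomized displacement search within $O(s\log s)$ time is also not shown.

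Closing all of these amounts to re-deriving the Hagerup--Miltersen--Pagh construction; Stage~1 already leans on the $O(1)$-time constant-distance code, which is itself an imported result. For this paper the citation is the appropriate and complete proof, and the sketch should not be presented as one.
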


For \Cref{Item:CutOracle4}, we can exploit the Hagerup-Miltersen-Pagh deterministic dictionaries (see \Cref{lemma:DetDictionary}). Note that the data structures in \Cref{Item:CutOracle4} are essentially dictionaries with those explicitly stored lists $\backvertex(T[v])$ as keys. Note that each such key consists of $k$ integers and therefore does not fit in a single word, but this can be handled using standard techniques. For example, we can build a trie over these keys and store, at each node, a standard dictionary in \Cref{lemma:DetDictionary} to support child lookup. This results in a dictionary of $n'$ many $k$-integer keys with preprocessing time $O(n'k\log n')$, space $O(n')$ and query time $O(k)$. Applying this modified dictionary, the data structures in \Cref{Item:CutOracle4} in total take preprocessing time $O(nk\log n)$, space $O(n)$ and query time $O(k)$.

Summing over all the items, we obtain the desired bounds on preprocessing time and space.

\paragraph{The Query Algorithm.} Consider a query with a given $F\subseteq V$ of at most $k$ vertices. We want to determine the number of connected components in $G\setminus F$.

Let ${\cal T}$ be the collection of subtrees in $T\setminus F$. For each subtree $\tau\in {\cal T}$, let $r_{\tau}$ denote its root. Following \cite{kosinas2023}, we classify the subtrees into \emph{internal subtree} and \emph{hanging subtree}. A subtree $\tau\in {\cal T}$ is a hanging subtree if $\tau$ is exactly $T[r_{\tau}]$ (equivalently, there is no $F$-vertex in $T[r_{\tau}]$), and all remaining subtrees are internal. For better understanding, we point out that there are at most $k$ internal subtrees while the number of hanging subtrees can be very large. The answer to this query is given by the following observation.

\begin{plemma}[\cite{kosinas2023}]
\label{lemma:CutQueryCorrectness}
The number of connected components in $G\setminus F$ is the summation of the following.
\begin{enumerate}[label=(\alph*)]
\item\label{Item:CutQuery1} The number of connected components in $G\setminus F$ which contain at least one internal subtree. 
\item\label{Item:CutQuery2} The number of hanging subtrees $\tau$ such that $\backvertex(T[r_{\tau}])\subseteq F$.
\end{enumerate}
\end{plemma}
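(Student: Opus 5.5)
The plan is to show that every connected component $K$ of $G\setminus F$ is counted exactly once by the sum. This needs three facts: each component is a union of subtrees from ${\cal T}$, components with an internal subtree are counted only in \ref{Item:CutQuery1}, and components with no internal subtree are a single hanging subtree satisfying the condition in \ref{Item:CutQuery2}.

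First, partition of $V\setminus F$. Since $T$ spans $G$, removing $F$ from $T$ splits $V\setminus F$ into the vertex sets of the subtrees in ${\cal T}$. Each subtree $\tau$ is connected in $T\setminus F\subseteq G\setminus F$, so it lies entirely inside one component. Hence every component of $G\setminus F$ is a disjoint union of subtrees from ${\cal T}$. It follows that each component either contains at least one internal subtree, in which case it is counted in \ref{Item:CutQuery1}, or consists only of hanging subtrees. It remains to show that components of the second kind correspond bijectively to hanging subtrees $\tau$ with $\backvertex(T[r_\tau])\subseteq F$.

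Second, the key step: a hanging subtree $\tau$ forms a whole component by itself if and only if $\backvertex(T[r_\tau])\subseteq F$. I read $\backvertex(T[v])$ as the set of ancestor endpoints of back-edges leaving $T[v]$; the paper calls these ``lower endpoints'', meaning the shallower ends. Because $\tau=T[r_\tau]$ contains no $F$-vertex, it suffices to determine when some edge of $G\setminus F$ leaves $\tau$. By the DFS property, any edge with exactly one endpoint in $T[r_\tau]$ joins a vertex $w\in T[r_\tau]$ to a strict ancestor $u$ of $r_\tau$. Every such $u$ lies in $\backvertex(T[r_\tau])$, where the tree edge to $\parr(r_\tau)$ also counts as a back-edge of $r_\tau$. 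Therefore $\tau$ is isolated in $G\setminus F$ exactly when all these ancestors $u$ lie in $F$, which is the condition $\backvertex(T[r_\tau])\subseteq F$. The same argument also handles the root subtree, which has no strict ancestors.

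Third, components with no internal subtree contain exactly one hanging subtree. Let $K$ be such a component and let $\tau$ be a subtree in $K$ whose root $r_\tau$ has minimal depth. Suppose some edge left $\tau$ within $G\setminus F$. By the DFS property its outer endpoint would be a strict ancestor $u\notin F$ of $r_\tau$. That $u$ lies in a subtree $\tau'\subseteq K$ whose root is shallower than $r_\tau$, contradicting the choice of $\tau$. So no edge leaves $\tau$, which gives $K=\tau$ and shows that $\tau$ satisfies the condition in \ref{Item:CutQuery2}. Conversely, the second step shows each hanging subtree counted in \ref{Item:CutQuery2} is a whole component, and that component contains no internal subtree. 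The correspondence is therefore a bijection, and adding the two counts gives the number of components.

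I expect the main obstacle to be the bookkeeping around the definition of $\backvertex$. One point is making sure the tree edge from $r_\tau$ to its parent is covered, which it is, since the parent is an ancestor. The other is fixing the orientation of the word ``lower'', so that $\backvertex$ means the ancestor endpoints. Everything else follows from the DFS non-crossing property.
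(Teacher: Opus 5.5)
Your proposal is correct and follows the same route as the paper's justification, which the paper only sketches while citing Kosinas. Components of $G\setminus F$ are unions of subtrees of $T\setminus F$, and a hanging subtree is a whole component exactly when every ancestor endpoint of a back-edge leaving it lies in $F$, using the strict-ancestor reading of $\backvertex$ that you correctly adopt; the only cosmetic difference is that you exclude components built solely from hanging subtrees by a minimal-depth argument, whereas the paper's sketch says directly that a non-isolated hanging subtree $\tau$ connects through a back-edge to an internal subtree (that edge's upper endpoint is a strict ancestor of $\parr(r_\tau)\in F$).
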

The intuition behind \Cref{lemma:CutQueryCorrectness} is that each hanging subtree $\tau$ either forms a connected component on its own when $\backvertex(T[r_{\tau}]) \subseteq F$, or else it must connect to some internal subtree through a back edge. In what follows, we discuss how to compute \Cref{Item:CutQuery1} and \Cref{Item:CutQuery2} respectively.

\medskip

\noindent{\underline{Computing \Cref{Item:CutQuery1}.}} We first pick a set $X$ of vertices
\[
X := \{\parr(v)\mid v\in F\}\setminus F.
\]
Intuitively, $X$ hits and only hits all internal subtrees: we can easily observe that $X$ is a subset of the union of internal subtrees, and $X$ contains at least one vertex from each internal subtree. Therefore, \Cref{Item:CutQuery1} now equals the number of connected components in $G\setminus F$ which contain at least one $X$-vertex. The latter can be computed by choosing an arbitrary vertex $x\in X$ and then querying the vertex-failure connectivity oracle (\Cref{Item:CutOracle1}) for the connectivity between $x$ and every other $X$-vertex in $G\setminus F$. This involves one update operation to \Cref{Item:CutOracle1} and $|X|-1 = k-1$ query operations to \Cref{Item:CutOracle1}. Hence, computing \Cref{Item:CutQuery1} takes $O(|F|) + \sfU + (|F|-1)\cdot \sfQ$.

\medskip

\noindent{\underline{Computing \Cref{Item:CutQuery2}.}} Observe that each hanging subtree $\tau$ must have $\parr(r_{\tau})\in F$. Therefore, we can consider the children of $F$-vertices (called \emph{$F$-children}) instead. Note that not every child of an $F$-vertex is the root of a hanging subtree, and we call such children \emph{bad $F$-children} (in fact, a bad $F$-child is either an $F$-vertex or the root of an internal subtree). Now, \Cref{Item:CutQuery2} equals (i) the number of $F$-children $v$ such that $\backvertex(v)\subseteq F$, minus (ii) the number of bad $F$-children $v$ such that $\backvertex(v)\subseteq F$. 
\begin{itemize}
\item To compute (i), we can enumerate all $2^{|F|}$ subsets $F'$ of $F$, and then query the dictionary (\Cref{Item:CutOracle4}) of each $F$-vertex with the subset $F'$. This takes $O(2^{|F|}\cdot k^{2})$ time.
\item To compute (ii), we can first get all bad $F$-children explicitly in $O(k^{2})$ time, by performing level-ancestor queries between every pair of $F$-vertices, because a vertex $w$ is a bad $F$-child if and only if there exists two $F$-vertices $u,v$ such that $w$ is a children of $u$ and an ancestor of $v$. Then, we just check, for each bad $F$-child $v$, whether $\backvertex(v)\subseteq F$ or not (using \Cref{Item:CutOracle3}). The number of bad $F$-children is bounded by $k$ (since each $F$-vertex can contribute at most one bad child to its nearest strict ancestor in $F$), and checking each of them takes $O(k)$ time, so the second step takes $O(k^{2})$ time as well.

\end{itemize}
By the analysis above, the query time is bounded by $\sfU + |F|\cdot\sfQ + O(2^{|F|}\cdot k^{2})$.

\paragraph{Improvements for $k$-Vertex-Connected Graphs.} Consider a query $F$ with at least two hanging subtrees (the corner case where there is only one hanging subtree can be answered similar to the computation of \Cref{Item:CutQuery1} above). The key observation is that every hanging subtree $\tau$ must have at least $k$ vertices in $\backvertex(T[r_{\tau}])$, otherwise the input graph is not $k$-vertex-connected (since removing $\backvertex(T[r_{\tau}])$ will at least disconnect two hanging subtrees). This means \Cref{Item:CutQuery2} is now exactly the number of hanging subtrees $\tau$ with $\backvertex(T[r_{\tau}])=F$, and we only need to consider $F'=F$ (instead of $2^{|F|}$ subsets) when computing (i) of \Cref{Item:CutQuery2}. The query time is thus $\sfU + k\cdot \sfQ + O(k^{2})$.

\subsection{Proof of \Cref{thm:SteinerVertexCutOracle}}
\label{sec:appendix2}

To generalize the above results to the Steiner connectivity setting, we focus on describing the query algorithm for a query $F$, and mention the changes to the oracle along the way. First of all, we have the following generalized version of \Cref{lemma:CutQueryCorrectness}.

\begin{plemma}
\label{lemma:SteinerCutQueryCorrectness}
The number of connected components in $G\setminus F$ intersecting the terminals $A$ is the summation of the following.
\begin{enumerate}[label=(\Alph*)]
\item\label{Item:SteinerCutQuery1} The number of connected components in $G\setminus F$ which contain at least one internal subtree and intersect $A$. 
\item\label{Item:SteinerCutQuery2} The number of hanging subtrees $\tau$ such that $\backvertex(T[r_{\tau}])\subseteq F$ and $\tau$ intersects $A$.
\end{enumerate}
\end{plemma}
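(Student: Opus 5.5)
We prove the lemma by directly generalizing the argument behind \Cref{lemma:CutQueryCorrectness}: partition the components of $G\setminus F$ that intersect $A$ into those containing an internal subtree and those that do not, and show that the latter are exactly the hanging subtrees $\tau$ with $\backvertex(T[r_{\tau}])\subseteq F$ that meet $A$. Since every component of $G\setminus F$ is a union of subtrees in ${\cal T}$ (edges of $T$ among non-$F$ vertices stay inside one subtree), every component either contains an internal subtree or consists solely of hanging subtrees. The first kind, restricted to those meeting $A$, is counted by \Cref{Item:SteinerCutQuery1} by definition, so it remains to count the second kind.

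The key step is to show that a component $K$ with no internal subtree consists of exactly one hanging subtree $\tau$, and that this happens iff $\backvertex(T[r_{\tau}])\subseteq F$. First, take any hanging subtree $\tau$. Any edge leaving $\tau=T[r_{\tau}]$ in $G$ has an endpoint $w\in T[r_{\tau}]$. By the DFS property, its other endpoint $u$ is an ancestor or a descendant of $w$. A descendant of $w$ would lie in $T[r_{\tau}]$, so $u$ must be a proper ancestor of $r_{\tau}$, i.e., the edge belongs to $\backedge(T[r_{\tau}])$. Its outside endpoint is then in $\backvertex(T[r_{\tau}])$. (Here I read $\backvertex$ as collecting the ancestor endpoints, which is what the statement needs; the word ``lower'' in the definition should be interpreted consistently with \Cref{lemma:CutQueryCorrectness}.)

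Consequently, if $\backvertex(T[r_{\tau}])\subseteq F$, then $\tau$ has no edges to $G\setminus F$ outside itself, and $\tau$ is a component by itself; it contains no internal subtree. Conversely, if some $u\in\backvertex(T[r_{\tau}])\setminus F$ exists, then $u$ is a proper ancestor of $r_{\tau}$ outside $F$. Since $\parr(r_{\tau})\in F$, the vertex $u$ lies in a subtree of ${\cal T}$ containing an ancestor of an $F$-vertex, which is therefore not hanging. So $\tau$'s component contains an internal subtree. This shows that every component without an internal subtree is a single hanging subtree with $\backvertex\subseteq F$, and vice versa. Hence the components of the second kind meeting $A$ are exactly the hanging subtrees $\tau$ with $\backvertex(T[r_{\tau}])\subseteq F$ and $\tau\cap A\neq\emptyset$, which is \Cref{Item:SteinerCutQuery2}. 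Summing the two disjoint classes gives the lemma.

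The only delicate point is the claim that the ancestor $u\notin F$ lies in an internal subtree. The subtree containing $u$ has a root $r'$ that is an ancestor of $u$, so $T[r']$ contains $\parr(r_{\tau})\in F$, and thus that subtree is not $T[r']$ and is internal. Everything else is the same bookkeeping as in \Cref{lemma:CutQueryCorrectness}, with the terminal condition simply carried along, since intersecting $A$ is a property of each component individually.
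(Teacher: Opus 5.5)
Your proof is correct and is essentially the paper's argument. The paper treats this lemma as a direct generalization of \Cref{lemma:CutQueryCorrectness}: a hanging subtree forms a component by itself when $\backvertex(T[r_{\tau}])\subseteq F$, and otherwise reaches an internal subtree through a back edge, with the condition of meeting $A$ carried along per component; you have written this out in full, and your reading of $\backvertex$ matches the paper, which explicitly calls the ancestor endpoint of an edge its ``lower endpoint''.
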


\noindent\underline{Computing \Cref{Item:SteinerCutQuery2}.} This part is almost identical, except that we only focus on subtrees intersecting $A$. Concretely, we change \Cref{Item:CutOracle4} to the following Item 4', and store \Cref{Item:CutOracle5}.
\begin{itemize}
\item[4'.] For each vertex $v$, a dictionary data structure that answers: given a set $F'\subseteq V$ with at most $k$ vertices, how many children $v'$ of $v$ have $\backvertex(T[v])=F'$ \emph{and $T[v]$ intersects $A$}? 
\end{itemize}
\begin{enumerate}[start=5]
\item\label{Item:CutOracle5} For each vertex $v$, the number of $A$-vertices in $T[v]$, denoted by $\alpha(T[v])$.
\end{enumerate}
With these changes, the algorithm computing \Cref{Item:SteinerCutQuery2} is straightforward, which also takes $O(2^{|F|}\cdot k^{2})$ time, and we omit the description here.

\medskip

\noindent\underline{Computing \Cref{Item:SteinerCutQuery1}.} This part needs more modification. Recall that the above algorithm for \Cref{Item:CutQuery2} actually returns, for each connected component of $G\setminus F$ that contains at least one internal subtree (called an \emph{internal component}), the set of internal subtrees it contains. Thus, it suffices to check, for each internal component, whether it intersects $A$ or not. Note that we can obtain the number of $A$-vertices inside an internal subtree (using \Cref{Item:CutOracle5}). However, this is not sufficient because such a component may contain hanging subtrees in addition to internal subtrees. Therefore, we need to extend the oracle with the following. 
\begin{itemize}
\item[2'.] In addition to \Cref{Item:CutOracle2}, further store, for each vertex $v$, its DFS number (i.e., $v$'s preorder number in the DFS tree $T$), denoted by $\dfs(v)$, as well as the minimum and maximum DFS numbers among the vertices in its subtree, denoted by $\dfs_{\min}(T[v])$ (in fact, $\dfs_{\min}(T[v])$ equals $\dfs(v)$) and $\dfs_{\max}(T[v])$.
\end{itemize}
\begin{enumerate}[start=6]
\item\label{Item:CutOracle6} For each vertex $v$, a \emph{weighted} ordered list ${\cal R}(v)$ that stores all elements (with multiplicity) in the ${\cal L}$-lists of all $v$'s children, sorted in increasing order of their DFS numbers. An element in ${\cal R}(v)$ originally from the list ${\cal L}(v')$ is \emph{weighted} by $\alpha(T[v'])$ (i.e., the number of $A$-vertices in $T[v']$). Lastly, store the prefix sum of the weights of elements in ${\cal R}(v)$.
\end{enumerate}

Next, we discuss how to check in the query phase whether an internal component intersects $A$ or not. To this end, we compute, for each internal subtree $\tau$, the number of $A$-vertices inside $\tau$, denoted by $\lambda(\tau)$, as well as the following value
\[
\kappa(\tau):= \sum_{v\in \tau}\ \sum_{\substack{\text{hanging subtree $\tau'$}\\\text{s.t. }v\in {\cal L}(r_{\tau'})}} \alpha(T[r_{\tau'}]).
\]
We point out that a hanging subtree $\tau'$ may have its $\alpha(T[r_{\tau'}])$ counted multiple times in $\kappa(\tau)$, if there are multiple vertices $v$ in the internal subtree $\tau$ satisfying $v\in {\cal L}(r_{\tau'})$, so $\kappa(\tau)$ does not admit an intuitive interpretation. But we can still use these $\kappa(\tau)$ to check whether an internal component intersects $A$ or not, by the following \Cref{claim:CheckInternalComponent}.
\begin{pclaim}
\label{claim:CheckInternalComponent}
An internal component intersects $A$ if and only if
\[
\lambda(\tau) + \sum_{\substack{\textnormal{internal subtree $\tau$}\\\textnormal{in this component}}} \kappa(\tau)\geq 1.
\]
\end{pclaim}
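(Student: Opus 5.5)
We read the claim as: an internal component intersects $A$ if and only if $\sum_{\tau}(\lambda(\tau)+\kappa(\tau))\ge 1$, where the sum ranges over the internal subtrees $\tau$ of the component. All quantities are nonnegative integers. So it suffices to show that the component contains an $A$-vertex exactly when some internal subtree $\tau$ in it has $\lambda(\tau)\ge 1$ or $\kappa(\tau)\ge 1$. The plan is to describe the vertex set of an internal component as its internal subtrees together with the hanging subtrees attached to them, and then check both directions.

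\textbf{Structure of an internal component.} Let $K$ be an internal component. Its vertex set is the union of the internal subtrees it contains and the hanging subtrees $\tau'$ lying in $K$. A hanging subtree $\tau'=T[r_{\tau'}]$ has no $F$-vertex, so every edge leaving it is a back-edge from $T[r_{\tau'}]$ to a proper ancestor of $r_{\tau'}$. Its lower endpoint therefore lies in $\backvertex(T[r_{\tau'}])$.

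All such ancestors lie on the root path of $\parr(r_{\tau'})\in F$, so each is either in $F$ or in an internal subtree; no ancestor lies in another hanging subtree. Hence $\tau'$ lies in $K$ if and only if some vertex $v\in\backvertex(T[r_{\tau'}])\setminus F$ lies in an internal subtree of $K$.

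The key step is to show this condition is witnessed by ${\cal L}(r_{\tau'})$ itself. Suppose $\backvertex(T[r_{\tau'}])\not\subseteq F$. If $|\backvertex(T[r_{\tau'}])|\le k$, then ${\cal L}(r_{\tau'})$ is the whole set and contains a non-$F$ vertex. Otherwise ${\cal L}(r_{\tau'})$ has $k+1>|F|$ vertices, so again it contains a non-$F$ vertex $v$. Every vertex of ${\cal L}(r_{\tau'})$ is an ancestor of $r_{\tau'}$, so a non-$F$ vertex in it lies in some internal subtree.

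The remaining point is that all non-$F$ vertices of $\backvertex(T[r_{\tau'}])$ lie in the same component as $\tau'$, since each is joined to $\tau'$ by a back-edge. So any non-$F$ vertex $v\in{\cal L}(r_{\tau'})$ lies in an internal subtree of the component containing $\tau'$.

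\textbf{Both directions.} For ($\Leftarrow$): if $\lambda(\tau)\ge1$ for some internal subtree $\tau$ of $K$, then $K$ meets $A$ directly. If instead $\kappa(\tau)\ge 1$, some hanging subtree $\tau'$ has $v\in{\cal L}(r_{\tau'})\cap\tau$ and $\alpha(T[r_{\tau'}])\ge1$. Since $v\in\tau$, we have $v\notin F$, and $v\in\backvertex(T[r_{\tau'}])$. Thus $\tau'$ is adjacent to $v$, so $\tau'\subseteq K$ and $K$ meets $A$.

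For ($\Rightarrow$): suppose $K$ contains an $A$-vertex $a$. If $a$ lies in an internal subtree $\tau$, then $\lambda(\tau)\ge1$. Otherwise $a$ lies in a hanging subtree $\tau'\subseteq K$, so $\alpha(T[r_{\tau'}])\ge1$. By the key step, ${\cal L}(r_{\tau'})$ contains a non-$F$ vertex $v$ lying in an internal subtree $\tau$ of $K$. Then $\alpha(T[r_{\tau'}])$ appears as a term of $\kappa(\tau)$, so $\kappa(\tau)\ge1$.

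\textbf{Main obstacle.} The delicate point is the truncation of ${\cal L}$ to $k+1$ entries. We must ensure that a hanging subtree attached to $K$ is always witnessed inside ${\cal L}(r_{\tau'})$, and never only through back-vertices that were discarded by the truncation. The pigeonhole argument above ($k+1>|F|$) handles this, together with the observation that every non-$F$ back-vertex of $\tau'$ lies in the same internal component as $\tau'$.

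Double counting in $\kappa$ is harmless. We only test whether the sum is positive, and every counted term is nonnegative and, by ($\Leftarrow$), genuinely attributable to $K$.
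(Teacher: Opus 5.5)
Your proof is correct and takes essentially the same route as the paper. The ``if'' direction is immediate, and for ``only if'' an $A$-vertex in a hanging subtree $\tau'$ is charged to $\kappa(\tau)$ through a non-$F$ vertex of ${\cal L}(r_{\tau'})$, which lies in an internal subtree of the same component. Your pigeonhole step ($|{\cal L}(r_{\tau'})| = k+1 > |F|$ when the list is truncated) just makes explicit what the paper's parenthetical ``${\cal L}(r_{\tau'})\not\subseteq F$, otherwise $\tau'$ would form a component by itself'' uses implicitly.
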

\begin{proof}
The ``if'' statement is trivial. To see the ``only if'' statement, consider an internal component with an $A$-vertex $u$ inside. If $u$ belongs to an internal subtree, then $\lambda(\tau)\geq 1$ must hold and we are done. So suppose $u$ belongs to a hanging subtree $\tau'$, which implies $\alpha(T[r_{\tau'}])\geq 1$. Note that this hanging subtree cannot have ${\cal L}(r_{\tau'})\subseteq F$ (otherwise it will form a non-internal component by itself), so pick a vertex $v\in {\cal L}(r_{\tau'})\setminus F$ and $v$ must belong to an internal subtree $\tau$ in this component. Finally, note that $\alpha(T[r_{\tau'}])$ contributes to $\kappa(\tau)$, and thus the LHS of the inequality is positive.
\end{proof}

Therefore, it remains to compute $\lambda(\tau)$ and $\kappa(\tau)$ for all internal subtrees $\tau$. Together with the time needed for \Cref{Item:CutQuery1}, this determines the overall running time of \Cref{Item:SteinerCutQuery1}. 
\begin{itemize}
\item Computing $\lambda(\tau)$ for all internal subtrees can be done in $O(k^{2})$ or even $O(k\log k)$ time using \Cref{Item:CutOracle5} and the level-ancestor data structure \Cref{Item:CutOracle2}. It is a simple exercise and we omit the proof here.
\item To compute $\kappa(\tau)$, recall our discussion from computing \Cref{Item:CutQuery2}: the hanging subtrees are exactly those subtrees rooted at non-bad $F$-children. Hence, the high-level idea is still computing 
\[
\kappa'(\tau):=\sum_{v\in\tau}\sum_{\substack{\text{$F$-children $v'$}\\\text{s.t. }v\in {\cal L}(v')}}\alpha(T[v']),
\]
and then deducting the contributions from bad $F$-children. To compute $\kappa'(\tau)$, the standard way is to exploit the interval-representation of $\tau$ on the DFS order, and the contribution of an interval can be easily obtained by doing binary search on the ${\cal R}$-lists of $F$-vertices. In summary, the computational time of all $\kappa(\tau)$ is $O(k^{2}\log n)$, where the $\log n$ factor comes from the binary search.
\end{itemize}
Therefore, the total computational time for \Cref{Item:SteinerCutQuery1} is $\sfU + |F|\cdot Q + O(k^{2}\log n)$.

\bibliographystyle{alpha}
\bibliography{ref}

\end{document}